\DeclareMathOperator{\col}{col}
\DeclareMathOperator{\MOV}{MoV}
\DeclareMathOperator{\MaM}{MmM}
\newcommand{\smin}{p}
\newcommand{\smax}{q}
\newcommand{\oc}{\mathsf{Count}}
\newcommand{\ps}{+}
\newcommand{\ngt}{-}
\newcommand{\problem}[3]{
	\begin{center}
		\fbox{~\begin{minipage}{.97\textwidth}
			\vspace{2pt}
			\noindent
			\normalsize\textsc{#1}

			\vspace{4pt}
			\setlength{\tabcolsep}{3pt}
			\renewcommand{\arraystretch}{1.0}
			\begin{tabularx}{\textwidth}{@{}lX@{}}
				\normalsize\textbf{Input:}	& \normalsize#2 \\
				\normalsize\textbf{Question:}		 & \normalsize#3
			\end{tabularx}
		\end{minipage}}
	\end{center}
}
\providecommand*{\cupdot}{%
	\mathbin{%
		\mathpalette\@cupdot{}%
	}%
}
\newcommand*{\@cupdot}[2]{%
	\ooalign{%
		$\m@th#1\cup$\cr
		\hidewidth$\m@th#1\cdot$\hidewidth
	}%
}
\DeclareMathOperator*{\argmax}{arg\,max}
\newcommand{\lv}[1]{}
\newcommand{\appendixText}{}
\newcommand{\toappendix}[1]{\gappto{\appendixText}{{#1}}}
\newcommand{\appmark}{$\star$}
\tikzstyle{vertex}=[draw, circle, fill, inner sep = 2.4pt]
\tikzstyle{square}=[draw, fill, inner sep = 3.6pt]
\tikzstyle{star4}=[draw, star, star points=4, fill, inner sep = 2.4pt]
\tikzstyle{triangle}=[draw, regular polygon, regular polygon sides=3, fill, inner sep = 2pt]
\tikzstyle{itriangle}=[draw, regular polygon, regular polygon sides=3,
\author{Niclas Boehmer}{Technische Universität Berlin, Algorithmics and Computational Complexity, Germany}{niclas.boehmer@tu-berlin.de}{https://orcid.org/0000-0001-5102-449X}{Supported by the Deutsche Forschungsgemeinschaft (DFG) projects MaMu (NI 369/19) and ComSoc-MPMS (NI 369/22)}
\author{Tomohiro Koana}{Technische Universität Berlin, Algorithmics and Computational Complexity, Germany}{tomohiro.koana@tu-berlin.de}{https://orcid.org/0000-0002-8684-0611}{Supported by the Deutsche Forschungsgemeinschaft (DFG) project DiPa (NI 369/21).}
\authorrunning{N. Boehmer and T. Koana} 
\keywords{Graph theory, polynomial-time algorithms, NP-hardness, FPT, ILP, color coding, submodular and supermodular functions, algorithmic fairness}
\begin{document}

\title{The Complexity of Finding Fair Many-to-One Matchings}

\maketitle

\begin{abstract}
We analyze the (parameterized) computational complexity of ``fair'' variants of bipartite many-to-one matching, where each vertex from the ``left'' side is matched to exactly one vertex and each vertex from the ``right'' side may be matched to multiple vertices. 
We want to find a ``fair'' matching, in which each vertex from the right side is matched to a ``fair'' set of vertices.
Assuming that each vertex from the left side has one color modeling its ``attribute'', we study two fairness criteria.
For instance, in one of them, we deem a vertex set fair if for any two colors, the difference between the numbers of their occurrences does not exceed a given threshold.
Fairness is, for instance, relevant when finding many-to-one matchings between students and colleges, voters and constituencies, and applicants and firms.
Here colors may model sociodemographic attributes, party memberships, and qualifications, respectively.

We show that finding a fair many-to-one matching is NP-hard even for three colors and maximum degree five.
Our main contribution is the design of fixed-parameter tractable algorithms with respect to the number of vertices on the right side. 
Our algorithms make use of a variety of techniques including color coding.
At the core lie integer linear programs encoding Hall like conditions.
To establish the correctness of our integer programs, we prove a new separation result on (super)modular functions, inspired by Frank's separation theorem [Frank, Discrete Math. 1982], which may also be of independent interest.
We further obtain complete complexity dichotomies regarding the number of colors and the maximum degree of each side.
\end{abstract}
  
\section{Introduction}
A many-to-one matching in a bipartite graph $G=(U\cupdot V,E)$ is an edge subset $M\subseteq E$ such that each vertex in $U$ is incident to at most one edge in $M$.
We study the computational complexity of finding a ``fair'' many-to-one matching and call this problem \textsc{Fair Matching}: 
Given a bipartite graph $G=(U\cupdot V,E)$ in which every vertex in $U$ is colored, it asks for a many-to-one matching $M$ such that for each $v\in V$ the vertices matched to $v$ meet a fairness criterion.
In this work, we require that $M$ is ``left-perfect'', i.e., every vertex in $U$ is incident to exactly one edge in $M$.
Using a slightly different formulation, Stoica et al.~\cite{DBLP:conf/atal/StoicaCDG20} recently studied this problem in terms of a fairness requirement derived from \emph{margin of victory} (\textsc{MoV}).
Generally speaking, the margin of victory of a multiset is defined as the number of occurrences of the most frequently occurring element minus the number of occurrences of the second most frequently occurring element. (Given a set of colored vertices, we obtain a multiset from the occurrences of colors in it.)
By requiring that the margin of victory of a set of colored vertices shall not exceed a given threshold, we prevent one color from becoming a dominating majority
(see Stoica et al.~\cite{DBLP:conf/atal/StoicaCDG20} for a more extensive motivation of this concept).
As an alternative simple fairness measure, we consider \textsc{Max-Min}, which is defined as the difference between the number of occurrences of the most frequently occurring element and the number of occurrences of the least frequently occurring element in a multiset.
In a set of colored vertices with a small value of \textsc{Max-Min} all colors appear more or less equally often.

Which of \textsc{MoV} or \textsc{Max-Min} is more appropriate depends not only on the specific application (as discussed in the next two paragraphs) but also on the underlying data.
Suppose that we have $2n$ red, $2n$ blue, and one green vertex on the left side and two vertices on the right side.
Then, it would be natural to deem a subset consisting of $n$ red, $n$ blue, and one green vertex fair (as it is in some sense the best we can hope for).
Accordingly, \textsc{MoV} seems to be a better fit because the \textsc{MoV} of the described subset is zero whereas the \textsc{Max-Min} value is $n-1$.
In contrast, if there are $2n$ red, $2n$ blue, and $2n$ green vertices, then the same subset with $n$ red, $n$ blue, and one green vertex should be considered as unfair, rendering \textsc{Max-Min} more suitable for this color distribution than \textsc{MoV}.
In general, \textsc{Max-Min} seems to be a natural choice for homogeneous data.
The first example illustrates, however, that in some scenarios, \textsc{MoV} may serve as a viable relaxation of \textsc{Max-Min}.\footnote{Note that the \textsc{Max-Min} value is at least the \textsc{MoV} value for any multiset.}

A notable application of \textsc{Fair Matching} emerges in the context of district-based elections. 
In such elections, voters (modeled by vertices in $U$) are divided into constituencies (modeled by vertices in $V$), and then 
each constituency elects its own representative.
Here, colors can represent various attributes.
For instance, colors may represent political standings. 
A small margin of victory is particularly desirable in this case because it will lead to close elections, holding politicians accountable for their job. 
One could also strive for ``fair'' representation of different ethnic groups or age groups by modeling ethnicity or age with colors.\footnote{Notably in \cite{banerjee2007parochial} and \cite[pp. 251-252]{banerjee2011poor}, Banerjee and Duflo reported that, in particular in developing countries, districts that are dominated by one ethnicity are a serious problem, as candidates belonging to the dominating ethnicity often win independent of their merit. As a result, those candidates are, among others, more likely to be corrupt.}
Other applications include the assignment of school children to schools (where colors may model sociodemographic attributes) or the assignment of reviewers to academic papers (where colors may model the level of expertise or academic background of reviewers).

Similar fairness considerations also arise in modern online systems (see, e.g., \cite{DBLP:journals/corr/abs-2001-09784} for a survey).
For instance, fairness is a pressing issue to counter targeted advertising or to improve recommender systems. 
Here one task is to ensure that the content (each perhaps represented by multiple vertices in $U$) falling into different categories (colors) is assigned to users (vertices in $V$) in a way that each user is presented with a ``diverse'' selection of content.
Lastly, we mention that \textsc{(Max-Min) Fair Matching} has also applications outside of the ``fairness'' context:
Imagine a centralized job market for companies (vertices in $V$) and applicants (vertices in $U$), each having a specific skillset (color). 
Firms may wish to balance between applicants with different skillsets so that they are able to place employees with various skillsets in each team (to be formed).
For instance, it may be desirable for a software company to hire roughly the same number of frontend and backend developers.

\subparagraph*{Our Contributions.}
We perform a refined complexity analysis of the NP-hard \textsc{Max-Min Fair Matching} and \textsc{MoV Fair Matching} problems in terms of the size $k$ of $V$, the number $|C|$ of colors, and the maximum degree $\Delta_U$ and $\Delta_V$ among vertices in $U$ and $V$, respectively.
Our main contribution are arguably involved FPT algorithms for the parameterization $k$ (\Cref{sec:k}).
At the heart of the design of our algorithms lies an integer linear program (ILP) of bounded dimension.
We essentially determine whether Hall-like conditions that guarantee the existence of a fair matching are fulfilled by formulating these in a system of linear inequalities.
In order to establish the correctness of our ILP formulations, we prove what we call \emph{touching separation theorem}, getting inspiration from Frank's separation theorem on submodular and supermodular functions \cite{Fra82}.
For \textsc{MoV Fair Matching}, we apply our approach in conjunction with the color coding technique \cite{DBLP:journals/jacm/AlonYZ95}. 
To familiarize ourselves with the ideas underlying our ILPs, in \Cref{sec:kc}, we start with a warm-up where we present ILP-based fixed-parameter tractable algorithms for the larger parameter $k+|C|$. 
To sum up, as it is straightforward to see that \textsc{Max-Min/Mov Fair Matching} are FPT with respect to the size $n$ of $U$\footnote{We can enumerate all fair subsets of $U$ in $2^n \cdot (n + k)^{O(1)}$ time. Then, Knapsack-like dynamic programming solves \textsc{Fair Matching} in $3^n \cdot (n + k)^{O(1)}$ time.}, we establish the fixed-parameter tractability of both problems for the two natural parameters $n$ and $k$.\footnote{In most described applications $k$ is typically quite small and much smaller than $n$. For instance, Stoica et al.~\cite{DBLP:conf/atal/StoicaCDG20} performed some experiments  for \textsc{MoV Fair Matching} to assign voters to districts with $n=50,000$ and $k=10$, and to assign students to schools with $n=41,834$ and~$k=61$.}
We then in \Cref{sec:dic} study the computational complexity of \textsc{Max-Min/Mov Fair Matching} with respect to $\Delta_U$, $\Delta_V$, and $|C|$.
We show that \textsc{Max-Min/Mov Fair Matching} is polynomial-time solvable for $|C| = 2$ and that it becomes NP-hard for $|C| \ge 3$.
Moreover, we settle all questions concerning the problems' classical complexity in terms of $\Delta_U$ and $\Delta_V$, revealing a complete complexity landscape in this regard (see \Cref{fig:dichotomies}).
Finally, in \Cref{se:cliques}, we show that \textsc{Max-Min/Mov Fair Matching} are linear-time solvable  when every vertex in $U$ can be matched to any vertex in $V$.

Notably, all our algorithmic results hold even if we require that each vertex from $V$ is matched to at least one vertex from $U$.
This further constraint may appear when we need to divide the vertices into exactly $k$ non-empty fair subsets. 
Although this constraint is seemingly simple, sometimes (e.g., in our FPT algorithm for \textsc{Max-Min Fair Matching} for $k$) non-trivial adaptions are needed.

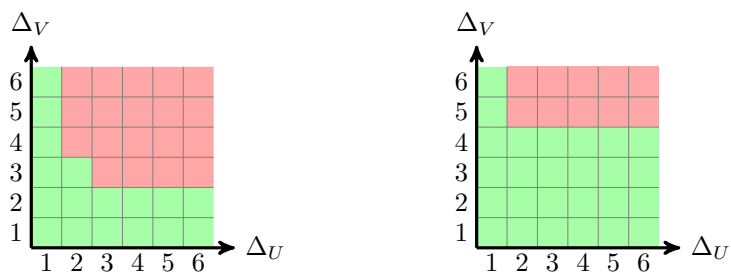
\begin{figure}[t]
	\centering
	\begin{tikzpicture}[scale=0.4]

		\pgfmathsetmacro{\xmax}{6}
		\pgfmathsetmacro{\xmaxd}{\xmax - 1}
		\pgfmathsetmacro{\xmaxdd}{\xmax - 2}
		\pgfmathsetmacro{\xmaxddd}{\xmax - 3}
		\pgfmathsetmacro{\ymax}{6}
		\pgfmathsetmacro{\ymaxd}{\ymax - 1}

		\foreach \i in { 1, ..., \xmax } \node at (\i - .5, -.5) {\i};
		\foreach \i in { 1, ..., \ymax } \node at (-.5, \i - .5) {\i};

		\fill[red!35] (1, 3) -- (2, 3) -- (2, 2) -- (\xmax, 2) -- (\xmax, \ymax) -- (1, \ymax) -- (1, 3);

		\tikzstyle{greenfill}=[green!35]
		\fill[greenfill] (0, 0) -- (0, \ymax) -- (1, \ymax) -- (1, 3) -- (2, 3) -- (2, 2) -- (\xmax, 2) -- (\xmax, 0) -- (0, 0);

		\foreach \i in { 1, ..., \xmax } {
		\draw[help lines] (\i - 1, 0) -- (\i - 1, \ymax);
		}
		\foreach \i in { 1, ..., \ymax } {
		\draw[help lines] (0, \i - 1) -- (\xmax, \i - 1);
		}
		\tikzstyle{axis}=[->, very thick, >=stealth']
		\draw[axis] (0, 0) -- (\xmax + .7, 0) node [right] {$\Delta_U$};
		\draw[axis] (0, 0) -- (0, \ymax + .7) node [above] {$\Delta_V$};
	\end{tikzpicture}
	\hspace{5em}
	\begin{tikzpicture}[scale=0.4]

		\pgfmathsetmacro{\xmax}{6}
		\pgfmathsetmacro{\xmaxd}{\xmax - 1}
		\pgfmathsetmacro{\xmaxdd}{\xmax - 2}
		\pgfmathsetmacro{\xmaxddd}{\xmax - 3}
		\pgfmathsetmacro{\ymax}{6}
		\pgfmathsetmacro{\ymaxd}{\ymax - 1}

		\foreach \i in { 1, ..., \xmax } \node at (\i - .5, -.5) {\i};
		\foreach \i in { 1, ..., \ymax } \node at (-.5, \i - .5) {\i};

		\fill[red!35] (1, 4) -- (1, \ymax) -- (\xmax, \ymax) -- (\xmax, 4) -- (1, 4);

		\tikzstyle{greenfill}=[green!35]
		\fill[greenfill] (0, 0) -- (0, \ymax) -- (1, \ymax) -- (1, 4) -- (\xmax, 4) -- (\xmax, 0) -- (0, 0);

		\foreach \i in { 1, ..., \xmax } {
		\draw[help lines] (\i - 1, 0) -- (\i - 1, \ymax);
		}
		\foreach \i in { 1, ..., \ymax } {
		\draw[help lines] (0, \i - 1) -- (\xmax, \i - 1);
		}
		\tikzstyle{axis}=[->, very thick, >=stealth']
		\draw[axis] (0, 0) -- (\xmax + .7, 0) node [right] {$\Delta_U$};
		\draw[axis] (0, 0) -- (0, \ymax + .7) node [above] {$\Delta_V$};
	\end{tikzpicture}
	\caption{The complexity landscape of \textsc{Max-Min Fair Matching} (left) and \textsc{MoV Fair Matching} (right) for the maximum degree $\Delta_U$ (resp., $\Delta_V$) in $U$ (resp., $V$). Green denotes polynomial-time solvability and red denotes NP-hardness. All NP-hardness results hold already for only three~colors.}\label{fig:dichotomies}
\end{figure}

\subparagraph*{Related Work.}
	Stoica et al.~\cite{DBLP:conf/atal/StoicaCDG20} introduced three problems where the task is to partition a set of colored vertices into subsets with a small margin of victory satisfying some global size constraints. 
	Among these three, the most general is \textsc{Fair Connected Regrouping}, where one is given a vertex-colored graph $G$, an integer $k$, and a function that determines for each vertex which subsets it can be part of. 
	The task is then to find a partitioning of $G$ into $k$ fair districts (i.e., connected components).
	In a follow-up work, Boehmer et al.~\cite{DBLP:journals/corr/abs-2102-11864} analyzed how the structure of $G$ influences the (parameterized) computational complexity of \textsc{Fair Connected Regrouping}.
	The other two problems Stoica et al.~\cite{DBLP:conf/atal/StoicaCDG20} considered are special cases of \textsc{Fair Connected Regrouping}:
	One is \textsc{Fair Regrouping} where the connectivity constraints are dropped (corresponding to \textsc{MoV Fair Matching}).
	The other is \textsc{Fair Regrouping\_X}, which is a special case of \textsc{Fair Regrouping}, where any vertex can belong to any district (corresponding to \textsc{MoV Fair Matching} on complete bipartite graphs; we study this special case in \Cref{se:cliques}). 
	They proved that \textsc{Fair Regrouping} is NP-hard for three colors (without any constraints on the degree of the graph) and is XP with respect to the number $k$ of districts (i.e., polynomial-time solvable for constant $k$).
	They also showed that \textsc{Fair Regrouping\_X} is XP with respect to the number of colors.

	Coming back to \textsc{Fair Matching} as a matching problem, Ahmed et al.~\cite{DBLP:conf/ijcai/AhmedDF17} proposed a global supermodular objective to model the fairness (which they call diversity)
	of a bipartite weighted many-to-many matching and developed a polynomial-time greedy heuristic for it.
	Ahmadi et al.~\cite{DBLP:conf/ijcai/AhmadiADFK20} extended the work of Ahmed et al.~\cite{DBLP:conf/ijcai/AhmedDF17} by generalizing the problem to the case where vertices can have multiple different colors and presented a pseudo-polynomial-time algorithm for it. 
	Moreover, Dickerson et al.~\cite{DBLP:conf/aaai/DickersonSSX19} applied this formulation of fairness to an online setting where vertices from the left side arrive over time and Ahmed et al.~\cite{ahmed2020forming} applied it to the task of forming~teams.

	Fairness is also a popular topic when finding a stable many-to-one matching of vertices that have preferences over each other. Here, fairness constraints are typically modeled by imposing for each vertex from the right side certain lower and upper bounds on the number of vertices of each color that can be matched to it~\cite{DBLP:conf/atal/0001GSW19,bo2016fair,DBLP:conf/ijcai/ChenGH20,hafalir2013effective,DBLP:conf/soda/Huang10}.  
 
	More broadly speaking, fairness has recently also been frequently applied to a variety of different problems from the area of combinatorial optimization. 
	For instance, in the context of the \textsc{Knapsack} \cite{DBLP:conf/atal/Patel0L21}, \textsc{Maximum Coverage} \cite{DBLP:journals/corr/abs-2007-08069} or \textsc{$s$-$t$ Path} \cite{bentert2022finding} problem, fairness means that all types are represented equally in the selected solution. 
	Another example is \textsc{Clustering}, where each cluster is considered fair when each type accounts for a certain fraction of vertices in it \cite{DBLP:conf/aistats/AhmadianE0M20,DBLP:conf/wads/FriggstadM21a,froese2021modification}.
 
The proof (or their completion) of all results marked by (\appmark) can be found in the appendix. 

\section{Preliminaries}
For two integers $i<j\in \mathbb{N}$, let $[i,j]=\{i,i+1,\dots,j-1,j\}$ and let $[i] = [1,i]$.
For a set $S$ and an element $x \in S$, we sometimes write $S - x$ to denote $S \setminus \{ x \}$.

Let $G = (U \cupdot V, E)$ be a bipartite graph, where $U$ is the \emph{left side} and $V$ is the \emph{right side} of $G$.
Let $n := |U|$ and $k := |V|$ be the number of vertices in the left side and right side, respectively.
For a vertex $w \in U \cupdot V$ and an edge set $M \subseteq E$, let $M(w)$ be the set of vertices \emph{matched to} $w$ in $M$, i.e., $M(w) = \{ w' \in U \cupdot V \mid \{ w, w' \} \in M \}$.
We say that $M \subseteq E$ is a \emph{many-to-one matching} in $G$ if $|M(u)| \le 1$ for every $u \in U$.
A many-to-one matching $M$ is \emph{left-perfect} if $|M(u)| = 1$ for every $u \in U$.
Note that we require $M$ to be left-perfect as otherwise an empty set would constitute a trivial solution for our problem.

When clear from context, we refer to a left-perfect many-to-one matching as a matching.
For a vertex $w \in U \cupdot V$, let $N_G(w)$ be the set of its neighbors in $G$, i.e., $N_G(w) = \{ w' \in U \cupdot V \mid \{ w, w' \} \in E \}$.
For $W \subseteq U \cupdot V$, let $N_G(W) = \bigcup_{w \in W} N_G(w)$ be the joint neighborhood of vertices from $W$ and let $\nu_G(W) = \{ w' \in U \cupdot V \mid N_G(w') \subseteq W \}$ be the set of vertices which are only adjacent to vertices in $W$.
We drop the subscript $\cdot_G$ when it is clear from context.

Let $C$ be the set of colors and let $\col \colon U \to C$ be a function that assigns a color to every vertex of $U$.
For $U' \subseteq U$, let $U_c' \subseteq U'$ be the set of vertices $u \in U'$ of color $c$.
For instance, given a matching $M$ and a vertex $v \in V$, $M(v)_c$ denotes the set of vertices matched to $v$ in $M$ that have color $c$.
We denote by $G_c = G[U_c \cup V]$ the graph $G$ restricted to vertices from $U_c\cup V$.
We use the shorthand $N_c(W)$ (resp., $\nu_c(W)$) for $N_{G_c}(W)$ (resp., $\nu_{G_c}(W)$).

Throughout the paper, we assume that the set $C$ of colors is equipped with some linear order $\le_C$, which serves as a tie breaker.
So $\arg \max$ over $C$ is well-defined.
We write $\max^1$ for $\max$ and $\max^2$ for the second largest element.

We now define our two fairness measures. 
For a subset of vertices $U'\subseteq U$, let $\MOV(U'):=\max^1_{c\in C} |U'_c|-\max^2_{c\in C} |U'_c|$ be the difference between the number of occurrences of the most and second most frequent color in $U'$.
Similarly, for a subset of vertices $U'\subseteq U$, let $\MaM(U'):=\max_{c\in C} |U'_c|-\min_{c\in C} |U'_c|$ be the difference between the number of occurrences of the most and least frequent color in $U'$.
A subset of vertices $U'\subseteq U$ is $\ell$-fair according to $\textsc{MoV}$ (resp., \textsc{Max-Min}) if $\MOV(U')\leq \ell$ (resp., $\MaM(U')\leq \ell$).\footnote{Notably, the definition of margin of victory of Stoica et al.~\cite{DBLP:conf/atal/StoicaCDG20} differs slightly from ours in that in their definition sets of vertices where the two most frequent colors have the same number of occurrences have a margin of victory of one (and not of zero). We chose our definition in accordance with Boehmer et al.~\cite{DBLP:journals/corr/abs-2102-11864} to be able to distinguish a tie between two colors from one color being one vertex ahead of another.} 
A many-to-one matching $M$ in $G$ is $\ell$-fair according to \textsc{MoV} (resp., \textsc{Max-Min}) if $M(v)$ is $\ell$-fair according to \textsc{MoV} (resp., \textsc{Max-Min}) for all $v\in V$. 
The considered fairness notion will always be clear from context.
We now define our central problem \textsc{$\Pi$ Fair Matching} for some fairness measure $\Pi$:
\problem{$\Pi$ Fair Matching}
{A bipartite graph $G = (U \cupdot V, E)$, a set $C$ of colors, a function $\col \colon U \to C$, and an integer $\ell \in \mathbb{N}$.}
{Is there a left-perfect many-to-one matching $M \subseteq E$ which is $\ell$-fair according to the fairness measure~$\Pi$?} 
We also sometimes consider \textsc{$\Pi$ Fair Matching} with size constraints where additionally given two integers $\smin$ and $\smax$, we require that the matching $M$ to be found satisfies $\smin \leq |M(v)|\leq \smax$ for all $v\in V$. 
We refer to the case with $\smin=1$ and $\smax=n$ as the non-emptiness constraint.
The non-emptiness constraint is arguably crucial for some applications when we want to partition the vertices in the left side into exactly $k$ non-empty subsets.

Let $\mathcal{I}$ be an instance of some problem and let $\mathcal{P}(\mathcal{I})$ be an integer linear program (ILP) constructed from $\mathcal{I}$.
We say that $\mathcal{P}$ is \emph{complete} if $\mathcal{P}(\mathcal{I})$ is feasible whenever $\mathcal{I}$ is a yes-instance.
Conversely, we say that $\mathcal{P}$ is \emph{sound} if $\mathcal{I}$ is a yes-instance whenever $\mathcal{P}(\mathcal{I})$ is feasible.
In this work, we will make use of Lenstra's algorithm \cite{Kan87,Len83} that decides whether an ILP of size $L$ with $p$ variables is feasible using $O(p^{2.5p + o(p)} \cdot |L|)$ arithmetic operations.

We assume that the reader is familiar with basic concepts in parameterized complexity (see for instance \cite{DBLP:books/sp/CyganFKLMPPS15}).
As a reminder, an FPT algorithm for a parameter $k$ is an algorithm whose running time on input $\mathcal{I}$ is $f(k) \cdot |\mathcal{I}|^{O(1)}$ for some computable function $f$.

\section{Warmup: FPT Algorithms for $k + |C|$}
\label{sec:kc}
We prove that both \textsc{Fair Matching} problems are fixed-parameter tractable with respect to $k+|C|$:
We present an integer linear programming (ILP) formulation of these problems whose number of variables is bounded in a function of the parameter $k+|C|$ and subsequently employ Lenstra's algorithm \cite{Kan87,Len83}.
Notably, one can upper-bound the number of ``types'' (according to their neighborhoods and colors) of vertices in $U$ by $2^k \cdot |C|$. 
Using this observation, it is straightforward to give an ILP formulation of \textsc{Max-Min/MoV Fair Matching} using $O(2^k\cdot |C|)$ variables. 
Instead, we follow a theoretically more involved but more efficient approach. 
For this, we use a structural property of our problem related to Hall's theorem, which decreases the number of variables in our ILP to $O(|C| \cdot k)$. 
We reuse some of the results from this section in \Cref{sec:k}, where we prove that \textsc{Max-Min/MoV Fair Matching} are actually fixed-parameter tractable with respect to $k$.  

To prove that the ILP we present in the following is complete, we use the following: 

\begin{restatable}{lemma}{kfor}
\label{lemma:k:forward}
	Let $G = (U \cupdot V, E)$ be a bipartite graph and let $M$ be a left-perfect many-to-one  matching.
	Then, $|\nu(W)| \le \sum_{v \in W} |M(v)| \le |N(W)|$ for every $W \subseteq V$.
\end{restatable}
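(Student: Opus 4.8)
The plan is to interpret all three quantities as counts of vertices in $U$ and then relate them by two elementary inclusions. Since $M$ is left-perfect, $|M(u)| = 1$ for every $u \in U$, so $M$ induces a \emph{total} function $\phi \colon U \to V$ sending each left vertex to its unique match; moreover $\phi(u) \in N(u)$ because $M \subseteq E$. The first step is to rewrite the middle term by grouping the left vertices according to their match:
\[
	\sum_{v \in W} |M(v)| = \sum_{v \in W} |\phi^{-1}(v)| = |\phi^{-1}(W)|,
\]
the number of left vertices whose match lies in $W$. Both inequalities then reduce to sandwiching $\phi^{-1}(W)$ between $\nu(W)$ and $N(W)$.

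For the upper bound I would show $\phi^{-1}(W) \subseteq N(W)$: if $\phi(u) \in W$, then $u$ is adjacent to $\phi(u) \in W$, whence $u \in N(W)$. Taking cardinalities yields $\sum_{v \in W}|M(v)| = |\phi^{-1}(W)| \le |N(W)|$.

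For the lower bound I would show $\nu(W) \subseteq \phi^{-1}(W)$. First note that, as $G$ is bipartite and $W \subseteq V$, every vertex of $\nu(W)$ lies in $U$: a vertex of $V$ has all its neighbors in $U$, which is disjoint from $W$, so it cannot satisfy $N(\cdot) \subseteq W$ unless it is isolated (a case we may assume away, since one may assume without loss of generality that $G$ has no isolated vertices — isolated left vertices preclude a left-perfect matching, and isolated right vertices are matched to the empty set and play no role). Now take $u \in \nu(W)$, i.e.\ $N(u) \subseteq W$. Since $\phi(u) \in N(u)$, we get $\phi(u) \in W$, hence $u \in \phi^{-1}(W)$. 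Taking cardinalities gives $|\nu(W)| \le |\phi^{-1}(W)| = \sum_{v \in W}|M(v)|$, and combining the two bounds completes the proof.

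The argument is essentially bookkeeping, so I do not expect a genuine obstacle; the only points needing care are (i) using left-perfectness to guarantee that $\phi$ is total, so that the regrouping of the sum into $|\phi^{-1}(W)|$ is exact, and (ii) invoking the bipartite structure to confirm that $\nu(W)$ contributes no vertices outside $U$, so that the inclusion $\nu(W) \subseteq \phi^{-1}(W)$ genuinely controls $|\nu(W)|$.
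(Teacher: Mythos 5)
Your proof is correct and takes essentially the same route as the paper's: both arguments sandwich the set of left vertices matched into $W$ between $\nu(W)$ and $N(W)$, using left-perfectness (each $u$ with $N(u) \subseteq W$ must be matched to some $v \in W$) for the lower bound and $M \subseteq E$ for the upper bound. Your reformulation via $\phi^{-1}(W)$ simply makes explicit the disjointness of the sets $M(v)$, and your aside on isolated vertices spells out the convention (that $\nu(W)$ effectively counts only vertices of $U$) which the paper's proof uses implicitly.
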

\begin{proof}
	For every vertex $u \in \nu(W)$, we have $N(u) \subseteq W$ by definition.
	Since $u$ is matched to some vertex $v \in W$ in $M$ (that is, $u \in M(v)$), the first inequality holds.
	For the second inequality, observe that every vertex $u \in M(v)$ for $v \in W$ satisfies $u \in N(W)$.
\end{proof}

For proving that our ILP is sound, we use the following slightly more complicated Hall-like lemma:

\begin{restatable}{lemma}{kback}
	\label{lemma:k:backward}
	Let $G = (U \cupdot V, E)$ be a bipartite graph and let $\{ z_v \in \mathbb{N} \mid v \in V\}$ be a set of integers.
	Suppose that $\sum_{v \in W} z_v \ge |\nu_G(W)|$ for every $W \subseteq V$ and that $\sum_{v \in V} z_v = |U|$.
	Then, there is a left-perfect many-to-one matching $M$ such that $|M(v)| = z_v$ for every $v \in V$.
\end{restatable}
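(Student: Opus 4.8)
The plan is to read the claim as the exact converse of a Hall-type feasibility statement for a degree-constrained subgraph and to reduce it to the classical marriage theorem (equivalently, to integral max-flow--min-cut). Since $\sum_{v \in V} z_v = |U|$, the object we seek is precisely a subgraph of $G$ in which every $u \in U$ has degree $1$ and every $v \in V$ has degree exactly $z_v$: such a subgraph is a left-perfect many-to-one matching with $|M(v)| = z_v$ for all $v$. To make the $z_v$ combinatorial, I would first split each $v \in V$ into $z_v$ identical copies, each adjacent to the same neighbors $N_G(v)$, obtaining a bipartite graph $H$ with right side $V'$. Because $|V'| = \sum_{v \in V} z_v = |U|$, any matching of $H$ saturating $U$ automatically saturates $V'$ and is therefore perfect, and a perfect matching of $H$ is exactly the desired $M$, with $|M(v)|$ equal to the number of matched copies of $v$, namely $z_v$.

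Next I would invoke Hall's theorem on the $V'$-side: $H$ has a perfect matching iff $|N_H(Y)| \ge |Y|$ for every $Y \subseteq V'$. A short argument restricts the binding cases to those $Y$ consisting of \emph{all} copies of the vertices in some $W \subseteq V$, since adding further copies of an already-chosen vertex only increases $|Y|$ while leaving $N_H(Y)$ unchanged. For such $Y$ we have $|Y| = \sum_{v \in W} z_v$ and $N_H(Y) = N_G(W)$, so Hall's condition reduces to $|N_G(W)| \ge \sum_{v \in W} z_v$ for every $W \subseteq V$. The main obstacle, and the real content of the lemma, is matching this $N$-based condition to the $\nu$-based hypothesis. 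The bridge is the complementation identity $|\nu_G(W)| = |U| - |N_G(V \setminus W)|$, which holds because a vertex $u \in U$ lies in $\nu_G(W)$ exactly when $N_G(u) \subseteq W$, i.e.\ when $u \notin N_G(V \setminus W)$. Substituting $W' = V \setminus W$ and using $\sum_{v \in V} z_v = |U|$ turns the hypothesis $\sum_{v \in W} z_v \ge |\nu_G(W)|$ into $|N_G(W')| \ge \sum_{v \in W'} z_v$, which is precisely the Hall condition derived above. Hence the hypothesis guarantees a perfect matching of $H$, and with it the required $M$.

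I would note that this step is where one must be slightly careful about the phrasing of $\nu_G$ on isolated vertices of $V$ (for which $z_v = 0$ is forced), so that the $\nu$/$N$ identity stays clean. As a cross-check, and perhaps a cleaner route to present, one can run the same argument directly through integral max-flow--min-cut on the network with arcs $s \to u$ of capacity $1$, arcs $u \to v$ of capacity $\infty$ for $\{u,v\} \in E$, and arcs $v \to t$ of capacity $z_v$: the finite $s$--$t$ cuts correspond exactly to choices $V_0 \subseteq V$ with $U_0 = \nu_G(V_0)$ and value $|U| - |\nu_G(V_0)| + \sum_{v \in V_0} z_v$, so ``min-cut $\ge |U|$'' is literally the hypothesis, and integrality of the resulting maximum flow yields $M$ with no complementation needed.
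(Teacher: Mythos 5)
Your proof is correct, and it rests on the same core device as the paper's: split each $v \in V$ into $z_v$ identical copies to form an auxiliary bipartite graph (the paper calls it $G'$, you call it $H$) and invoke Hall's theorem, with the condition $\sum_{v \in V} z_v = |U|$ upgrading a saturating matching to a perfect one. The genuine difference is the side on which Hall's condition is verified, and hence how the $\nu$-hypothesis is consumed. The paper checks Hall on the $U$-side: for $U' \subseteq U$ it sets $W' = N_G(U')$ and uses the containment $U' \subseteq \nu_G(N_G(U'))$ to get $|N_{G'}(U')| = \sum_{v \in W'} z_v \ge |\nu_G(W')| \ge |U'|$, so the hypothesis is applied directly and the sum condition enters only at the very end. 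You instead check Hall on the copy side $V'$, reduce to sets consisting of all copies of some $W' \subseteq V$, and translate the $\nu$-hypothesis into the $N$-form condition $|N_G(W')| \ge \sum_{v \in W'} z_v$ via the complementation identity $\nu_G(W) = U \setminus N_G(V \setminus W)$ combined with the sum condition. Both translations are sound: your identity holds without exception (and your side remark is accurate --- for an isolated $v \in V$ the hypotheses applied to $V \setminus \{v\}$ force $z_v = 0$, so nothing breaks). The paper's direction is marginally shorter since it needs no complementation, while yours makes explicit the duality between the $|\nu_G(\cdot)|$ lower bounds and the classical Hall condition, which is conceptually clarifying. Your max-flow alternative is also correct --- the finite $s$--$t$ cuts are parametrized by $V_0 \subseteq V$ with $U$-part $\nu_G(V_0)$ and value $|U| - |\nu_G(V_0)| + \sum_{v \in V_0} z_v$, so ``min-cut $\ge |U|$'' is literally the hypothesis, and integrality plus $\sum_v z_v = |U|$ forces every $v$ to receive exactly $z_v$ --- and it would serve equally well as a self-contained proof.
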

	\begin{proof}
    Assume that the conditions stated in the lemma hold.
	We prove the existence of such a matching $M$ by making use of Hall's theorem \cite{hall1987representatives}.
	To do so, we introduce an auxiliary bipartite graph $G'$ as follows:
	In $G'$, the vertices on one of the two sides are the vertices from $U$.
	The vertices on the other side are $V' := \bigcup_{v \in V} Z_v$, where $Z_v$ is a set of $z_v$ vertices.
	There is an edge between $u \in U$ and $v' \in Z_v \subseteq V'$ if and only if $\{ u, v \} \in E$.
	In order to apply Hall's theorem, we show that $|U'| \le |N_{G'}(U')|$ for every $U' \subseteq U$.

	Fix some $U'\subseteq U$ and let $W' = N_G(U')$.
	The construction of $G'$ gives us $|N_{G'}(U')| = \sum_{v \in W'} z_v$.
	By the assumption of the lemma, we have $\sum_{v \in W'} z_v \ge |\nu_G(W')|$.
	Moreover, we have $\nu_G(W') = \nu_G(N_G(U')) \supseteq U'$.
	Putting this together, we obtain $|N_{G'}(U')| = \sum_{v \in W'} z_v \ge |\nu_G(W')| \ge |U'|$.
	Hall's theorem then implies that $G'$ admits a one-to-one matching $M'$ which matches all vertices from $U$.
	In fact, $M'$ is a perfect matching since $|V'| = \sum_{v \in V} z_v = |U|$ by our assumption.
	Now consider the matching $M$ in $G$ where a vertex $u \in U$ is matched to $v\in V$ if  $u$ is matched to a vertex from $Z_v$ in $M'$.
	Then, $M$ is a left-perfect many-to-one matching with $|M(v)| = z_v$ for every~$v \in V$.
\end{proof}

Using \Cref{lemma:k:forward,lemma:k:backward}, we  give an ILP formulation of \textsc{Fair Matching} with $O(|C|\cdot k)$~variables. 

\subparagraph*{ILP formulation.}

Introduce a variable $z_v^c \in \mathbb{N}$ for every $v \in V$ and every $c \in C$.
The variable $z_v^c$ represents the number of vertices in $U$ of color $c$ that are matched to $v$.
Suppose that the given instance admits a left-perfect $\ell$-fair many-to-one matching $M$ respecting the values of $z_v^c$.
Then, for each $c\in C$, there is a matching $M_c$ in $G_c$ with $|M_c(v)| = z_v^c$ for $v\in V$ and $|M_c(u)|=1$ for $u\in U_c$.
As shown in \Cref{lemma:k:forward}, from this one can conclude that the following constraints must be fulfilled:
\begin{align}
 |\nu_c(W)| \le \sum_{v \in W} z_v^c \le |N_c(W)| \text{ for all } W \subseteq V, c \in C. \label{eq:ILPkC:c1}
\end{align}
Next, we encode the fairness requirement.
For \textsc{Max-Min}, we need to have that for every pair of colors the number of vertices of these two colors assigned to some vertex $v\in V$ differ by at most $\ell$.
Thus, we add the constraint 
\begin{align*}
 z_v^{c'} - z_v^c \le \ell \text{ for all } v \in V  \text{ and } c, c' \in C.
\end{align*}
To model $\ell$-fairness for \textsc{MoV Fair Matching}, we introduce two new binary variables $a_v^c, b_v^c \in \{ 0, 1 \}$ for all $v\in V$ and $c\in C$.
Informally speaking, the intended meaning of these variables is that $a_v^c = 1$ (resp., $b_v^c = 1$) when $c$ is the most (resp., second most) frequent color among vertices matched to $v$.
We ensure that the values of these variables are set accordingly as follows:
{\small
\begin{align*}
	&z_v^c - z_v^{c'} \ge n(a_v^c + b_v^{c'} - 2), \quad
	z_v^c - z_v^{c'} \ge n(b_v^{c} - a_v^{c'} - 1), \text{ and }
	a_v^c + b_v^c \le 1
	\quad \forall v \in V, c, c' \in C; \\
	&\sum_{c \in C} a_v^c = \sum_{c \in C} b_v^c = 1 \quad \forall v \in V.
\end{align*}}
For the first constraint, note that it becomes $z_v^c \ge z_v^{c'}$ if $a_v^c = b_v^{c'} = 1$ and that it is always fulfilled otherwise.
Similarly for the second constraint, observe that it becomes $z_v^c \ge z_v^{c'}$ if $a_v^{c'} = 0$ and $b_v^c = 1$ and that it is always fulfilled otherwise.

Finally, using $a_v^c$ and $b_v^c$ (and their meaning as proven above), we add the following constraint that encodes the $\ell$-fairness in terms of margin of victory of $M(v)$ for all $v\in V$:
\begin{align*}
	z_v^c - z_v^{c'} - n(2 - a_v^c - b_v^{c'}) \le \ell \qquad \forall v\in V.
\end{align*}

Lastly, we can also add linear constraints ensuring that the number of vertices from $U$ matched to each vertex $v\in V$ is between $\smin$ and $\smax$: $\smin \leq \sum_{c\in C} z_v^c \leq \smax$ for all $v \in V.$

\begin{theorem} \label{th:fptkc}
	\textsc{Max-Min/MoV Fair Matching} with arbitrary size constraints can be solved in $O^{\star}((|C| \cdot k)^{O(|C| \cdot k)})$ time.
\end{theorem}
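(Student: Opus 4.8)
The plan is to establish that the ILP displayed above is both complete and sound, to observe that it uses only $O(|C| \cdot k)$ variables, and then to invoke Lenstra's algorithm. I would first count variables: we have one $z_v^c$ for each $v \in V$ and $c \in C$, and for \textsc{MoV Fair Matching} additionally the binary $a_v^c, b_v^c$; in both cases the number of variables is $p = O(|C| \cdot k)$. The number of constraints is larger (see below), but that only enters the running time through the encoding size.

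For completeness, I would start from a witnessing $\ell$-fair left-perfect matching $M$ and set $z_v^c := |M(v)_c|$. For each color $c$, the restriction of $M$ to $G_c$ is a left-perfect many-to-one matching that saturates $U_c$, so \Cref{lemma:k:forward} applied inside $G_c$ yields exactly $|\nu_c(W)| \le \sum_{v \in W} z_v^c \le |N_c(W)|$, i.e.\ constraint \eqref{eq:ILPkC:c1}. The \textsc{Max-Min} inequalities and the size constraints hold by definition of $z_v^c$. For \textsc{MoV} I would set $a_v^c = 1$ (resp.\ $b_v^c = 1$) for the most (resp.\ second most) frequent color at $v$ with respect to the tie-breaker $\le_C$; a short case check confirms that the defining inequalities for $a_v^c, b_v^c$ are satisfied and that the fairness inequality becomes $\MOV(M(v)) \le \ell$ for the designated pair and is slack otherwise.

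For soundness, I would read off the $z_v^c$ from a feasible point. For each $c$, the lower bound in \eqref{eq:ILPkC:c1} gives $\sum_{v \in W} z_v^c \ge |\nu_c(W)|$ for all $W \subseteq V$, and the instance $W = V$ forces $\sum_{v \in V} z_v^c = |U_c|$, since in $G_c$ we have $U_c \subseteq \nu_c(V)$ and $N_c(V) \subseteq U_c$, so $|U_c| \le |\nu_c(V)| \le \sum_v z_v^c \le |N_c(V)| \le |U_c|$. Hence \Cref{lemma:k:backward}, applied to $G_c$ with left side $U_c$, yields a left-perfect matching $M_c$ of $U_c$ with $|M_c(v)| = z_v^c$; their union $M$ is a left-perfect many-to-one matching of $G$ with $|M(v)_c| = z_v^c$. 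The \textsc{Max-Min} constraint then directly bounds $\MaM(M(v))$, and the size constraints transfer verbatim. For \textsc{MoV} the one genuine verification is that the gadget constraints force the color $c^\ast$ with $a_v^{c^\ast} = 1$ to attain $\max_c z_v^c$ and the color $c^{\ast\ast}$ with $b_v^{c^{\ast\ast}} = 1$ to attain $\max^2_c z_v^c$: the first family gives $z_v^{c^\ast} \ge z_v^{c^{\ast\ast}}$, the second gives $z_v^{c^{\ast\ast}} \ge z_v^{c'}$ for every $c' \ne c^\ast$, so the fairness inequality reads $\MOV(M(v)) = z_v^{c^\ast} - z_v^{c^{\ast\ast}} \le \ell$.

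Finally I would bound the running time. With $p = O(|C| \cdot k)$, Lenstra's algorithm runs in $O(p^{2.5p + o(p)} \cdot |L|) = (|C| \cdot k)^{O(|C| \cdot k)} \cdot |L|$. The step I would be most careful about is the encoding size: family \eqref{eq:ILPkC:c1} ranges over all $W \subseteq V$, contributing $2^k \cdot |C|$ inequalities, whose coefficients $|\nu_c(W)|$ and $|N_c(W)|$ are each computable in polynomial time. Thus $|L| = 2^{O(k)} \cdot (n+k)^{O(1)}$, which is absorbed into $(|C| \cdot k)^{O(|C| \cdot k)}$, and the polynomial input factors are hidden by $O^{\star}$, giving the claimed $O^{\star}((|C| \cdot k)^{O(|C| \cdot k)})$ bound. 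The exponential constraint count is thus not an obstacle to feasibility, only to the encoding size, and it is comfortably dominated by the variable-count blow-up in Lenstra's bound.
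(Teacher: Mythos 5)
Your proposal is correct and takes essentially the same route as the paper's proof: completeness via \Cref{lemma:k:forward} applied in each $G_c$, soundness by extracting $\sum_{v \in V} z_v^c = |U_c|$ from the $W = V$ instance of Constraints~(\ref{eq:ILPkC:c1}) and then invoking \Cref{lemma:k:backward} color by color, and finally Lenstra's algorithm with $O(|C| \cdot k)$ variables. The extra details you supply (the verification that the $a_v^c, b_v^c$ gadget forces the designated colors to attain $\max^1$ and $\max^2$, and the observation that the $2^{O(k)}$ encoding size is absorbed into the bound) are left implicit in the paper but are consistent with its argument.
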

\begin{proof}
	We show that an instance $(G=(U\cupdot V,E),C,\col,\ell,\smin,\smax)$ of \textsc{Max-Min/MoV Fair Matching} with size constraints admits an $\ell$-fair left-perfect many-to-one matching if and only if the constructed ILP constructed is feasible.
	As described above, by \Cref{lemma:k:forward}, if the given instance is a yes-instance, there is an assignment to $z_v^c$ satisfying all the constraints.
	Conversely, suppose that the ILP admits a solution $\{ z_v^c \mid v \in V, c \in C \}$.
	Then, from Constraints \ref{eq:ILPkC:c1}, we have for every $c \in C$ that $\sum_{v \in W} z_v^c \ge |\nu_{c}(W)|$ for every $W \subseteq V$ and $\sum_{v \in V} z_v^c = |U_c|$ (the later part follows from our first set of constraints for $W=V$).
	By \Cref{lemma:k:backward}, it follows that, for each color $c\in C$, $G_c$ has a matching $M_c$ in which every vertex in $U_c$ is matched and the values of $z_v^c$ for $v \in V$, are respected.
	Aggregating $M_c$ for all colors $c$ yields a left-perfect $\ell$-fair many-to-one matching for $G$ respecting the given size constraints.
	Using Lenstra's algorithm~\cite{Kan87,Len83}, the feasibility of the ILP can be determined in the claimed~time.
\end{proof}

\section{FPT Algorithms for $k$}
\label{sec:k}
In this section, we develop FPT algorithms for \textsc{Max-Min/MoV Fair Matching} for the parameterization $|V|=k$. We start with a discussion of the challenges for our algorithms. Afterwards, we obtain a new result on submodular and supermodular functions. Using this, in \Cref{sec:k:mmm} we present the algorithm for \textsc{Max-Min Fair Matching}, and in \Cref{sec:k:mov} the slightly more involved algorithm for \textsc{MoV Fair Matching}.

The crux of our algorithms is an ILP as in \Cref{sec:kc}.
However, since we look into the parameterization without $|C|$, it would be too costly to introduce variables for each color.
To illustrate our idea to work around this issue, take \textsc{Max-Min Fair Matching} as an example.
One of the straightforward ideas how to formulate this problem as an ILP with $\mathcal{O}(k)$ variables would be to introduce two variables $x_v \le y_v$ for every vertex $v\in V$, where $x_v$ (resp., $y_v$) encodes the minimum (resp., maximum) number of vertices of some color $c\in C$ matched to $v$. 
Informally speaking, to encode the ILP constraints from \Cref{sec:kc}, we could now replace every occurrence of $z_v^c$ with $x_v$ (resp., $y_v$) if the constraint in which $z_v^c$ occurs imposes an upper (resp., lower) bound on $z_v^c$: Constraints \ref{eq:ILPkC:c1} from \Cref{sec:kc} then translate to $\sum_{v \in W} y_v \ge \max_{c \in C} |\nu_c(W)| \text{ and } \sum_{v \in W} x_v \le \min_{c \in C} |N_c(W)|$ for every $W \subseteq V$.
Moreover, we add the constraint $y_v-x_v\leq \ell$ for all $v\in V$.
Let $\mathcal P$ denote the ILP obtained this way.
(See \Cref{sec:k:mmm} for the formal construction of $\mathcal P$.)
Although it is easy to see that $\mathcal P$ is complete, it turns out to be nontrivial to prove its soundness.
To highlight this challenge, suppose that $\mathcal P$ is feasible for $\{ x_v, y_v \mid v \in V \}$.
To show that this implies that there is an $\ell$-fair matching $M$, we have to show that for every color $c\in C$, the graph $G_c$ has a matching $M_c$ such that $x_v \le |M_c(v)| \le y_v$ for all $v\in V$.
Unfortunately, we cannot directly apply \Cref{lemma:k:backward} (as we were able to do in \Cref{sec:kc}, since we do not know the exact number of vertices of each color matched to a vertex from $V$ but only upper and lower bounds):
To construct an $\ell$-fair matching from $\{ x_v, y_v \mid v \in V \}$ using  \Cref{lemma:k:backward}, we have to show that for every color $c \in C$, there always exists a set of integers $\{z_v^c\mid v\in V\}$ with the following properties: 

\begin{description}
 \item[(i)] $\sum_{v \in W} z_v^c \ge |\nu_c(W)|,$ $\forall W \subseteq V$
\item[(ii)] $\sum_{v \in V} z_v^c = |U_c|$
\item[(iii)] $x_v \le z_v^c \le y_v, $ $\forall v \in V$.
\end{description}

\noindent Finding an assignment of variables $z_v^c$ that fulfill (i) and (iii) is trivial; setting
$z_v^c = y_v$ suffices because $\mathcal P$ dictates that $\sum_{v \in W} y_v \ge \max_{c' \in C} |\nu_{c'}(W)| \ge |\nu_c(W)|$ for every $W \subseteq V$.
However, integers satisfying (i), (ii), and (iii) simultaneously are not trivially guaranteed to exist.
To nevertheless prove their existence, we prove what we call the \emph{touching separation theorem}, inspired by Frank's separation theorem \cite{Fra82} on submodular and supermodular functions.
Our theorem implies that if there is a solution to $\mathcal{P}$, then there is an assignment of variables $z_v^c$ satisfying (i), (ii), and (iii).
Submodular and supermodular functions are defined as~follows:

\begin{definition}
	\label{def:k:mod}
	Let $f \colon 2^S \to \mathbb{N}$ be a set function over a set $S$.
	We say that $f$ is submodular if $f(X) + f(Y) \ge f(X \cup Y) + f(X \cap Y)$ for every $X, Y \subseteq S$ and supermodular if $f(X) + f(Y) \le f(X \cup Y) + f(X \cap Y)$ for every $X, Y \subseteq S$.
	Moreover, $f$ is modular if $f$ is both submodular and supermodular.
\end{definition}

Modular functions admit a simpler characterization, which will be useful in several proofs:

\begin{restatable}[folklore, \appmark]{lemma}{mod}
	\label{lemma:mod}
	Let $f \colon 2^S \to \mathbb{N}$ be a set function.
	Then, $f$ is modular if and only if $f(X) = f(\emptyset) + \sum_{x \in X} (f(x) - f(\emptyset))$ for every $X \subseteq S$.
\end{restatable}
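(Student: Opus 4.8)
The plan is to prove the characterization by establishing the two directions separately, with the backward direction being immediate and the forward direction requiring an induction on the size of the argument set $X$.

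First I would prove the easy direction. Suppose $f(X) = f(\emptyset) + \sum_{x \in X} (f(x) - f(\emptyset))$ for every $X \subseteq S$. To show $f$ is modular, I take arbitrary $X, Y \subseteq S$ and simply compute both $f(X \cup Y) + f(X \cap Y)$ and $f(X) + f(Y)$ using this explicit formula. Writing $g(x) := f(x) - f(\emptyset)$, the sum $\sum_{x \in X} g(x) + \sum_{x \in Y} g(x)$ equals $\sum_{x \in X \cup Y} g(x) + \sum_{x \in X \cap Y} g(x)$, since each element of $X \cup Y$ is counted on the left exactly as many times as on the right (an element in both $X$ and $Y$ is counted twice on each side, an element in exactly one set is counted once on each side). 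Adding the constant terms $f(\emptyset)$ with matching multiplicities on both sides yields $f(X) + f(Y) = f(X \cup Y) + f(X \cap Y)$, so $f$ is both submodular and supermodular, hence modular.

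For the forward direction, I assume $f$ is modular and prove the formula by induction on $|X|$. The base case $|X| = 0$ gives $X = \emptyset$, where the formula reads $f(\emptyset) = f(\emptyset)$ and holds trivially. For the inductive step with $|X| \ge 1$, I pick an element $x_0 \in X$ and set $X' = X - x_0$. The key move is to apply modularity to the pair $X'$ and $\{x_0\}$: since $X' \cup \{x_0\} = X$ and $X' \cap \{x_0\} = \emptyset$, modularity gives $f(X') + f(\{x_0\}) = f(X) + f(\emptyset)$, so $f(X) = f(X') + f(x_0) - f(\emptyset)$. Applying the induction hypothesis to $X'$ (which has size $|X| - 1$) and substituting then produces exactly the desired formula for $X$.

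I expect no serious obstacle here, as this is a folklore fact; the only point requiring mild care is the bookkeeping of constant terms $f(\emptyset)$ in the backward direction (ensuring the multiplicities of the constant match on both sides of the modularity identity) and the correct decomposition $X = X' \cupdot \{x_0\}$ into disjoint pieces in the inductive step so that modularity applies cleanly. Everything reduces to elementary counting once the additive structure $g(x) = f(x) - f(\emptyset)$ is isolated.
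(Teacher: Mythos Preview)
Your proposal is correct and follows essentially the same approach as the paper: both directions are handled identically, with the backward direction verified by direct computation and the forward direction by induction on $|X|$ using modularity applied to the pair $X - x_0$ and $\{x_0\}$. The only cosmetic difference is that the paper takes $|X|=1$ as the base case while you take $|X|=0$; both are fine.
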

\toappendix{
\section{Proof of Folklore \Cref{lemma:mod}}
\mod*
\begin{proof}
	Suppose that $f(X) = f(\emptyset) + \sum_{x \in X} (f(x) - f(\emptyset))$ for every $X \subseteq S$.
	Observe that for every $X, Y \subseteq S$,
	\begin{align*}
		f(X \cup Y) + f(X \cap Y)
		&= 2 f(\emptyset) + \sum_{x \in X \cup Y} (f(x) - f(\emptyset)) + \sum_{x \in X \cap Y} (f(x) - f(\emptyset)) \\
		&= 2 f(\emptyset) + \sum_{x \in X} (f(x) - f(\emptyset)) + \sum_{x \in Y} (f(x) - f(\emptyset)) = f(X) + f(Y).
	\end{align*} 
	Thus, $f$ is modular according to \Cref{def:k:mod}.

	Conversely, we prove that a modular function satisfies  $f(X) = f(\emptyset) + \sum_{x \in X} (f(x) - f(\emptyset))$ for every $X \subseteq S$ by induction on $|X|$.
	For $X = \{ x \}$, we have $f(\emptyset) + \sum_{x' \in X} (f(x') - f(\emptyset)) = f(x)$.
	For $|X| > 1$, let $x \in X$ be an element of $X$.
	By the modularity of $f$, we have
	\begin{align*}
		f(X) &= f(X - x) + f(x) - f(\emptyset) \\
		&= f(x) + \sum_{x' \in X \setminus \{ x \}} (f(x') - f(\emptyset)) = f(\emptyset) + \sum_{x' \in X} (f(x) - f(\emptyset)).
	\end{align*}
	Here, the equality holds as $f$ is modular and the second equality is due to induction hypothesis.
\end{proof}}

Frank's famous separation theorem \cite{Fra82} can be stated as follows:
If $f \colon 2^S \to \mathbb{N}$ is submodular and $g \colon 2^S \to \mathbb{N}$ is supermodular, and $g(X) \le f(X)$ for every $X \subseteq S$, then there exists a modular set function $h \colon 2^S \to \mathbb{N}$ such that $g(X) \le h(X) \le f(X)$ for every $X \subseteq S$.
We prove an analogous separation theorem where the ``lower-bound'' function is the maximum of two functions---one being modular and the other being supermodular---and the ``upper-bound'' function is a modular function. (In general, the ``lower-bound'' function arising this way is neither submodular nor supermodular.)
We additionally require that the modular function separating the two bounds ``touches'' the lower-bound function.
Our proof uses a rather involved induction on the size of $S$.

\begin{restatable}[Touching separation theorem]{theorem}{tst}
	\label{lemma:k:separation}
	Let $f, f' \colon 2^S \to \mathbb{N}$ be two modular set functions and let $g \colon 2^S \to \mathbb{N}$ be a supermodular set function.
	Suppose that $f(\emptyset) = f'(\emptyset) = g(\emptyset) = 0$ and $\max(f(X), g(X)) \le f'(X)$ for every $X \subseteq S$.
	Then, there is a modular set function $h \colon 2^S \to \mathbb{N}$ such that $\max(f(X), g(X)) \le h(X) \le f'(X)$ for every $X \subseteq S$ and $h(S) = \max_{T \subseteq S} f(T) + g(S \setminus T)$.
\end{restatable}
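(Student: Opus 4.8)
The plan is to collapse the two-sided lower bound $\max(f,g)$ into a single supermodular function and then reduce to a ``touching'' separation between that supermodular lower bound and the modular upper bound $f'$. Concretely, I would introduce the sup-convolution
\[
 g^{\ast}(X) := \max_{T \subseteq X}\bigl(f(T) + g(X \setminus T)\bigr),
\]
so that $g^{\ast}(S)$ is exactly the prescribed value $\max_{T\subseteq S} f(T)+g(S\setminus T)$. The first block of the proof establishes the properties of $g^{\ast}$: it is normalized ($g^{\ast}(\emptyset)=0$), it dominates both bounds ($g^{\ast}\ge\max(f,g)$, taking $T=X$ and $T=\emptyset$), and it is dominated by $f'$, since $f(T)+g(X\setminus T)\le f'(T)+f'(X\setminus T)=f'(X)$ by modularity of $f'$. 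The one nonobvious property is supermodularity of $g^{\ast}$: using modularity of $f$ I would rewrite $g^{\ast}(X)=f(X)+\max_{B\subseteq X}(g-f)(B)$, observe that $g-f$ is supermodular, and prove the folklore fact that the downward max-closure $X\mapsto\max_{B\subseteq X}\gamma(B)$ of a supermodular $\gamma$ is again supermodular (pick maximizers $B_X,B_Y$ and apply supermodularity of $\gamma$ to $B_X,B_Y$). Since any modular $h\ge g^{\ast}$ satisfies $h(S)=h(T)+h(S\setminus T)\ge f(T)+g(S\setminus T)$ for all $T$, the value $g^{\ast}(S)$ is also a lower bound for $h(S)$, so it remains to find a modular $h$ with $g^{\ast}\le h\le f'$ and $h(S)=g^{\ast}(S)$; such an $h$ automatically dominates $\max(f,g)$.

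The core is then a standalone statement: for a supermodular $\gamma\colon 2^S\to\mathbb{N}$ with $\gamma(\emptyset)=0$ and a modular $f'$ with $\gamma\le f'$, there is a modular $h$ with $\gamma\le h\le f'$ and $h(S)=\gamma(S)$ (in effect, a base of $\gamma$ that fits under the box $f'$). I would prove this by induction on $|S|$, peeling off an arbitrary element $s$ and committing to a weight $h(\{s\})=c$. Writing $S'=S\setminus\{s\}$, the residual lower bound on $2^{S'}$ is
\[
 \tilde\gamma(Y) := \max\bigl(\gamma(Y),\ \gamma(Y\cup\{s\})-c\bigr).
\]
The crucial sub-lemma is that $\tilde\gamma$ is supermodular: it is a pointwise maximum of the two supermodular functions $Y\mapsto\gamma(Y)$ and $Y\mapsto\gamma(Y\cup\{s\})-c$, whose difference $Y\mapsto\gamma(Y\cup\{s\})-\gamma(Y)-c$ is a marginal of $\gamma$ shifted by a constant, hence monotone nondecreasing; a short case analysis (using that supermodularity is equivalent to nondecreasing marginals) shows that the maximum of two supermodular functions with a monotone difference is supermodular.

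It remains to choose $c$ so the induction applies and then to reassemble $h$. The requirements on $c$ are: $\tilde\gamma(\emptyset)=0$, i.e.\ $c\ge\gamma(\{s\})$; the domination $\tilde\gamma\le f'|_{S'}$, i.e.\ $c\ge f'(\{s\})+\max_{Z\ni s}(\gamma(Z)-f'(Z))$; the touching condition $\tilde\gamma(S')=\gamma(S)-c$, i.e.\ $c\le\gamma(S)-\gamma(S')$; and the box constraint $c\le f'(\{s\})$. I would show the admissible interval is a nonempty set of integers: both lower bounds are at most $\min(f'(\{s\}),\gamma(S)-\gamma(S'))$, where the one genuinely requiring work, $f'(\{s\})+\max_{Z\ni s}(\gamma(Z)-f'(Z))\le\gamma(S)-\gamma(S')$, follows by applying supermodularity of $\gamma$ to $Z$ and $S'$ (which yields $\gamma(Z)+\gamma(S')\le\gamma(S)+\gamma(Z\setminus\{s\})$) and then using $\gamma\le f'$ on $Z\setminus\{s\}$. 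Picking an integer $c$ in this interval, applying the induction hypothesis to $(\tilde\gamma,f'|_{S'})$ on $S'$, and setting $h(\{s\})=c$, I would finally verify $g^{\ast}\le h\le f'$ and $h(S)=\gamma(S)$ by splitting into the cases $s\notin X$ and $s\in X$.

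I expect the induction of the third block to be the main obstacle: one must simultaneously preserve supermodularity of the recursively generated lower bound $\tilde\gamma$ and guarantee that the interval of admissible weights for the peeled element is nonempty. The latter is precisely the point where both supermodularity of $\gamma$ and the hypothesis $\gamma\le f'$ are needed in tandem, and it is what forces the particular ``$\max$'' form of $\tilde\gamma$ rather than a naive shift.
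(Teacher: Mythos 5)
Your proof is correct, and its overall architecture matches the paper's: the paper likewise introduces the sup-convolution $g'(X)=\max_{T\subseteq X}\bigl(g(T)+f(X\setminus T)\bigr)$, proves it is supermodular, bounded above by $f'$ and below by $\max(f,g)$, and thereby reduces the theorem to a touching separation between a supermodular lower bound and a modular upper bound (your route to supermodularity via $g^{\ast}(X)=f(X)+\max_{B\subseteq X}(g-f)(B)$ and the downward max-closure is an equivalent variant of the paper's direct computation with maximizers). Where you genuinely diverge is the reduced claim itself: the paper does not prove it but cites it as a corollary of Lemma~2.4 of McCormick's survey on submodular functions, whereas you give a self-contained induction on $|S|$, peeling an element $s$, committing to a weight $c$ for it, and recursing on the residual bound $\tilde\gamma(Y)=\max\bigl(\gamma(Y),\gamma(Y\cup\{s\})-c\bigr)$. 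The two delicate steps in your induction both check out: $\tilde\gamma$ is supermodular, because the difference of its two branches is a shifted marginal $Y\mapsto\gamma(Y\cup\{s\})-\gamma(Y)-c$, hence nondecreasing, and a pointwise maximum of two supermodular functions with monotone difference is supermodular; and the interval of admissible $c$ is nonempty, the only nontrivial inequality being $f'(\{s\})+\max_{Z\ni s}\bigl(\gamma(Z)-f'(Z)\bigr)\le\gamma(S)-\gamma(S\setminus\{s\})$, which follows, exactly as you say, from supermodularity applied to $Z$ and $S\setminus\{s\}$ together with $\gamma\le f'$ on $Z\setminus\{s\}$. The trade-off: the paper's citation keeps its proof short but leaves the combinatorial core outside the paper, while your induction makes the theorem fully self-contained by elementary means, at the cost of some bookkeeping you would still need to write out (the base case $|S|=1$, and the observation that $c\ge\gamma(\{s\})$ enforces $\tilde\gamma(\emptyset)=0$ and $c\le f'(\{s\})$ propagates the box constraint, so the induction hypothesis is applicable to $(\tilde\gamma,f'|_{S\setminus\{s\}})$).
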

\begin{proof}
	Our proof is based on the following claim which is a simple corollary of  Lemma 2.4 from McCormick \cite{surveymodular} (where we set $T=\emptyset$, $f=-\beta$, $z=-\alpha$, and $y=-\gamma$ in the statement of Lemma 2.4):
	\begin{claim} \label{claim:tst}
		Let $\alpha\colon 2^S \to \mathbb{N}$ be a modular function and $\beta\colon 2^S \to \mathbb{N}$ be a supermodular function where $\beta(X)\leq \alpha(X)$ for all $X\subseteq S$. 
		 Then, there is a modular function $\gamma \colon 2^S \to \mathbb{N}$ such that $\beta(X)\leq \gamma(X) \leq \alpha (X)$ for every $X\subseteq S$ and $\gamma(S)=\beta(S)$.
	\end{claim}
	Using \Cref{claim:tst}, we can now establish our theorem. 
	Let $g'\colon 2^S \to \mathbb{N}$ be the following set function $g'(X)=\max_{T\subseteq X} g(T)+f(X\setminus T)$ for all $X\subseteq S$.
	 
	We first prove that $g'$ is supermodular. 
	For some subsets $X,Y\subseteq S$, let $X'\in \argmax_{T\subseteq X} g(T)+f(X\setminus T)$ and $Y'\in \argmax_{T\subseteq X} g(T)+f(X\setminus T)$. 
	Using this, as $f$ is modular and $g$ is supermodular, we directly get that $g'$ is supermodular: 
	\begin{align*}
		g'(X) + g'(Y) &= g(X')+f(X\setminus X')+g(Y')+f(Y\setminus Y')\\
		& = g(X')+g(Y')+f(X)+f(Y)-f(X')-f(Y')\\
		& = g(X')+g(Y')+f(X\cup Y)+f(X\cap Y)-f(X'\cup Y')-f(X'\cap Y')\\
		& = g(X')+g(Y')+f((X\cup Y)\setminus (X'\cup Y'))+f((X\cap Y)\setminus(X'\cap Y'))\\
		& \leq g(X'\cup Y')+f((X\cup Y)\setminus (X'\cup Y'))+g(X'\cap Y')+f((X\cap Y)\setminus(X'\cap Y'))\\
		& \leq \max_{T\subseteq X\cup Y} g(T)+f(X\cup Y\setminus T) + \max_{T\subseteq X\cap Y} g(T)+f(X \cap Y\setminus T) \\
		& = g'(X\cup Y)+g'(X\cap Y), 
	\end{align*}
	We now turn to proving  that $g'(X)\leq f'(X)$ for every $X\subseteq S$. Fix some $X\subseteq S$ and let $X'\in \argmax_{T\subseteq X} g(T)+f(X\setminus T)$. Then, as $\max(f(X),g(X))\leq f'(X)$ and $f'$ is modular we get: 
	$$
	g'(X)=g(X')+f(X\setminus X')\leq f'(X')+f'(X\setminus X')=f'(X).$$
	Thus, applying  \Cref{claim:tst} (with $\alpha=f'$ and $\beta=g'$), we get a modular function $h(X)$ with $g'(X)\leq h(X)\leq f'(X)$ and $h(S)=g'(S)= \max_{T \subseteq S} f(T) + g(S \setminus T)$. 
	Note that we clearly have that $\max(f(X),g(X))\leq g'(X)$ for all $X\subseteq S$ and thus $\max(f(X),g(X))\leq g'(X)\leq h(X)\leq f'(X)$. 
	Thereby, the theorem follows. 
\end{proof}

To apply \Cref{lemma:k:separation}, we use a supermodular function arising from a bipartite~graph:

\begin{restatable}{lemma}{lemmanu}
	\label{lemma:nu}
	Let $G = (U \cupdot V, E)$ be a bipartite graph and let $g \colon 2^V \to \mathbb{N}$ be a set function such that $g(W) = |\nu(W)|$ for each $W \subseteq V$.
	Then, $g$ is supermodular.
\end{restatable}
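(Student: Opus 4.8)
The plan is to decompose $g$ into a sum of per-vertex contributions and to argue that each contribution is supermodular; since a sum of supermodular functions is again supermodular, the claim will follow. Concretely, for each vertex $w' \in U \cupdot V$ I would introduce the indicator function $g_{w'} \colon 2^V \to \{0,1\}$ with $g_{w'}(W) = 1$ if $N(w') \subseteq W$ and $g_{w'}(W) = 0$ otherwise. By the definition of $\nu$, a vertex $w'$ belongs to $\nu(W)$ exactly when $N(w') \subseteq W$, so $g(W) = |\nu(W)| = \sum_{w' \in U \cupdot V} g_{w'}(W)$. Hence it suffices to show that every single $g_{w'}$ satisfies the defining inequality of supermodularity from \Cref{def:k:mod}.

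To establish supermodularity of one indicator, I would fix $X, Y \subseteq V$, write $S = N(w')$, and check $g_{w'}(X) + g_{w'}(Y) \le g_{w'}(X \cup Y) + g_{w'}(X \cap Y)$ by a short case analysis. The key observation is that $g_{w'}$ is the indicator of the upward-closed family $\{ W \subseteq V \mid S \subseteq W \}$. If $g_{w'}(X) = g_{w'}(Y) = 1$, then $S \subseteq X$ and $S \subseteq Y$, whence $S \subseteq X \cap Y$ and $S \subseteq X \cup Y$, so both sides equal $2$. In every remaining case the left-hand side is at most $1$, while monotonicity ($X \subseteq X \cup Y$ implies $S \subseteq X \Rightarrow S \subseteq X \cup Y$) gives $g_{w'}(X \cup Y) \ge \max(g_{w'}(X), g_{w'}(Y))$, so the right-hand side is already at least the left-hand side. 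Thus the inequality holds in all cases. Note that the bipartite structure only enters through the fact that for $w' \in U$ the relevant condition is $N(w') \subseteq W$ with $W \subseteq V$, while for $w' \in V$ the function $g_{w'}$ is constant (it is $0$ unless $w'$ is isolated), and constant functions are trivially supermodular.

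Finally I would invoke closure of supermodularity under summation: summing the per-vertex inequalities over all $w' \in U \cupdot V$ yields $g(X) + g(Y) \le g(X \cup Y) + g(X \cap Y)$ for every $X, Y \subseteq V$, which is precisely the supermodularity of $g$. The only mildly delicate point—and the step I expect to carry the argument—is recognizing that the indicator of ``the neighborhood is contained in $W$'' is supermodular, because it is monotone and the containment events combine correctly under union and intersection; the rest is a routine verification.
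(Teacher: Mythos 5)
Your proof is correct and is in substance the paper's own argument carried out vertex by vertex: the paper proves $g(W) + g(W') \le g(W \cup W') + g(W \cap W')$ in a single step via inclusion--exclusion, using exactly the two containment facts your case analysis verifies, namely that $\{ u \mid N(u) \subseteq W \} \cup \{ u \mid N(u) \subseteq W' \} \subseteq \{ u \mid N(u) \subseteq W \cup W' \}$ and that $\{ u \mid N(u) \subseteq W \} \cap \{ u \mid N(u) \subseteq W' \} = \{ u \mid N(u) \subseteq W \cap W' \}$. The difference is purely presentational (your decomposition into principal-filter indicators plus closure of supermodularity under sums), and if anything your explicit remark about vertices of $V$ is slightly more careful than the paper, whose proof silently counts only vertices of $U$ even though $\nu(W)$ as defined may also contain isolated vertices of $V$ --- a harmless constant shift.
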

\begin{proof}
	Let $W, W' \subseteq V$.
	Observe that
	\begin{align*}
	g(W \cup W') &= |\{ u \in U \mid N(u) \subseteq W \cup W' \}| \\
	&\ge |\{ u \in U \mid N(u) \subseteq W \}|  + |\{ u \in U \mid N(u) \subseteq W' \}|\\ & - |\{ u \in U \mid N(u) \subseteq W \cap W' \}| \\
	&= g(W) + g(W') - g(W \cap W'),
	\end{align*}
	which shows that $g$ is supermodular.
\end{proof}

The rest of this section is organized as follows. 
In \Cref{sec:k:mmm}, we consider \textsc{Max-Min Fair Matching}. 
We first construct an ILP with $\mathcal{O}(k)$ variables and establish its soundness with the help of the touching separation theorem. 
Subsequently, we extend our ILP to incorporate the non-emptiness constraint, which turns out to be surprisingly challenging. 
Afterwards, in \Cref{sec:k:mov}, we present our approach for \textsc{MoV Fair Matching} (the same for both the case with or without the non-emptiness constraint). 
Notably, we do not derive an ILP formulation for \textsc{MoV Fair Matching} directly, but instead introduce an auxiliary variant of the problem for which we construct an ILP with $\mathcal{O}(k^2)$ variables. 
As a last step, we show how \textsc{MoV Fair Matching} can be reduced to the auxiliary problem using the color coding technique together with some structural observations about fair matchings.

\subsection{Max-Min Fair Matching}
\label{sec:k:mmm} 

We now present an FPT algorithm for \textsc{Max-Min Fair Matching}.
As mentioned, our algorithm, which follows a similar approach as used in \Cref{sec:kc}, builds an ILP.
The difference is that the ILP constructed here involves $O(k)$ and not $\Theta(k\cdot |C|)$ variables.
We prove the correctness of the ILP in \Cref{th:max-min-k}, crucially relying on~\Cref{lemma:k:separation}.
\subparagraph*{ILP formulation.}
We add two variables $x_v \le y_v \in \mathbb{N}$ for every $v \in V$.
The intended meaning for these variables is that for every color $c \in C$, the number of vertices from $U_c$ matched to $v$ is between $x_v$ and $y_v$.
We obtain the following constraints from \Cref{lemma:k:forward}:
\begin{align} \label{eq:maxMinILP}
	\sum_{v \in W} y_v \ge \max_{c \in C} |\nu_c(W)| \text{ and } \sum_{v \in W} x_v \le \min_{c \in C} |N_c(W)| \quad \forall \, W \subseteq V.
\end{align}
To encode $\ell$-fairness, we add: $y_v - x_v \le \ell$ for all $v \in V$.

\begin{theorem}\label{th:max-min-k}
	\textsc{Max-Min Fair Matching} can be solved in $O^{\star}(k^{O(k)})$ time.
\end{theorem}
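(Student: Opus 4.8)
The plan is to show that the ILP of size $O(k)$ defined by Constraints~\eqref{eq:maxMinILP} together with $y_v - x_v \le \ell$ is both complete and sound, and then invoke Lenstra's algorithm, which on $O(k)$ variables runs in $O^{\star}(k^{O(k)})$ time. Completeness is the easy direction: given an $\ell$-fair matching $M$, set $x_v = \min_{c} |M(v)_c|$ and $y_v = \max_{c} |M(v)_c|$ for each $v \in V$. Applying \Cref{lemma:k:forward} separately to each color class $G_c$ (where $M$ restricts to a left-perfect matching $M_c$ of $U_c$ into $V$) gives $|\nu_c(W)| \le \sum_{v \in W} |M_c(v)| \le |N_c(W)|$ for all $W$; combining this across colors with the definitions of $x_v, y_v$ yields $\sum_{v \in W} y_v \ge |\nu_c(W)|$ and $\sum_{v \in W} x_v \le |N_c(W)|$ for every $c$, hence the max/min constraints hold, and $y_v - x_v \le \ell$ follows from $\ell$-fairness of $M(v)$.

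The main work is soundness. Suppose $\{x_v, y_v\}$ is feasible. By \Cref{lemma:k:backward} it suffices, for each color $c$, to exhibit integers $\{z_v^c\}$ satisfying conditions (i), (ii), and (iii) from the discussion above, since then each $G_c$ admits a matching $M_c$ with $|M_c(v)| = z_v^c$, and aggregating the $M_c$ produces a left-perfect matching $M$ with $x_v \le |M(v)_c| \le y_v$ for every $c$, which is $\ell$-fair because $y_v - x_v \le \ell$. To produce such $z_v^c$, I would fix a color $c$ and apply the touching separation theorem (\Cref{lemma:k:separation}) on the ground set $S = V$ with the following three functions: let $f'(W) = \sum_{v \in W} y_v$, which is modular; let $f(W) = \sum_{v \in W} x_v$, also modular; and let $g(W) = |\nu_c(W)|$, which is supermodular by \Cref{lemma:nu} with $g(\emptyset) = 0$. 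The feasibility constraints give exactly $\max(f(W), g(W)) \le f'(W)$ for all $W \subseteq V$, so \Cref{lemma:k:separation} yields a modular $h$ with $\max(f(W), g(W)) \le h(W) \le f'(W)$ and, crucially, $h(V) = \max_{T \subseteq V} f(T) + g(V \setminus T)$.

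I then read off $z_v^c := h(\{v\}) - h(\emptyset) = h(\{v\})$ via the modular characterization \Cref{lemma:mod}, so that $h(W) = \sum_{v \in W} z_v^c$. Conditions (i) and (iii) are immediate from the sandwiching: $\sum_{v \in W} z_v^c = h(W) \ge g(W) = |\nu_c(W)|$, and $x_v = f(\{v\}) \le h(\{v\}) = z_v^c \le f'(\{v\}) = y_v$. The nontrivial part is condition (ii), the exact total $\sum_{v \in V} z_v^c = h(V) = |U_c|$. Here the ``touching'' conclusion $h(V) = \max_{T \subseteq V} f(T) + g(V \setminus T)$ is precisely what is needed: I would argue that this maximum equals $|U_c|$. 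The inequality $h(V) \ge |U_c|$ holds because $g(V) = |\nu_c(V)| = |U_c|$ is attained at $T = \emptyset$ (every vertex of $U_c$ has all neighbors in $V$). For the reverse, I would show every term $f(T) + g(V \setminus T) = \sum_{v \in T} x_v + |\nu_c(V \setminus T)|$ is at most $|U_c|$, splitting $U_c$ by whether a vertex lies in $\nu_c(V \setminus T)$ or not and using the constraint $\sum_{v \in T} x_v \le \min_{c'} |N_{c'}(T)| \le |N_c(T)|$ to bound the count of the remaining vertices. This last counting step — reconciling the touching value with $|U_c|$ — is the main obstacle, as it is exactly where the strengthened ``touching'' form of the separation theorem (rather than plain Frank separation) earns its keep; everything else is a routine assembly of the cited lemmas.
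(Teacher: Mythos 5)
Your proposal is correct and follows essentially the same route as the paper's proof: the identical ILP, completeness via \Cref{lemma:k:forward} applied per color, and soundness by applying the touching separation theorem (\Cref{lemma:k:separation}) with $f(W)=\sum_{v\in W}x_v$, $f'(W)=\sum_{v\in W}y_v$, $g(W)=|\nu_c(W)|$, reading off $z_v^c=h(\{v\})$, and closing condition (ii) with the counting identity $f(T)+g(V\setminus T)\le |N_c(T)|+|\nu_c(V\setminus T)|=|U_c|$ before invoking \Cref{lemma:k:backward}. The final counting step you flag as the ``main obstacle'' is exactly the computation the paper performs, so there is no gap.
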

\begin{proof}
	Our ILP uses $2k$ variables and $O(2^k)$ constraints; the construction takes $O^{\star}(2^k)$ time.
	Using Lenstra's algorithm \cite{Kan87,Len83}, it is possible to check whether the constructed ILP is feasible in $O^{\star}(k^{O(k)})$ time.
	It remains to prove the correctness of our ILP.

	The completeness of our ILP follows from \Cref{lemma:k:forward}.
	For the soundness, suppose that $\{ x_v, y_v \mid v \in V \}$ is a feasible solution for the ILP.
	For every color $c\in C$, we show the following:
	For the set $U_c \subseteq U$ of vertices of color $c$, there is a left-perfect matching $M_c$ in the bipartite graph $G_c = G[U_c \cup V]$ such that $x_v \le |M_c(v)| \le y_v$ for each $v \in V$, from which the soundness of the ILP directly follows.
	To show the existence of such a matching $M_c$, we will rely on \Cref{lemma:k:backward}.
	Note, however, that \Cref{lemma:k:backward} asks for integers $\{ z_v \mid v \in V \}$ meeting certain constraints, which we cannot choose arbitrarily as we have to respect the constraint $x_v \le z_v \le y_v$ for every $v \in V$.
	Let us fix some color $c\in C$. We find integers $\{z_v \mid v\in V\}$ via~\Cref{lemma:k:separation}:

	Let $f, f', g \colon 2^V \to \mathbb{N}$ be set functions such that $f(W) = \sum_{v \in W} x_v$ and $f'(W) = \sum_{v \in W} y_v$, and $g(W) = |\nu_c(W)|$ for each $W \subseteq V$.
	Note that $f$ and $f'$ are modular by \Cref{lemma:mod} and that $g$ is supermodular by \Cref{lemma:nu}.
	The constraints in the ILP ensure that $\max(f(W), g(W)) \le f'(W)$ for every $W \subseteq V$.
	Consequently, \Cref{lemma:k:separation} yields a set modular function $h \colon 2^V \to \mathbb{N}$ such that $\max(f(W), g(W)) \le h(W) \le f'(W)$ for each $W \subseteq V$ and $h(V) = \max_{W \subseteq V} f(W) + g(V \setminus W)$.
	Let $z_v = h(v)$ for every $v \in V$.
	Note that as $h$ is modular and fulfills the constraints stated above, $\sum_{v \in W} z_v = h(W) \ge g(W) = |\nu_c(W)|$ for each $W\subseteq V$.
	Hence, in order to apply \Cref{lemma:k:backward} on the integers $\{z_v\mid v\in V\}$, it remains to show that $\sum_{v \in V} z_v = h(V) = |U_c|$.
	Since our ILP requires that $f(W) = \sum_{v \in W} x_v \le |N_c(W)|$ for $W \subseteq V$, we have	that
	\begin{align*}
	f(W) + g(V \setminus W)
	&\le |N_c(W)| + |\nu_c(V \setminus W)| \\
	&= |\{ u \in U_c \mid N_c(u) \cap W \ne \emptyset \}| + |\{ u \in U_c \mid N_c(u) \subseteq V \setminus W \}| = |U_c|.
	\end{align*}
	Moreover, we have $f(\emptyset) + g(V) = |U_c|$, resulting in $h(V) = \max_{W \subseteq V} f(W) + g(V \setminus W) = |U_c|$ by \Cref{lemma:k:separation}.
	Therefore, we have $\sum_{v\in V} z_v=|U_c|$ and by \Cref{lemma:k:backward} the graph $G_c$ admits a matching $M_c$ with $x_v \le |M_c(v)| \le y_v$ for each $v \in V$.
	Combining these matchings yields an $\ell$-fair matching~in~$G$.
\end{proof}

\subparagraph*{Non-emptiness constraint.}
For a matching $M$ found by our algorithm, we may have $M(v) = \emptyset$ for some $v \in V$. 
So the non-emptiness constraint may be violated.
Unfortunately, there is seemingly no simple linear constraint to ensure that $M(v) \ne \emptyset$ for each $v \in V$.\footnote{Adding $x_v>0$ ensures that there is at least one vertex of each color matched to $v$. Adding $y_v>0$ is  not enough, as the ILP only ensures that for each color \emph{at most} $y_v$ vertices are matched to $v$.} 
Overcoming this challenge, we now develop an FPT algorithm for \textsc{Max-Min Fair Matching} with the non-emptiness constraint.
We will build upon the ILP formulation for \Cref{th:max-min-k}.
However, the situation becomes substantially more complicated.
First, observe that for $\ell = 0$, it suffices to add the constraint $y_v > 0$ for all $v \in V$ because this ensures that every vertex $v \in V$ is matched to at least $x_v = y_v > 0$ vertices of each color.

For $\ell > 0$, we develop a more involved algorithm. 
We will start by describing its main ideas before presenting the formal proof.
Suppose that there is an $\ell$-fair many-to-one matching $M$ such that $M(v) \ne \emptyset$ for each $v \in V$.
By choosing an arbitrary element $u_v$ of $M(v)$ for each $v \in V$, we obtain a matching $M^{\star} := \{ \{ u_v, v \} \mid v \in V \} \subseteq M$ with $|M^{\star}(v)| = 1$ for each $v \in V$.
Since there are possibly $n^{\Omega(k)}$ choices for $M^{\star}$, we cannot assume that  $M^{\star}$ is given.
Instead, our algorithm only ``guesses'' some structural properties of $M^{\star}$, which we will incorporate into the above ILP for \textsc{Max-Min Fair Matching} by adding  further constraints. 
(Our algorithm avoids guessing objects marked with a star.)

For each $v \in V$, let $\chi(v)$ be the color of $M^{\star}(v)$.
By iterating over all possible partitions $\mathcal{V}$ of $V$, we first guess a partition of $V$ according to $\chi(v)$, i.e., two vertices $v, v' \in V$ belong to the same subset of $\mathcal{V}$ if and only if $\chi(v)=\chi(v')$.
For each $S\in \mathcal{V}$, let $\chi(S)$ be the color of $\chi(v)$ for all vertices $v \in S$.
Let us fix some $S\in \mathcal{V}$. 
We will formulate constraints that must be fulfilled when each vertex $v$ in $S$ is matched to its partner $M^{\star}(v)$ (which has color $\chi(S)$).
For this, let $U_S^{\star} \subseteq U_{\chi(S)}$ be the set of vertices from $U$ of color $\chi(S)$ incident to an edge in $M^{\star}$ and let $G_S^{\star}$ be the graph obtained from $G_{\chi(S)}$ by deleting all edges incident to $U_S^{\star}$ and then adding the edges of $M^{\star}$ whose endpoint on the left side has color $\chi(S)$.
Since every edge in $M$ with an endpoint of color $\chi(S)$ is present in $G_S^{\star}$, these edges form a left-perfect many-to-one matching in $G_S^{\star}$.
Thus, by \Cref{lemma:k:forward}, we should have $\sum_{v \in W} y_v \ge |\nu_{G_S^{\star}}(W)|$ and $\sum_{v \in W} x_v \le |N_{G_S^{\star}}(W)|$ for each $W \subseteq V$ if there is a left-perfect matching in $G_S^{\star}$.
We need to evaluate $|\nu_{G_S^{\star}}(W)|$ and $|N_{G_S^{\star}}(W)|$ to include these constraints into our ILP.
Note, however, that we cannot compute these values without $M^{\star}$ given.
In the following, we explain how we can nevertheless incorporate these constraints by guessing further structural aspects of $M^{\star}$.

First, we rewrite $|\nu_{G_S^{\star}}(W)|$ and $|N_{G_S^{\star}}(W)|$ as follows: 
\begin{align}
	|\nu_{G_S^{\star}}(W)| &= |\nu_{G_{\chi(S)}}(W)| + |\{ v \in S \cap W \mid N_{G_{\chi(S)}}(M^{\star}(v)) \not \subseteq W \}| \text{ and } \nonumber \\
	|N_{G_S^{\star}}(W)|
	&= |N_{G_{\chi(S)}}(W)| - |\{ v \in S \setminus W \mid N_{G_{\chi(S)}}(M^{\star}(v)) \cap W \ne \emptyset \}|. \label{eq:GSN}
\end{align}
To see why the first equation holds, observe that we have only deleted edges  when constructing $G_S^{\star}$ from $G_{\chi(S)}$. 
Thus, we have $\nu_{G_{\chi(S)}}(W) \subseteq \nu_{G_S^{\star}}(W)$.
Moreover, we have $u \in \nu_{G_S^{\star}}(W)\setminus \nu_{G_{\chi(S)}}(W)$ if and only if $u=M^{\star}(v)$ for some $v\in S\cap W$ (which implies that $u$ is only adjacent to $v$ in $G^{\star}_S$) and $u$ has a neighbor outside of $W$ in $G_{\chi(S)}$. 
The second equation follows similarly:
As we only deleted edges, we have $N_{G_S^{\star}}(W)\subseteq N_{G_{\chi(S)}}(W)$.
We also have $u\in N_{G_{\chi(S)}}(W)\setminus N_{G_S^{\star}}(W)$ if and only if $u=M^{\star}(v)$ for some $v\in S\setminus W$ and $u$ has a neighbor in $W$ in $G_{\chi(S)}$.
To evaluate the second term in each of these equations, we guess a function $\mu \colon V \to 2^V$ such that $\mu(v) = N_{G}(M^*(v))$ for each $v \in V$.
For $\alpha(S, W) := |\{ v \in S \cap W \mid \mu(v) \not \subseteq W \}|$ and $\beta(S, W) := |\{ v \in S \setminus W \mid \mu(v) \cap W \ne \emptyset \}|$, we then have from \Cref{eq:GSN}: 
\begin{align}
	|\nu_{G_S^{\star}}(W)|
	= |\nu_{\chi(S)}(W)| + \alpha(S, W) \text{ and }
	|N_{G_S^{\star}}(W)|
	= |N_{\chi(S)}(W)| - \beta(S, W).\label{eq:obs}
\end{align}

To incorporate the constraints $\sum_{v \in W} y_v \ge |\nu_{G_S^{\star}}(W)|$ and $\sum_{v \in W} x_v \le |N_{G_S^{\star}}(W)|$, it remains to deal with $|\nu_{\chi(S)}(W)|$ and $|N_{\chi(S)}(W)|$.
Note that we cannot compute $|\nu_{\chi(S)}(W)|$ or $|N_{\chi(S)}(W)|$ without  $M^{\star}$ given (in particular, we do not even know $\chi(S)$).
Moreover, it would be costly to guess these values (which can be $\Omega(n)$) for every $S \in \mathcal{V}$ and $W \subseteq V$ or the values of $\chi(S)$ (which can be $\Omega(|C|)$) for every $S\in \mathcal V$.
Instead, we relate these two values to $\max_{c \in C} |\nu_c(W)|$ and $\min_{c \in C} |N_c(W)|$ by guessing their respective differences. 
Although these differences may be $\Omega(n)$, we discover that we can cap them at $k$:
When they are larger, then the arising constraints are already covered by Constraint (\ref{eq:maxMinILP}) from the original ILP. 
Formally, we guess $X(S, W) = \min(|N_{\chi(S)}(W)| - \min_{c \in C} |N_{c}(W)|, k)$ and $Y(S, W) = \min(\max_{c \in C} |\nu_{c}(W)| - |\nu_{\chi(S)}(W)|, k)$ for each $S \in \mathcal{V}$ and $W \subseteq V$ by iterating over all possible $X, Y \colon \mathcal{V} \times 2^V \to \{ 0, \dots, k \}$ (for each of them we have at most $(k + 1)^{k \cdot 2^k} \in O(2^{2^{O(k)}})$ choices).

To account for the constraints $\sum_{v \in W} y_v \ge |\nu_{G_S^{\star}}(W)|$ and $\sum_{v \in W} x_v \le |N_{G_S^{\star}}(W)|$, we now add the following constraint to the ILP for each $S\in \mathcal{V}$ and $W\subseteq V$:
 \begin{equation*}
 	\begin{aligned}
 	&\sum_{v \in W} y_v \ge \max_{c \in C} |\nu_c(W)| - Y(S, W) + \alpha(S, W) \text{ and } \label{eq:ne-constraint} \\
 	&\sum_{v \in W} x_v \le \min_{c \in C} |N_c(W)| + X(S, W) - \beta(S, W). 
 	\end{aligned}
 \end{equation*}
We later argue in the proof why these constraints need to be fulfilled.

Iterating over all described guesses and for each solving the constructed ILP, we get: 

\begin{restatable}{theorem}{maxminnonempty}\label{th:max-min-k-ne}
	{\normalfont \textsc{Max-Min Fair Matching}}  with the non-emptiness constraint can be solved in $O^*(2^{2^{O(k)}})$ time.
\end{restatable}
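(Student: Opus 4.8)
The plan is to establish the two correctness directions for the guess-and-ILP scheme and to bound the number of guesses. For $\ell = 0$ I would simply add $y_v > 0$ for all $v \in V$ to the ILP underlying \Cref{th:max-min-k}, since then $x_v = y_v > 0$ forces each $v$ to receive at least one vertex of every color. For $\ell > 0$ I enumerate all partitions $\mathcal{V}$ of $V$ (at most $k^k$ of them), all $\mu \colon V \to 2^V$ (at most $2^{k^2}$), and all $X, Y \colon \mathcal{V} \times 2^V \to \{0, \dots, k\}$ (at most $(k+1)^{k \cdot 2^k} = 2^{2^{O(k)}}$ each); for each guess I solve the augmented ILP with its $2k$ variables and $O(k \cdot 2^k)$ constraints via Lenstra's algorithm in $k^{O(k)}$ time. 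Multiplying yields the claimed $O^*(2^{2^{O(k)}})$ bound.

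For completeness I would take an $\ell$-fair non-empty matching $M$, select one neighbor $u_v \in M(v)$ per $v$ to form $M^\star$, and read off the honest guesses $\chi(v) = \col(u_v)$, the partition $\mathcal{V}$ it induces, $\mu(v) = N(u_v)$, and the capped differences $X, Y$. Putting $x_v = \min_c |M(v)_c|$ and $y_v = \max_c |M(v)_c|$ makes $y_v - x_v \le \ell$ immediate and gives Constraints \eqref{eq:maxMinILP} by applying \Cref{lemma:k:forward} colorwise. For the new constraints I use that the restriction of $M$ to color $\chi(S)$ is a left-perfect matching of $G_S^\star$ (each reserved $u_v$ is matched to $v$), so \Cref{lemma:k:forward} on $G_S^\star$ combined with \eqref{eq:obs} yields $\sum_{v \in W} y_v \ge |\nu_{\chi(S)}(W)| + \alpha(S,W)$ and $\sum_{v \in W} x_v \le |N_{\chi(S)}(W)| - \beta(S,W)$; these equal the new constraints when $X,Y$ are uncapped, and when capped the bounds $\alpha(S,W), \beta(S,W) \le k$ show they are already implied by \eqref{eq:maxMinILP}.

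The hard direction is soundness, where the main obstacle is that the algorithm never commits to the true colors $\chi(S)$ nor to concrete reserved vertices. I would therefore equip each guess with a polynomial-time realizability check: verify that there is an injective assignment $\sigma \colon \mathcal{V} \to C$ and pairwise distinct vertices $u_v \in U_{\sigma(S)}$ (for $v \in S$) with $N(u_v) = \mu(v) \ni v$ such that $X, Y$ are the true capped differences for $\sigma(S)$; this is an assignment problem solvable in polynomial time. For a realizable guess with feasible ILP, I fix $c_S := \sigma(S)$, build $G_S^\star$ from the reserved vertices, and use $X(S,W) = \min(|N_{c_S}(W)| - \min_c |N_c(W)|, k)$, $Y(S,W) = \min(\max_c |\nu_c(W)| - |\nu_{c_S}(W)|, k)$, and \eqref{eq:obs} to turn the new ILP constraints into $\sum_{v \in W} y_v \ge |\nu_{G_S^\star}(W)|$ and $\sum_{v \in W} x_v \le |N_{G_S^\star}(W)|$ for every $W$.

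From here I reuse \Cref{th:max-min-k} verbatim with $G_S^\star$ replacing $G_{c_S}$: $g(W) := |\nu_{G_S^\star}(W)|$ is supermodular by \Cref{lemma:nu}, $f(W) := \sum_{v \in W} x_v$ and $f'(W) := \sum_{v \in W} y_v$ are modular, and the two inequalities give $\max(f, g) \le f'$. The touching separation theorem (\Cref{lemma:k:separation}) then yields a modular $h$ with $h(V) = \max_W f(W) + g(V \setminus W) = |U_{c_S}|$ (using $g(V) = |U_{c_S}|$ and $|N_{G_S^\star}(W)| + |\nu_{G_S^\star}(V \setminus W)| \le |U_{c_S}|$), so \Cref{lemma:k:backward} converts $z_v := h(v)$ into a left-perfect matching $M_{c_S}$ of $G_S^\star$ with $x_v \le |M_{c_S}(v)| \le y_v$; the free colors are handled exactly as in \Cref{th:max-min-k}. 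The union of these matchings is an $\ell$-fair matching of $G$ (each color contributes between $x_v$ and $y_v$ vertices to each $v$, and $y_v - x_v \le \ell$), and it matches every $v$ to its reserved partner $u_v$, which in $G_S^\star$ is adjacent only to $v$ while $M_{c_S}$ is left-perfect, establishing the non-emptiness constraint. I expect the delicate point to be making the realizability check and the per-block separation arguments mesh---ensuring the injective color assignment, the reserved vertices, and the colorwise matchings can all be chosen simultaneously and consistently.
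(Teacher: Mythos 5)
Your proposal is correct and follows essentially the same route as the paper's proof: the same guesses ($\mu$, the partition $\mathcal{V}$, and the capped difference functions $X,Y$), the same augmented ILP, completeness via the honest guesses with the cap handled by $\alpha,\beta\le k$, and soundness by building $G_S^{\star}$ from reserved degree-one vertices and rerunning the touching-separation argument of \Cref{th:max-min-k} per color. Your ``realizability check'' (injective $\sigma\colon\mathcal{V}\to C$ plus distinct reserved vertices with prescribed neighborhoods and correct capped differences) is exactly the paper's compatibility graph $H$ together with the requirement of a matching covering all blocks, just phrased as an assignment problem.
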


\begin{proof}
    We use the notation and intuition developed above and provide a formal description of the algorithm and its proof of correctness here. 
    For the sake of completeness, some of the arguments that were part of the high-level discussion presented above get repeated.
    
    For $\ell = 0$, the ILP previously developed for the \textsc{Max-Min Fair Matching} problem without non-emptiness constraint amended with the constraint $y_v > 0$ for all $v \in V$ solves the problem, as $y_v>0$ ensures that every vertex $v \in V$ is matched to at least $x_v = y_v > 0$ vertices of each color and at least one vertex from each color needs to be matched to each $v\in V$ in every matching satisfying the non-emptiness constraint in case $\ell=0$.

    For $\ell > 0$, we execute the algorithm described in the following once for all combinations of the following (and return yes as soon as one execution returns yes, and return no otherwise): 
    \begin{itemize}
     \item A function $\mu \colon V \to 2^V$ (there are $2^{O(k^2)}$  possibilities). 
     \item A partition $\mathcal{V}$ of $V$  (there are $k^{O(k)}$ possibilities).
     \item Two functions $X, Y \colon \mathcal{V} \times 2^V \to \{ 0, \dots, k \}$ (for each of them there are  $(k + 1)^{k \cdot 2^k} \in O(2^{2^{O(k)}})$ possibilities).
    \end{itemize}
    The intended meaning of these objects is as described above.s
    We first check whether our guesses are compatible with each other: 
    For $c \in C$ and $S \in \mathcal{V}$, we say that $c$ is \emph{compatible} with $S$~if
    \begin{itemize}
	\item 
	there is a matching $M_c'$ in $G_c$ with $|M_c'(v)|=1$ and $N(M_c'(v)) = \mu(v)$ for all $v \in S$, and
	\item
	for every $W \subseteq V$,  $X(S, W) = \min(|N_c(W)| - \min_{c' \in C} |N_{c'}(W)|, k)$ and $Y(S, W) = \min(\max_{c'} |\nu_{c'}(W)| - |\nu_c(W)|, k)$.
    \end{itemize}
    Note that for each subset $S \in \mathcal{V}$, the color $\chi(S)$ of vertices matched to $S$ in $M^{\star}$ should be compatible with $S$.
	Essentially, we substitute $\chi(S)$ with compatible colors because it would be computationally too expensive to guess $\chi(S)$ for all $S \in \mathcal{V}$.
    We check whether there exist different compatible colors for each $S\in \mathcal{V}$ via an auxiliary bipartite graph $H$ constructed as follows.
    In the auxiliary, each vertex from the left side corresponds to a color $c \in C$ and each vertex from the right side corresponds to a subset $S \in \mathcal{V}$.
    We have an edge between two vertices if and only if the corresponding color and subset are compatible. 
    We return no if $H$ does not admit a (one-to-one) matching that covers the right side; otherwise we proceed. 
   
    We formulate an ILP as follows. 
    For every $v \in V$, we again use two variables $x_v \le y_v \in \mathbb{N}$, which encodes the bounds on the number of vertices of each color matched to $v$.
    For the $\ell$-fairness, we add: $y_v - x_v \le \ell$ for all $v \in V$.
    Moreover, by \Cref{lemma:k:forward}, we have the following constraints:
    \begin{align}\label{eq:firstconstraints}
        \sum_{v \in W} y_v \ge \max_{c \in C} |\nu_c(W)| \text{ and } \sum_{v \in W} x_v \le \min_{c \in C} |N_c(W)| \quad \forall \, W \subseteq V.
    \end{align}
    Moreover, for each $S \in \mathcal{V}$ we add the following constraints to enforce the non-emptiness constraint:
\begin{equation} 
	\begin{aligned}
	&\sum_{v \in W} y_v \ge \max_{c \in C} |\nu_c(W)| - Y(S, W) + \alpha(S, W) \text{ and } \\
	&\sum_{v \in W} x_v \le \min_{c \in C} |N_c(W)| + X(S, W) - \beta(S, W) \quad \forall W \subseteq V. \label{eq:secondconstraints}
	\end{aligned}
\end{equation}
    (Recall that $\alpha(S, W) := |\{ v \in S \cap W \mid \mu(v) \not \subseteq W \}|$ and $\beta(S, W) := |\{ v \in S \setminus W \mid \mu(v) \cap W \ne \emptyset \}|$.)
    The resulting ILP has $O(k)$ variables and $O(2^k)$ constraints.
    We solve the ILP in $O^{\star}(k^{O(k)})$ time using Lenstra's algorithm \cite{Kan87,Len83} and return yes if it admits a feasible solution and no otherwise. 
    Doing this for all possible combinations specified in the beginning results in an overall running time of $O^{\star}(2^{2^{O(k)}})$.
    It remains to prove the completeness and soundness of the ILP. 
    
    \subparagraph{Completeness.}
    Assume that there is a left-perfect matching $M$ such that $M(v) \ne \emptyset$ for each $v \in V$.
   For each $v\in V$, let $u_v$ be an arbitrary element of $M(v)$.
   Moreover, we define $M^{\star}$ as $M^{\star} := \{ \{ u_v, v \} \mid v \in V \} \subseteq M$. 
   Note that it holds that $|M^{\star}(v)| = 1$ for each $v \in V$.
    For each $v\in V$, let $\chi(v)$ be the color of $M^{\star}(v)$. 
    
    We consider the execution of the algorithm for the following choices of our guesses: 
    \begin{itemize}
     \item For each $v\in V$, let $\mu(v):=N_G(M^{\star}(v))$. 
     \item Let $\mathcal{V}$ be a partition of $V$ according to $\chi(v)$, i.e., two vertices $v,v'\in V$ are part of the same set in $\mathcal{V}$ if and only if $\chi(v)=\chi(v')$. 
     \item For each $S\in \mathcal{V}$, let $\chi(S)$ be the color of $\chi(v)$ for all $v\in S$. For each $S\in \mathcal{V}$ and $W\subseteq V$, let $X(S, W) := \min(|N_{\chi(S)}(W)| - \min_{c \in C} |N_{c}(W)|, k)$ and $Y(S, W) := \min(\max_{c \in C} |\nu_{c}(W)| - |\nu_{\chi(S)}(W)|, k)$.
    \end{itemize}
    
    For the first part of the algorithm, let $M':=\{\{\chi(S),S\}\mid S\in \mathcal{V}\}$. 
    We claim that $M'$ is a one-to-one matching in $H$ covering the right side (and thus that the algorithm did not return no in the first step). 
    Note that by the construction of $\mu$, $X$ and $Y$, all edges in $M'$ are also part of $H$. 
    Moreover, by the construction of $\mathcal{V}$, it follows that $\chi(S)\neq \chi(S')$ for $S\neq S'\in \mathcal{V}$. 
    Thus, $M'$ is a one-to-one matching in $H$ covering the right side. 
    
    For each $v\in V$, let $x_v:=\min_{c\in C} |M(v)_c|$ and $y_v:=\max_{c\in C} |M(v)_c|$. 
    We claim that these variable assignments satisfy all constraints of the above constructed ILP. 
    As $M$ is $\ell$-fair, they clearly satisfy the fairness constraint. 
    Moreover as $M$ is a left-perfect matching by \Cref{lemma:k:forward} they also satisfy Constraints \ref{eq:firstconstraints}. 
    To prove that they also satisfy Constraints~\ref{eq:secondconstraints}, we use the notion introduced in the initial high-level discussion.
    Note that for each $S\in \mathcal{V}$, $M$ is also a left-perfect matching in $G^{\star}_S$.
    Thus, by \Cref{lemma:k:forward} for every $S\subseteq W$ it holds that 
    \begin{equation}\label{eq:intermediate}
     \sum_{v \in W} y_v \ge |\nu_{G_S^{\star}}(W)|\quad \text{ and } \quad \sum_{v \in W} x_v \le |N_{G_S^{\star}}(W)|.
    \end{equation}
    To justify that our described variable assignments satisfy Constraints (\ref{eq:secondconstraints}), let us fix some $S\in \mathcal{V}$ and $W\subseteq V$.  
    We make a case distinctions.
    If $|N_{\chi(S)}(W)| - \min_{c \in C} |N_{c}(W)| \le k$ (resp., $\max_{c \in C} |\nu_{c}(W)| - |\nu_{\chi(S)}(W)| \le k$), then by the definition of $X(S,W)$ (resp. $Y(S,W)$) and \Cref{eq:obs}, we have  $|N_{G_S^{\star}}(W)| = \min_{c \in C} |N_c(W)| + X(S, W) - \beta(S, W)$ (resp.,  $|\nu_{G_S^{\star}}(W)| = \max_{c \in C} |\nu_c(W)| - Y(S, W) + \alpha(S, W)$).
    Thus, Constraints (\ref{eq:secondconstraints}) need to be fulfilled in this case because \Cref{eq:intermediate} holds.
    Otherwise, it holds that $X(S, W)=k$ (resp. $Y(S, W)=k$) and  Constraints (\ref{eq:secondconstraints}) are fulfilled because  Constraints (\ref{eq:firstconstraints}) are fulfilled: 
    First note that by definition $\alpha(S,W)\leq k$ and  $\beta(S,W)\leq k$.
    Since $X(S, W) = k \ge \beta(S, W)$  (resp., $Y(S, W) = k \ge \alpha(S, W)$), we have $\sum_{v \in W} x_v \le \min_{c \in C} |N_c(W)| \le \min_{c \in C} |N_c(W)| + X(S, W) - \beta(S, W)$ (resp., $\sum_{v \in W} y_v \ge \max_{c \in C} |\nu_c(W)| \ge \max_{c \in C} |\nu_c(W)| - Y(S, W) + \alpha(S, W)$).
 
    \subparagraph{Soundness.}
	Assume that there exists a combination of $\mu \colon V \to 2^V$, $\mathcal{V} \subseteq 2^V$, and $X, Y \colon \mathcal{V} \times 2^V \to \{ 0, \dots, k \}$ such that the compatibility graph $H$ has a one-to-one matching  $M'$ covering the right side.
	Moreover, assume that $\{ x_v, y_v \mid v \in V \}$ is a solution for the constructed ILP.
	Then, using that the variables satisfy Constraints (\ref{eq:firstconstraints}), we can prove as in the proof of \Cref{th:max-min-k} that for every color $c \in C \setminus \{ M'(S) \mid S \in \mathcal{V} \}$ there is a matching $M_c$ in the graph $G_c = G[U_c \cup V]$ such that $x_v \le |M_c(v)| \le y_v$ for each $v \in V$.
	Considering the remaining colors, let $c$ be a color such that $c = M'(S)$ for some $S \in \mathcal{V}$.	
	Our goal is to show that there is a matching $M_c$ in $G_c$ such that $x_v \le |M_c(v)| \le y_v$ for each $v \in V$ and $M_c(v) \ne \emptyset$ for each $v \in S$.
	By the definition of compatibility, we have a matching $M_c'$ such that $|M_c'(v)| = 1$ and $N(M_{c}'(v)) = \mu(v)$ for each $v \in S$.
	Now consider the graph $G_c'$ obtained from $G_c$ by deleting all edges incident to a vertex $u \in U$ that is incident to some edge in $M_c'$ and then adding $M_c'$.
	Note that this is completely analogous to the construction of $G^{\star}_S$.
	We then have for each $W \subseteq V$,
	\begin{align*}
		&\sum_{v \in W} y_v \ge \max_{c' \in C} |\nu_{c'}(W)| - Y(S, W) + \alpha(S, W) \ge |\nu_c(W)| + \alpha(S, W) = |\nu_{G_c'}(W)| \text{ and } \\
		&\sum_{v \in W} x_v \le \min_{c' \in C} |N_{c'}(W)| + X(S, W) - \beta(S, W) \le |N_c(c)| - \beta(S, W) = |N_{G_c'}(W)|.
	\end{align*}
	Here, we have the first inequalities since $\{ x_v, y_v \mid v \in V \}$ satisfy Constraints (\ref{eq:secondconstraints}).
	The second inequalities follows because $c$ and $S$ are compatible and thus $X(S, W) = \min(|N_c(W)| - \min_{c' \in C} |N_{c'}(W)|, k)$ and $Y(S, W) = \min(\max_{c'\in C} |\nu_{c'}(W)| - |\nu_c(W)|, k)$.
	The last inequalities follow from the construction of $G_c'$ analogous to \Cref{eq:obs}.

	As we showed in the proof of \Cref{th:max-min-k}, this implies that there is a matching $M_c$ in $G_{c}'$ such that $x_v \le |M_c(v)| \le y_v$.
	Note that $M_c$ constitutes a matching in $G_c$ as well.
	Note also that $M_c(v) \ne \emptyset$ for each $v \in S$:
	The neighborhood of $M_c'(v)$ is $\{ v \}$ in $G_{c}'$ and hence $M_c'(v)$ must be matched to $v$ in $M_c$.
	To conclude the proof, observe that $M := \bigcup_{c \in C} M_c$ is an $\ell$-fair matching in $G$ with $M(v) \ne \emptyset$ for each $v \in V$.
\end{proof}

\subsection{MoV Fair Matching}
\label{sec:k:mov}
We now develop an FPT algorithm for \textsc{MoV Fair Matching} for the parameter $k$, which also works with the non-emptiness constraint.
Our algorithm has two parts.
In the first part, we give an FPT algorithm (using an ILP) for an auxiliary problem called \textsc{Targeted MoV Fair Matching}.
This is a variant of \textsc{MoV Fair Matching}, where for each $v\in V$, the two most frequent colors appearing in $M(v)$ are given as part of the input.
We establish the soundness of the ILP for \textsc{Targeted MoV Fair Matching} using again \Cref{lemma:k:separation,lemma:k:backward}.
In the second part, we present a (randomized) parameterized reduction from \textsc{MoV Fair Matching} to \textsc{Targeted Mov Fair Matching} using the color coding technique~\cite{DBLP:journals/jacm/AlonYZ95} (notably, the ``colors'' from the color coding technique are different from our colors).
To apply this technique, we show that the colors that appear (second) most frequently in $M(v)$ for some $v \in V$ ``stand out'' in a fair matching that fulfills certain conditions.
Then, the color coding technique essentially allows us to determine these colors.

\subparagraph*{Part I.}

First, we define an auxiliary problem, which we call \textsc{Targeted MoV Fair Matching}.
The input for \textsc{MoV Fair Matching} is also part of the input for \textsc{Targeted MoV Fair Matching}.
Moreover, \textsc{Targeted MoV Fair Matching} takes as input two functions
$\mu^1, \mu^2 \colon V \to C$. In
\textsc{Targeted MoV Fair Matching}, we ask for an $\ell$-fair matching $M$ such that for every vertex $v \in V$, $\mu^1(v)$ (resp., $\mu^2(v)$) is the most (resp., second most) frequent color among the vertices $M(v)$ matched to $v$ in $M$.

We now develop an FPT algorithm for \textsc{Targeted MoV Fair Matching} by means of an ILP.
Let $C_{1, 2} = \{ \mu^1(v), \mu^2(v) \mid v \in V \}$ be the set of colors that appear (second) most frequent among the vertices matched to some vertex in $V$ and let $C' = C \setminus C_{1, 2}$ be the set of other colors.
Notably, the size of $C_{1,2}$ is linearly bounded in our parameter $k$.
For every $v \in V$, we introduce a variable $y_v$ which represents the number of vertices of color $\mu^2(v)$ matched to $v$.
The values of $y_v$ need to be chosen in a way such that, for every color $c' \in C'$, there is a matching $M_{c'}$ in $G_{c'}$ such that $|M_{c'}(v)| \le y_v$ for all $v \in V$.
By \Cref{lemma:k:forward}, we obtain the following constraint which must be fulfilled and add it to the ILP:
\begin{align*}
	\sum_{v \in W} y_v \ge \max_{c' \in C'} |\nu_{c'}(W)| \quad \forall W \subseteq V.
\end{align*}
For the vertices of colors in $C_{1, 2}$, we impose constraints in the	same way as in \Cref{sec:kc}.
For every $c \in C_{1, 2}$ and $v \in V$, we introduce a variable $z_v^c$ which represents the number of vertices of color $c$ matched to $v$.
Then, we add the following constraints according to \Cref{lemma:k:forward}.
\begin{align}
	|\nu_c(W)| \le \sum_{v \in W} z_v^c \le |N_c(W)| \quad \forall W \subseteq V, c \in C_{1, 2}. \label{eq:MoVc}
\end{align}
Moreover, to ensure that $\mu^{\cdot}(v)$ is respected, we set for all $v\in V$: 
\begin{align*}
 y_v\geq z_v^c, \quad \forall c\in C_{1,2}\setminus \{\mu^1(v)\}
\end{align*}

In case we impose the non-emptiness constraint, we additionally require that $z_v^{\mu^1(v)} \ge 1$ for all $v \in V$.
Finally, we encode the $\ell$-fairness: $y_v = z_v^{\mu^2(v)} \text{ and } z_v^{\mu^2(v)} \le z_v^{\mu^1(v)} \le z_v^{\mu^2(v)} + \ell$ for all $v\in V$.

In order to show the correctness of the ILP, with the help of \Cref{lemma:k:separation}, we prove the following adaptation of \Cref{lemma:k:backward}, in which we show that there exists a matching of the vertices of colors from $C'$ to vertices from $V$ respecting $y_v$ for all $v\in V$.

\begin{restatable}{lemma}{backwardadapt}
	\label{lemma:k:backwardadapt}
	Let $G = (U \cupdot V, E)$ be a bipartite graph and let $\{ z_v \in \mathbb{N} \mid v \in V\}$ be a set of integers.
	Suppose that $\sum_{v \in W} z_v \ge |\nu_G(W)|$ for every $W \subseteq V$.
	Then, there is a left-perfect many-to-one matching $M$ such that $M(v) \le z_v$ for every $v \in V$.
\end{restatable}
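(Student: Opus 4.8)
The plan is to reduce this statement to \Cref{lemma:k:backward}, which I may assume. The difference between the two lemmas is that \Cref{lemma:k:backward} requires the extra normalization condition $\sum_{v \in V} z_v = |U|$ and produces a matching with $|M(v)| = z_v$ exactly, whereas here I only have the lower-bound Hall-type condition $\sum_{v \in W} z_v \ge |\nu_G(W)|$ and want to produce a matching with $|M(v)| \le z_v$. The natural idea is to introduce a dummy vertex to absorb the slack, so that the inequality constraints become the exact equalities demanded by \Cref{lemma:k:backward}.

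Concretely, first I would compute the total slack $s := \bigl(\sum_{v \in V} z_v\bigr) - |U|$. If $s < 0$, then taking $W = V$ in the hypothesis gives $\sum_{v \in V} z_v \ge |\nu_G(V)| = |U|$ (since every vertex of $U$ is adjacent only to vertices in $V$, so $\nu_G(V) = U$), a contradiction; hence $s \ge 0$. I would then build an augmented bipartite graph $G^+$ by adding $s$ new left-side vertices, each adjacent to \emph{every} vertex of $V$, and set $z_v^+ := z_v$ for $v \in V$. Now $\sum_{v \in V} z_v^+ = |U| + s = |U^+|$, matching the normalization requirement of \Cref{lemma:k:backward}.

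The key step is verifying that the Hall-type precondition $\sum_{v \in W} z_v^+ \ge |\nu_{G^+}(W)|$ still holds for every $W \subseteq V$. The point is that adding the new dummy vertices, which are adjacent to \emph{all} of $V$, never enlarges $\nu_{G^+}(W)$ for $W \subsetneq V$: a dummy vertex has a neighbor outside $W$, so it is never counted in $\nu_{G^+}(W)$. Thus $\nu_{G^+}(W) = \nu_G(W)$ whenever $W \ne V$, and the original hypothesis carries over unchanged. For $W = V$ we have $\nu_{G^+}(V) = U^+$ with $|U^+| = |U| + s = \sum_{v \in V} z_v^+$, so the inequality holds with equality. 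Applying \Cref{lemma:k:backward} to $G^+$ and $\{z_v^+\}$ yields a left-perfect many-to-one matching $M^+$ with $|M^+(v)| = z_v$ for every $v \in V$. Restricting $M^+$ to edges incident to $U$ (i.e., deleting the edges incident to the $s$ dummy vertices) gives a many-to-one matching $M$ in $G$ that is still left-perfect on $U$ and satisfies $|M(v)| \le z_v$ for every $v \in V$, since we only removed edges.

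I expect the main obstacle to be the bookkeeping around $\nu$ for the augmented graph, specifically making the argument that $\nu_{G^+}(W) = \nu_G(W)$ for $W \ne V$ fully rigorous and handling the boundary case $W = V$ cleanly. A secondary subtlety is ensuring the dummy vertices are genuinely absorbed: since they are adjacent to all of $V$ and $M^+$ is left-perfect, each dummy vertex is matched to exactly one vertex of $V$, so deleting its single edge decreases $|M^+(v)|$ by at most one at that vertex, preserving the upper bound $|M(v)| \le z_v$. Everything else is a direct invocation of the already-proven \Cref{lemma:k:backward}.
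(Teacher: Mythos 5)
Your proof is correct, but it takes a genuinely different route from the paper. The paper proves this lemma as a corollary of its touching separation theorem (\Cref{lemma:k:separation}): it sets $f \equiv 0$, $f'(W) = \sum_{v \in W} z_v$, and $g(W) = |\nu(W)|$, obtains a modular function $h$ with $g \le h \le f'$ and $h(V) = \max_{W \subseteq V} g(W) = |U|$, and then feeds the integers $h(v)$ into \Cref{lemma:k:backward}. You instead bypass the (super)modularity machinery entirely: the slack $s = \sum_{v \in V} z_v - |U| \ge 0$ is absorbed by $s$ dummy left-side vertices adjacent to all of $V$, after which \Cref{lemma:k:backward} applies verbatim to the augmented graph, and deleting the dummy edges can only decrease the loads $|M(v)|$. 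Your verification of the key facts ($\nu_{G^+}(W) = \nu_G(W)$ for $W \subsetneq V$ because every dummy has a neighbor outside $W$, and equality at $W = V$) is sound, so the argument goes through. What each approach buys: yours is more elementary and self-contained, needing only Hall's theorem via \Cref{lemma:k:backward}; the paper's is uniform with the technique it needs anyway for \Cref{th:max-min-k}, where nonzero lower bounds $x_v \le |M_c(v)|$ must also be respected---there the dummy-vertex trick breaks down (dummies landing on a single $v$ can push its load of real vertices below $x_v$), and the separation theorem is doing real work. Since this lemma only has upper bounds, that extra power is not needed and your reduction suffices.
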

\begin{proof}
	We use \Cref{lemma:k:separation,lemma:k:backward}.
	Let $f, f', g \colon 2^V \to \mathbb{N}$ be functions such that $f(W) = 0$, $f'(W) = \sum_{v \in W} z_v$, and $g(W) = |\nu(W)|$ for every $W \subseteq V$.
	Then, $f$ and $f'$ are modular by \Cref{lemma:mod} and $g$ is supermodular by \Cref{lemma:nu}.
	It follows from \Cref{lemma:k:separation} that there is a modular function $h \colon 2^V \to \mathbb{N}$ such that $g(W) \le h(W) \le f'(W)$ for every $W \subseteq V$ and $h(V) = \max_{W \subseteq V} f(W) + g(V \setminus W)$.
	Since $f(W) = 0$ for every $W \subseteq V$, we have $h(V) = \max_{W \subseteq V} g(W) = \max_{W \subseteq V} |\nu(W)| = |\nu(V)| = |U|$.
	By \Cref{lemma:k:backward}, we obtain a left-perfect many-to-one matching $M$  with $|M(v)| = h(v) \le f'(v)= z_v$ for every~$v \in V$.
\end{proof}

Using \Cref{lemma:k:backward,lemma:k:backwardadapt}, we can now show that the above constructed ILP is feasible if and only if the given \textsc{Targeted MoV Fair Matching} is a yes-instance.
\begin{restatable}{proposition}{ktargeted}\label{prop:k:targeted}
	\textsc{Targeted MoV Fair Matching} can be solved in $O^\star(k^{O(k^2)})$ time even with the non-emptiness constraint.
\end{restatable}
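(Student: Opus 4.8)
The plan is to prove both completeness and soundness of the ILP constructed above, and then bound the running time. The structure mirrors the proof of \Cref{th:fptkc} but requires the separation machinery (\Cref{lemma:k:separation}) to handle the colors in $C'$, which are encoded only through the aggregate bounds $y_v$ rather than with individual variables.

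For \textbf{completeness}, suppose the instance admits an $\ell$-fair matching $M$ respecting $\mu^1, \mu^2$. I would set $z_v^c := |M(v)_c|$ for each $c \in C_{1,2}$ and $v \in V$, and set $y_v := |M(v)_{\mu^2(v)}|$. Since for each $c \in C_{1,2}$ the restriction of $M$ to $G_c$ is a left-perfect matching of $U_c$, \Cref{lemma:k:forward} gives Constraints~(\ref{eq:MoVc}). The fairness constraints $z_v^{\mu^2(v)} \le z_v^{\mu^1(v)} \le z_v^{\mu^2(v)} + \ell$ follow because $\mu^1(v)$ is the most frequent and $\mu^2(v)$ the second most frequent color in $M(v)$ and $M$ is $\ell$-fair. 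The constraints $y_v \ge z_v^c$ for $c \in C_{1,2} \setminus \{\mu^1(v)\}$ hold since $\mu^2(v)$ is the second most frequent color, so no color other than $\mu^1(v)$ can occur more often than $\mu^2(v)$. For the $C'$ constraint, observe that for every $c' \in C'$ the color $c'$ occurs in $M(v)$ at most $|M(v)_{\mu^2(v)}| = y_v$ times (as $c'$ is neither the most nor second most frequent); applying \Cref{lemma:k:forward} to each $G_{c'}$ and taking the maximum over $c' \in C'$ yields $\sum_{v \in W} y_v \ge \max_{c' \in C'} |\nu_{c'}(W)|$. The non-emptiness requirement $z_v^{\mu^1(v)} \ge 1$ holds trivially since $\mu^1(v)$ is the dominant color of the nonempty set $M(v)$.

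For \textbf{soundness}, suppose the ILP is feasible. For each $c \in C_{1,2}$, Constraints~(\ref{eq:MoVc}) with $W = V$ give $\sum_{v \in V} z_v^c = |U_c|$ (using that $|\nu_c(V)| = |N_c(V)| = |U_c|$) and $\sum_{v \in W} z_v^c \ge |\nu_c(W)|$ for all $W$, so \Cref{lemma:k:backward} produces a matching $M_c$ in $G_c$ with $|M_c(v)| = z_v^c$. For the colors in $C'$ I would argue differently, since here only the upper bound $y_v$ is available. Apply \Cref{lemma:k:backwardadapt} to each $G_{c'}$: the ILP constraint $\sum_{v \in W} y_v \ge \max_{c' \in C'} |\nu_{c'}(W)| \ge |\nu_{c'}(W)|$ is exactly the hypothesis of that lemma, yielding a left-perfect matching $M_{c'}$ of $U_{c'}$ with $|M_{c'}(v)| \le y_v$ for each $v$. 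Aggregating $M := \bigcup_{c \in C} M_c$ gives a left-perfect many-to-one matching. It remains to verify that $M$ is $\ell$-fair with the prescribed most/second-most frequent colors: for each $v$, every color $c \ne \mu^1(v)$ satisfies $|M(v)_c| \le y_v = z_v^{\mu^2(v)} = |M(v)_{\mu^2(v)}|$ (for $c \in C_{1,2}$ by the constraint $y_v \ge z_v^c$, for $c \in C'$ by construction), so $\mu^1(v)$ is the most frequent and $\mu^2(v)$ the second most frequent, and $|M(v)_{\mu^1(v)}| - |M(v)_{\mu^2(v)}| = z_v^{\mu^1(v)} - z_v^{\mu^2(v)} \le \ell$ establishes $\ell$-fairness. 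With the non-emptiness constraint, $z_v^{\mu^1(v)} \ge 1$ ensures $M(v) \ne \emptyset$.

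Finally, for the \textbf{running time}: the ILP has $O(k)$ variables $y_v$ plus $O(|C_{1,2}| \cdot k) = O(k^2)$ variables $z_v^c$, hence $O(k^2)$ variables in total, and $O(2^k)$ constraints, so Lenstra's algorithm~\cite{Kan87,Len83} decides feasibility in $O^\star((k^2)^{O(k^2)}) = O^\star(k^{O(k^2)})$ time. The main obstacle is the soundness argument for the colors in $C'$: because these colors are represented collectively by the single upper-bound function $y_v$ rather than by individual exact-count variables, the straightforward application of \Cref{lemma:k:backward} is unavailable, and this is precisely why I rely on the tailored upper-bound variant \Cref{lemma:k:backwardadapt} (whose own proof invokes the touching separation theorem to convert the one-sided bound into a feasible integral assignment).
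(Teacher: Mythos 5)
Your proposal is correct and follows essentially the same route as the paper's proof: completeness via \Cref{lemma:k:forward}, soundness by applying \Cref{lemma:k:backward} to the colors in $C_{1,2}$ (using Constraint~(\ref{eq:MoVc}) with $W=V$ to get $\sum_v z_v^c = |U_c|$) and \Cref{lemma:k:backwardadapt} to the colors in $C'$, and Lenstra's algorithm on the $O(k^2)$-variable ILP for the running time. Your write-up is in fact somewhat more explicit than the paper's (e.g., verifying that the aggregated matching respects $\mu^1,\mu^2$ and the $\ell$-fairness bound, and handling the non-emptiness constraint), but the key ideas and lemma invocations coincide.
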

\begin{proof}
	It is easy to see that our ILP uses $O(k^2)$ variables and $O(2^k)$ constraints and that the construction takes $O^\star(2^k)$ time.
	Using Lenstra's algorithm \cite{Len83,Kan87}, this ILP can be solved in $O^\star(k^{O(k^2)})$ time.
	It remains to prove the correctness of our ILP.

	The completeness of our ILP follows from \Cref{lemma:k:forward}.
	For the soundness, suppose that $\{ y_v \mid v \in V \} \cup \{ z_v^c \mid v \in V, c \in C_{1, 2} \}$ is a feasible solution for the constructed ILP.
	Then, for every $c \in C_{1, 2}$, we have that $\sum_{v \in W} z_v^c \ge |\nu_{c}(W)|$ for every $W \subseteq V$ and $\sum_{v \in V} z_v^c = |U_c|$ (by Constraint \ref{eq:MoVc} for $W=V$).
	By \Cref{lemma:k:backward}, it follows that $G_c$ has a left-perfect many-to-one matching $M_c$ with $|M_c(v)| = z_v^c$ for every $v \in V$ and $c \in C_{1, 2}$.
	Moreover, for every $W \subseteq V$ and $c \in C \setminus C_{1, 2}$, we have that $|\nu_c(W)| \le \sum_{v \in W} y_v$.
	By \Cref{lemma:k:backwardadapt}, it follows that $G_c$ has a left-perfect many-to-one matching $M_c$ with $|M_c(v)| \le y_v = |M_{\mu^2(v)}(v)|$ for every $v \in V$ and $c \in C_{1, 2}$.
	Combining $M_c$ for all colors $c \in C$ yields an $\ell$-fair matching for $G$.
\end{proof}

\subparagraph*{Part II.}
We will employ the color coding technique to reduce \textsc{MoV Fair Matching} to \textsc{Targeted Mov Fair Matching}.
To do so, we introduce another auxiliary problem called \textsc{$\mathcal{Q}$-MoV Fair Matching}.
To define the problem, we first introduce additional notation.
Suppose that $M$ is a matching in the input graph $G = (U \cupdot V, E)$.
Let $\mu_M^1, \mu_M^2 \colon V \to C$ be mappings such that for every vertex $v \in V$, $\mu_M^1(v)$ (resp., $\mu_M^2(v)$) is the most (resp., second most) frequent color among the vertices $M(v)$ matched to $v$ in $M$.
When the maximum or second maximum is achieved by more than one color, we break ties according to a fixed linear order $\le_C$ on $C$.
Let $\mathcal{V} = \{ v^1, v^2 \mid v \in V \}$ be a set containing $2|V|$ elements and let $\mathcal{P}(M)$ be a partition of $\mathcal{V}$ into subsets $S \subseteq \mathcal{V}$ where for every $S\in \mathcal{P}(M)$, $v^i, v'^j \in S$ if and only if $\mu_M^i(v) = \mu_M^j(v')$ for $v,v'\in V$ and $i,j\in [2]$.

Using this, we define \textsc{$\mathcal{Q}$-MoV Fair Matching}. 
Here, $\mathcal{Q}$ is a partition of $\mathcal{V}$ and we assume that $\mathcal{Q}$ is fixed.
The input of \textsc{$\mathcal{Q}$-MoV Fair Matching} is identical to the input of \textsc{MoV Fair Matching}.
The difference is that \textsc{$\mathcal{Q}$-MoV Fair Matching} asks for a left-perfect $\ell$-fair many-to-one matching $M$ in $G$ \emph{consistent} with $\mathcal{Q}$, that is, $\mathcal{P}(M) = \mathcal{Q}$.
Clearly, an instance of \textsc{MoV Fair Matching} is a yes-instance if and only if there exists a partition $\mathcal{Q}$ of $\mathcal{V}$ such that the corresponding \textsc{$\mathcal{Q}$-MoV Fair Matching} instance is a yes-instance.
Since there are at most $k^{O(k)}$ ways to partition $\mathcal{V}$ (which is a set of $2k$ elements), we can afford to ``guess'' $\mathcal{Q}$ in our FPT algorithm.
We thus focus on solving \textsc{$\mathcal{Q}$-MoV Fair Matching}.

In particular, we assume without loss of generality that $\mathcal{Q}$ is \emph{top-color maximal}, which is defined as follows. 
For a matching $M$, let $\sigma(M) := \sum_{v \in V, i \in \{ 1, 2 \}} |M(v)_{\mu^i_M(v)}|$ be the sum of the occurrences of the two most frequent colors in $M(v)$ over all vertices $v \in V$.
Let $\mathcal{M}_{\mathcal{Q}}$ be the set of all $\ell$-fair left-perfect matchings consistent with $\mathcal{Q}$.
Then, let $\sigma_{\mathcal{Q}} = 0$ if $\mathcal{M}_{\mathcal{Q}} = \emptyset$ and $\sigma_{\mathcal{Q}} = \max_{M_{\mathcal{Q}} \in \mathcal{M}_{\mathcal{Q}}} \sigma(M_\mathcal{Q})$ otherwise.
Now, we say that $\mathcal{Q}$ is \emph{top-color maximal} if $\sigma_\mathcal{Q} \geq \sigma_{\mathcal{Q}'}$ for all partitions $\mathcal{Q}'$ of $\mathcal{V}$.

We solve \textsc{$\mathcal{Q}$-MoV Fair Matching} using color coding. 
We make a case distinction based on whether $\ell>0$ or $\ell=0$. 
To apply the color coding method, we show a rather technical lemma stating that if a color $c$ is among the two most frequent colors in $M(v)$ for some $v \in V$ in a certain fair matching $M$, then $c$ needs to ``stand out''---the number of its occurrence in the neighborhood of $v$ in $G$ is greater than for any other color $c'$, unless $c'$ also ``stands out''.

We first prove this result for $\ell>0$ and afterwards an analogous, slightly weaker result for $\ell=0$.

\begin{restatable}{lemma}{kcmax}
	\label{lemma:k:cmax}
	Let $\mathcal{I}$ be a yes-instance of {\normalfont\textsc{$\mathcal{Q}$-MoV Fair Matching}} with $\ell > 0$ and $\mathcal{Q}$ be top-color maximal.
	Then, an $\ell$-fair left-perfect many-to-one matching $M$ consistent with $\mathcal{Q}$ with $\sigma(M_{\mathcal{Q}}) = \sigma_{\mathcal{Q}}$ fulfills $|N_{\mu_M^i(v)}(v)| \ge |N_{c'}(v)|$ for for every $v\in V$, $i\in [2]$, and $c' \in  \{ c \in C \mid \forall v \in V, i \in \{ 1, 2 \} \colon c \ne \mu_M^i(v) \}$.
\end{restatable}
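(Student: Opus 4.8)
The plan is to argue by contradiction through a local exchange that strictly increases $\sigma$. Suppose the claim fails, so there are $v \in V$, $i \in \{1,2\}$, and a non-top color $c'$ (meaning $c' \ne \mu^j_M(w)$ for all $w \in V$ and $j \in \{1,2\}$) with $|N_{c}(v)| < |N_{c'}(v)|$, where $c := \mu^i_M(v)$. Write $a := |M(v)_{\mu^1_M(v)}|$ and $b := |M(v)_{\mu^2_M(v)}|$, so $a \ge b$ and, by $\ell$-fairness of $M$, $a - b \le \ell$. First I would collect the basic inequalities. Since the color-$c$ (resp.\ color-$c'$) vertices matched to $v$ form a subset of $N_c(v)$ (resp.\ $N_{c'}(v)$), and since $c'$ is non-top and $c \in \{\mu^1_M(v),\mu^2_M(v)\}$, we get $|M(v)_{c'}| \le b \le |M(v)_c| \le |N_c(v)| < |N_{c'}(v)|$. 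In particular $|N_{c'}(v)| \ge b+1$, and $v$ has at least $(b+1) - |M(v)_{c'}|$ color-$c'$ neighbors matched to vertices other than $v$.

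Next I would perform the swap. Set $t := (b+1) - |M(v)_{c'}| \ge 1$; by the count above there are at least $t$ color-$c'$ neighbors of $v$ currently matched elsewhere, and I would rematch $t$ of them to $v$, obtaining $M'$. This keeps $M'$ left-perfect and raises the color-$c'$ multiplicity at $v$ to exactly $b+1$, leaving every other color at $v$ untouched. At each host $w$ losing color-$c'$ vertices, $c'$ is non-top, so decreasing its multiplicity leaves the two most frequent colors of $M'(w)$ and their counts unchanged; hence $\MOV(M'(w)) = \MOV(M(w)) \le \ell$, the contribution of $w$ to $\sigma$ is unchanged, and $M'(w) \ne \emptyset$ whenever $M(w) \ne \emptyset$ (so the non-emptiness constraint, if present, is preserved).

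The key computation is the effect at $v$. If $a > b$, the two most frequent colors of $M'(v)$ become $\mu^1_M(v)$ with count $a$ and $c'$ with count $b+1$, so $\MOV(M'(v)) = a - (b+1) \le \ell - 1 \le \ell$, and the contribution of $v$ to $\sigma$ rises from $a+b$ to $a+(b+1)$. If $a = b$, then $c'$ attains count $a+1$ and becomes most frequent while $\mu^1_M(v)$ is now second; here $\MOV(M'(v)) = 1 \le \ell$ (this is exactly where $\ell > 0$ enters), and the contribution rises from $2a$ to $(a+1)+a$. In either case $M'$ is an $\ell$-fair left-perfect matching with $\sigma(M') = \sigma(M) + 1$.

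Finally I would derive the contradiction from top-color maximality. Let $\mathcal{Q}' := \mathcal{P}(M')$; since $M' \in \mathcal{M}_{\mathcal{Q}'}$ we have $\sigma(M') \le \sigma_{\mathcal{Q}'}$, and top-color maximality of $\mathcal{Q}$ gives $\sigma_{\mathcal{Q}'} \le \sigma_{\mathcal{Q}}$. Combined with $\sigma(M) = \sigma_\mathcal{Q}$, this yields $\sigma(M)+1 = \sigma(M') \le \sigma_{\mathcal{Q}'} \le \sigma_\mathcal{Q} = \sigma(M)$, a contradiction. The main obstacle is the fairness bookkeeping of the swap: one must choose the target count for $c'$ at $v$ precisely as $b+1$ so that $c'$ just enters the top two and $\sigma$ grows by exactly one while $\MOV(M'(v))$ stays within $\ell$; this is also the step that breaks for $\ell = 0$ (the case $a=b$ would force $\MOV = 1$), which is why the $\ell = 0$ case requires the separate, weaker lemma.
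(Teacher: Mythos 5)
Your proof is correct and follows essentially the same route as the paper: rematch color-$c'$ neighbors of $v$ to $v$, verify that fairness is preserved at $v$ (using $\ell>0$ when the top two counts are tied) and untouched elsewhere because $c'$ is globally non-top, and then contradict top-color maximality via $\sigma(M') > \sigma(M) = \sigma_{\mathcal{Q}}$. The only (harmless) difference is that you always raise the count of $c'$ at $v$ to exactly $b+1 = \max^2_{c''}|M(v)_{c''}|+1$, whereas the paper raises it to $|M(v)_{\mu_M^i(v)}|+1$, which is why you need two cases instead of the paper's three.
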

\begin{proof}
	Let $C_M := \{ \mu_{M}^i(v) \in C \mid v \in V, i \in \{ 1, 2 \} \}$ be the set of colors which are one of the two most frequent colors in $M(v)$ for some $v \in V$ and let $C_M' := C \setminus C_M$ be the set of other colors.

	Assume for the sake of contradiction that there is some $v\in V$ and $i\in [2]$ with $\mu_{M}^i(v) = c \in C_M$ and color $c' \in C_M'$ such that $|N_{c'}(v)| > |N_c(v)|$. 
	Recall that by the definition of $C'_M$ it holds that $|M(v)_c|\geq |M(v)_{c'}|$.
	Consider a left-perfect matching $M'$ obtained from $M$ as follows.
	Initially, let $M' := M$.
	We then repeat the following procedure $|M(v)_{c}| - |M(v)_{c'}| + 1$ times:
	We delete from $M'$ an arbitrary edge $\{ u, v' \} \in M'$ such that $u \in N(v) \subseteq U$ is a vertex of color $c'$ and $v'$ is some vertex in $V - v$, and then add an edge $\{ u, v \}$.
	Note that this is always possible, as $|M(v)_{c'}|\leq |M(v)_{c}| \le |N_c(v)| < |N_{c'}(v)|$.
	We claim that $M'$ is a left-perfect $\ell$-fair matching.
	It is easy to verify the left-perfectness of $M'$.
	For the $\ell$-fairness, since $c' \in C_M'$, for every $v'\in V-v$, the number of occurrences of the two most frequent colors in $M'(v')$ has not changed compared to $M(v')$, i.e., $\max^1_{c'' \in C} |M'(v')_{c''}|=\max^1_{c'' \in C} |M(v')_{c''}|$ and $\max^2_{c'' \in C} |M'(v')_{c''}| = \max^2_{c'' \in C} |M(v')_{c''}|$. 
	Thus, $M'(v')$ is $\ell$-fair for each $v' \in V - v$.
	It thus remains to show that $\max^1_{c'' \in C} |M'(v)_{c''}| - \max^2_{c'' \in C} |M'(v)_{c''}| \le \ell$.
	We consider three cases:
	\begin{itemize}
		\item
		If $c=\mu_{M}^1(v)$, then as we added $|M(v)_{c}| - |M(v)_{c'}| + 1$ vertices of color $c'$ to $M(v)$, we have $\max^1_{c'' \in C} |M'(v)_{c''}|=|M'(v)_{c'}| =|M(v)_{c}|+1$ and $\max^2_{c'' \in C} |M'(v)_{c''}|= |M(v)_c|$, and thus $\max^1_{c'' \in C} |M'(v)_{c''}| - \max^2_{c'' \in C} |M'(v)_{c''}| = 1 \le \ell$.
		\item
		Suppose that $c=\mu_{M}^2(v)$ and $\max^1_{c'' \in C} |M(v)_{c''}| > \max^2_{c'' \in C} |M(v)_{c''}|$.
		Then, we have $\max^1_{c'' \in C} |M'(v)_{c''}| = \allowbreak \max^1_{c'' \in C} |M(v)_{c''}|$ and $\max^2_{c'' \in C} |M(v)_{c''}| = \max^2_{c'' \in C} |M(v)_{c''}| + 1$, and thus $\max^1_{c'' \in C} |M'(v)_{c''}| - \max^2_{c'' \in C} |M'(v)_{c''}| \le \ell - 1 \le \ell$.
		\item
		Suppose that $c=\mu_{M}^2(v)$ and $\max^1_{c'' \in C} |M(v)_{c''}| = \max^2_{c'' \in C} |M(v)_{c''}|$.
		Then, we have $\max^1_{c'' \in C} |M'(v)_{c''}| =  \max^1_{c'' \in C} |M(v)_{c''}| + 1$ and $\max^2_{c'' \in C} |M(v)_{c''}| = \max^2_{c'' \in C} |M(v)_{c''}|$, and thus $\max^1_{c'' \in C} |M'(v)_{c''}| - \max^2_{c'' \in C} |M'(v)_{c''}| \le 1 \le \ell$.
	\end{itemize}
	This proves the $\ell$-fairness of $M$.

	However, the existence of $M'$ contradicts the assumption that $\mathcal{Q}$ is top-color maximal: 
	Observe that $c'$ is among the two most frequent colors in $M'(v)$ and $|M'(v)_c|>|M(v)_{\mu_M^2(v)}|$.
	Moreover, as $M(v)\subseteq M'(v)$ and the number of occurrences of each of the two most frequent colors in $M(v')$ for all $v\in V-v$ is the same in $M$ and $M'$, we have $\sigma(M')>\sigma(M)$.
	However, this contradicts the assumption that $\mathcal{Q}$ is top-color maximal, as $\sigma(M_{\mathcal{P}(M')})\geq \sigma(M')>\sigma(M)=\sigma(M_{\mathcal{Q}}) = \sigma_{\mathcal{Q}}$.
\end{proof}

We now prove an analogous, slightly weaker lemma for the case $\ell = 0$:
\begin{lemma}
	\label{lemma:k:cmaxl0}
	Let $\mathcal{I}$ be a yes-instance of {\normalfont\textsc{$\mathcal{Q}$-MoV Fair Matching}} with $\ell = 0$ and $\mathcal{Q}$ be top-color maximal.
	Then, a $0$-fair left-perfect many-to-one matching $M$ consistent with $\mathcal{Q}$ with $\sigma(M_{\mathcal{Q}}) = \sigma_{\mathcal{Q}}$ fulfills $|N_{\mu_M^i(v)}(v)| \ge |N_{c'}(v)|$ for every $v\in V$, $i\in [2]$, and $c' \in C' -c_v'$, where $C' = \{ c' \in C \mid \forall v \in V, i \in \{ 1, 2 \} \colon c' \ne \mu_M^i(v) \}$ and $c_v' \in C'$ is some color for each $v\in V$. (In other words, the constraint is satisfied for all but one color $c_v$.)
\end{lemma}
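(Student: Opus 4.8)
The plan is to mirror the proof of \Cref{lemma:k:cmax} as closely as possible, adapting only the step that fails when $\ell = 0$. As before, I would assume for contradiction that there is some $v \in V$, $i \in [2]$ with $\mu_M^i(v) = c$, and a color $c' \in C'$ with $|N_{c'}(v)| > |N_c(v)|$. The key difference from the $\ell > 0$ case is that I must exhibit \emph{two} ``rogue'' colors $c'$ in $C'$ that beat the neighborhood count of some top color $\mu_M^i(v)$ before I can derive a contradiction; a single such color will only let me construct a matching $M'$ that is $1$-unfair at $v$, which is not $0$-fair. This is exactly why the statement is weakened to ``all but one color $c_v'$'': the existence of a lone dominating color in $C'$ cannot be ruled out.

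\medskip\noindent\textbf{Construction of the contradicting matching.}
Concretely, suppose toward contradiction that there are two distinct colors $c_1', c_2' \in C'$, both with $|N_{c_j'}(v)| > |N_{\mu_M^i(v)}(v)|$ for the relevant top color. Since $c_1', c_2' \in C' = C_M'$ we have $|M(v)_{\mu_M^1(v)}| = |M(v)_{\mu_M^2(v)}|$ (the $0$-fairness of $M(v)$ forces the two most frequent colors to tie, so all colors present tie and the unused colors $C'$ appear strictly less often). I would then build $M'$ from $M$ by rerouting one vertex of color $c_1'$ \emph{and} one vertex of color $c_2'$ into $v$ from other vertices of $V$, exactly as in the swap procedure of \Cref{lemma:k:cmax}; the strict inequalities $|N_{c_j'}(v)| > |N_{\mu_M^i(v)}(v)| \geq |M(v)_{c_j'}|$ guarantee these swaps are feasible. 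After adding one vertex each of $c_1'$ and $c_2'$, the two new most frequent colors in $M'(v)$ are $c_1'$ and $c_2'$, both now at $|M(v)_{\mu_M^1(v)}| + 1$, so $\MOV(M'(v)) = 0$ and $M'$ remains $0$-fair. As before, $M'(v')$ is unchanged for $v' \in V - v$ since we only removed vertices whose colors lie in $C_M'$, so those components stay $0$-fair.

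\medskip\noindent\textbf{Deriving the contradiction.}
With $M'$ in hand, I would argue $\sigma(M') > \sigma(M)$: the two top colors at $v$ each gained a vertex, contributing an increase of $2$ to the local term $|M'(v)_{\mu_{M'}^1(v)}| + |M'(v)_{\mu_{M'}^2(v)}|$ relative to $M$, while no other component's top-two contribution decreases. Hence $\sigma(M_{\mathcal{P}(M')}) \geq \sigma(M') > \sigma(M) = \sigma_{\mathcal{Q}}$, contradicting the top-color maximality of $\mathcal{Q}$. Therefore at most one color in $C'$ can exceed the neighborhood count of the top colors at $v$; designating that color (if any) as $c_v'$ yields the claimed inequality $|N_{\mu_M^i(v)}(v)| \geq |N_{c'}(v)|$ for all $c' \in C' - c_v'$.

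\medskip\noindent I expect the main obstacle to be pinning down the ``all but one'' bookkeeping precisely: I must verify that two dominating colors genuinely suffice to push $\sigma$ strictly upward while preserving $0$-fairness at $v$, and that rerouting both simultaneously does not accidentally create a new majority or violate left-perfectness at the donor vertices. The feasibility of both swaps rests on the chain $|M(v)_{c_j'}| \leq |M(v)_{\mu_M^i(v)}| \leq |N_{\mu_M^i(v)}(v)| < |N_{c_j'}(v)|$, so I would state this inequality chain carefully, as it is the hinge of the whole argument.
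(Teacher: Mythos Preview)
Your overall strategy matches the paper's proof exactly: assume that at some $v$ there are two distinct colors $c_1', c_2' \in C'$ with $|N_{c_j'}(v)| > |N_c(v)|$ where $c = \mu_M^i(v)$, reroute vertices of these colors into $v$ so that they become the new top two, and contradict the top-color maximality of $\mathcal{Q}$ via $\sigma(M') > \sigma(M)$.

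There is, however, a genuine quantitative gap. Your claim that ``the $0$-fairness of $M(v)$ forces the two most frequent colors to tie, so \emph{all colors present tie}'' is false: $\MOV(M(v)) = 0$ only says $|M(v)_{\mu_M^1(v)}| = |M(v)_{\mu_M^2(v)}|$; a third color may appear with strictly fewer occurrences (e.g.\ three red, three blue, one green is $0$-fair). Consequently, rerouting \emph{one} vertex of $c_1'$ and \emph{one} of $c_2'$ does not in general bring them to $|M(v)_{\mu_M^1(v)}| + 1$; it brings them to $|M(v)_{c_j'}| + 1$, which may still lie below the current maximum, so they need not become the new top two and $M'(v)$ need not be $0$-fair. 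The fix is precisely what the procedure in \Cref{lemma:k:cmax} actually does (and what the paper does here as well): for each $j \in \{1,2\}$, reroute $|M(v)_c| - |M(v)_{c_j'}| + 1$ vertices of color $c_j'$ into $v$. Your feasibility chain $|M(v)_{c_j'}| \le |M(v)_c| \le |N_c(v)| < |N_{c_j'}(v)|$ is exactly what guarantees that this many reroutes are available; after them both $c_1'$ and $c_2'$ sit at $|M(v)_c| + 1$, strictly above every other color in $M'(v)$, so $\MOV(M'(v)) = 0$ and the remainder of your argument (unchanged top-two at every $v' \ne v$, hence $\sigma(M') > \sigma(M)$) goes through.
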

\begin{proof}
	Let $C_M := \{ \mu_{M}^i(v) \in C \mid v \in V, i \in \{ 1, 2 \} \}$ be the set of colors which are one of the two most frequent colors in $M(v)$ for some $v \in V$ and let $C_M' := C \setminus C_M$ be the set of other colors.

	Assume for the sake of contradiction that there is some $v\in V$ and $i\in [2]$ with $\mu_{M}^i(v) = c \in C_M$ and two colors $c_v', c_v''\in C_M'$ such that $|N_{c_v'}(v)| > |N_c(v)|$ and $|N_{c_v''}(v)| > |N_c(v)|$.
	Recall that by the definition of $C'_M$ it holds that $|M(v)_c|\geq |M(v)_{c_v'}|$ and $|M(v)_c|\geq |M(v)_{c_v''}|$.
	Consider a left-perfect matching $M'$ obtained from $M$ as follows.
	Initially, let $M' := M$.
	For both colors $c'_v$, respectively, $c''_v$, we then repeat the following procedure $|M(v)_{c}| - |M(v)_{c_v'}| + 1$ (resp., $|M(v)_c| - |M(v)_{c_v''}| + 1$) times:
	We delete from $M'$ an arbitrary edge $\{ u, v' \} \in M'$ such that $u \in N(v) \subseteq U$ is a vertex of color $c_v'$ (resp., $c_v''$) and $v'$ is a vertex in $V - v$, and then add edge $\{ u, v \}$.
	Note that this is always possible, as $|M(v)_{c_v'}| \le|M(v)_{c}| \le |N(v)_c| < |N_{c_v'}(v)|$ and $|M(v)_{c_v''}| \le|M(v)_{c}|  \le |N(v)_c| < |N(v)_{c_v''}(v)|$.
	One can show that $M'$ is a left-perfect $0$-fair many-to-one matching: 
	For $0$-fairness note that the two most frequent colors in $M(v')$ for all $v'\in V-v$ remain unchanged, while in $M(v)$ we have that $\max^1_{c''' \in C} |M'(v)_{c'''}| - \max^2_{c''' \in C} |M'(v)_{c'''}| = |M'(v)_{c'}| - |M'(v)_{c''}| = 0$).
	However, the existence of $M'$ contradicts assumption that $\mathcal{Q}$ is top-color maximal (using an analogous argument as in the proof of \Cref{lemma:k:cmax}).
\end{proof}

With \Cref{lemma:k:cmax,lemma:k:cmaxl0} at hand, we are ready to give a randomized reduction from \textsc{$\mathcal{Q}$-MoV Fair Matching} to \textsc{Targeted MoV Fair Matching}.

\begin{restatable}{lemma}{kcc}
	\label{lemma:k:cc}
	Let $\mathcal{I}$ be an instance of \textsc{$\mathcal{Q}$-MoV Fair Matching}.
	We can compute in polynomial time an instance $\mathcal{J}$ of \textsc{Targeted MoV Fair Matching} such that (i) $\mathcal{J}$ is a yes-instance with probability at least $\delta$ (where $1 / \delta \in k^{O(k)}$) if $\mathcal{I}$ is a yes-instance~and $\mathcal{Q}$ is top-color maximal, and
		(ii) $\mathcal{J}$ is a no-instance if $\mathcal{I}$ is a no-instance.
\end{restatable}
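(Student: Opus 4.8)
The plan is to use color coding to ``guess'' the at most $2k$ colors that can occur as one of the two most frequent colors in a consistent matching, together with the group of $\mathcal{Q}$ each of them serves, and to hand these guesses to \textsc{Targeted MoV Fair Matching} through the maps $\mu^1,\mu^2$. Write $g=|\mathcal{Q}|\le 2k$, fix a canonical order $S_1,\dots,S_g$ of the groups of $\mathcal{Q}$, and for each $j$ let $V_j=\{v\in V\mid v^1\in S_j \text{ or } v^2\in S_j\}$ be the vertices whose (second) most frequent color is governed by $S_j$. Since each group corresponds to a single color, the colors that are (second) most frequent somewhere are exactly $g$ distinct colors $c_1,\dots,c_g$ (one per group). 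The whole difficulty is that these colors are unknown and $|C|$ is unbounded in $k$, which is precisely what color coding is designed to circumvent.

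The reduction is as follows. Sample a uniformly random labelling $\lambda\colon C\to[2k]$. For each group $S_j$ I read off a color $\tilde c_j$ from its class $\lambda^{-1}(j)$ by choosing, among all colors $c$ with $\lambda(c)=j$, one that \emph{dominates} at the group's vertices, i.e.\ $|N_{c}(v)|$ is maximal over $\lambda^{-1}(j)$ for every $v\in V_j$ simultaneously, breaking remaining ties by $\le_C$; if no such dominating color exists, or if the chosen $\tilde c_j$ are not pairwise distinct, I output a fixed trivial no-instance. Finally I set $\mu^i(v)=\tilde c_j$ for every $v^i\in S_j$ (carrying over the non-emptiness constraint if present) and output the resulting instance $\mathcal{J}$. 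All of this is clearly computable in polynomial time.

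Soundness, which must hold for every outcome of $\lambda$, is the easy direction. By construction the $\tilde c_j$ carry distinct labels and are hence pairwise distinct, so $\mu^1,\mu^2$ induce on $\mathcal{V}$ exactly the partition $\mathcal{Q}$. Thus any matching $M$ certifying that $\mathcal{J}$ is a yes-instance satisfies $\mu_M^i=\mu^i$ and therefore $\mathcal{P}(M)=\mathcal{Q}$; being $\ell$-fair and left-perfect, $M$ is a solution of $\mathcal{I}$, which proves (ii). For the probability bound I consider, assuming $\mathcal{I}$ is a yes-instance and $\mathcal{Q}$ is top-color maximal, a matching $M$ consistent with $\mathcal{Q}$ with $\sigma(M)=\sigma_{\mathcal{Q}}$ and its group colors $c_1,\dots,c_g$. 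The good event $E$ is that $\lambda(c_j)=j$ for all $j$; as the $c_j$ are distinct and $\lambda$ is independent across colors, $\Pr[E]=(2k)^{-g}\ge (2k)^{-2k}=:\delta$, so $1/\delta\in k^{O(k)}$.

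The main obstacle is the completeness direction: I must show that conditioned on $E$ the read-off is correct, i.e.\ that the color singled out in class $j$ is genuinely usable as the (second) most frequent color of $S_j$, so that $\mathcal{J}$ becomes a yes-instance. This is exactly where the strength of the standing-out lemmas enters. Conditioned on $E$, each class $\lambda^{-1}(j)$ contains $c_j$ as its only important color, and \Cref{lemma:k:cmax} guarantees that $c_j$ dominates every other (necessarily non-important) color of its class at all vertices of $V_j$; hence a dominator exists and the selection succeeds, returning $c_j$ up to colors with identical neighborhoods on $V_j$, for which the tie-break settles the choice. The remaining technical work is to argue that such a neighborhood-identical color may be substituted for $c_j$ without destroying feasibility, equivalently that the Targeted ILP of \Cref{prop:k:targeted} stays feasible; this is the only delicate point and is precisely what the inequality $|N_{\mu_M^i(v)}(v)|\ge|N_{c'}(v)|$ of \Cref{lemma:k:cmax} is tailored to provide. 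The case $\ell=0$ is handled analogously via the weaker \Cref{lemma:k:cmaxl0}: there $c_j$ may fail to dominate a single exceptional color at each $v\in V_j$, so I additionally color-code these at most $k$ exceptional colors, enlarging the palette to $O(k)$ labels and demanding in the good event that every exceptional color receives a label different from all group labels. Then no exceptional color ever competes inside a selecting class, the read-off again returns $c_j$, and $1/\delta$ remains in $k^{O(k)}$.
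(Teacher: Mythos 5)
Your construction is essentially the paper's: a uniformly random labelling of $C$ with one label per group of $\mathcal{Q}$ (plus spare labels for the exceptional colors when $\ell=0$), a read-off of one candidate color per group by maximizing $|N_c(v)|$, a trivial no-instance whenever the read-off is inconsistent with $\mathcal{Q}$, soundness because consistent $\mu^1,\mu^2$ force $\mathcal{P}(M)=\mathcal{Q}$, the probability bound $(2k)^{-|\mathcal{Q}|}$, and \Cref{lemma:k:cmax} (resp.\ \Cref{lemma:k:cmaxl0}) entering exactly as in the paper. The one place where you deviate is also where your argument breaks.

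The gap is in completeness, at the point you yourself flag as ``the only delicate point''. Under your event $E$ the class of group $j$ contains the true color $c_j$ together with arbitrary unimportant colors, and \Cref{lemma:k:cmax} only gives the non-strict inequality $|N_{c_j}(v)|\ge|N_{c'}(v)|$; hence your tie-break may select some $\tilde c_j\neq c_j$ with $|N_{\tilde c_j}(v)|=|N_{c_j}(v)|$ on all of $V_j$. You then assert that $\tilde c_j$ ``may be substituted for $c_j$ without destroying feasibility'' and that this is ``precisely what the inequality of \Cref{lemma:k:cmax} is tailored to provide.'' It is not: that inequality is a statement about the matching $M$ in which $c_j$ \emph{is} a top color; it yields no matching in which $\tilde c_j$ is. The natural exchange (pulling $\tilde c_j$-colored neighbors of $v$ into $M(v)$) can raise the count of $\tilde c_j$ at $v$ only to $|N_{\tilde c_j}(v)|=|N_{c_j}(v)|$, which may equal $|M(v)_{c_j}|$; then $c_j$ stays in $M(v)$ with the same multiplicity and becomes the \emph{second} most frequent color at $v$, clashing with the prescribed $\mu^2(v)$. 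Concretely, take $\ell=1$, $V=\{v,w\}$, $\mathcal{Q}=\{\{v^1\},\{v^2,w^1\},\{w^2\}\}$, colors $\tilde c<_C c_1<_C c_2<_C c_g$, and vertices: $a$ (color $c_1$) and $b$ (color $c_2$) adjacent only to $v$; $d$ (color $\tilde c$) adjacent to $v$ and $w$; $p_1,p_2$ (color $c_2$) and $q_1,q_2$ (color $c_g$) adjacent only to $w$. The matching $M(v)=\{a,b\}$, $M(w)=\{d,p_1,p_2,q_1,q_2\}$ is $1$-fair, consistent with $\mathcal{Q}$, and $\sigma$-maximal (every matching has $\sigma\le 6$), so $\mathcal{Q}$ is top-color maximal; $\tilde c$ is unimportant and ties $c_1$ at $v$. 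If $\lambda(\tilde c)$ equals the label of $\{v^1\}$ (which $E$ does not forbid), your read-off outputs $\mu^1(v)=\tilde c$, $\mu^2(v)=\mu^1(w)=c_2$, $\mu^2(w)=c_g$; but no left-perfect matching realizes this pattern ($a,b$ are forced into $M(v)$, and with or without $d$ the second color at $v$ is $c_1$, never $c_2$), so $\mathcal{J}$ is a no-instance although $E$ occurred. Duplicating $\tilde c$ into many tied colors even drives your construction's success probability below any $k^{-O(k)}$ bound. The paper avoids ever needing a substitution: it reads off the candidate \emph{per vertex} as $\arg\max_{c\colon \lambda(c)=S_v^i}|N_c(v)|$ and argues, using that the same fixed order $\le_C$ breaks ties both in this $\arg\max$ and in the definition of $\mu^i_M$, that the selected color equals $\mu^i_M(v)$ exactly, so that the $\sigma$-maximal matching $M$ itself witnesses that $\mathcal{J}$ is a yes-instance. (Ties are genuinely the crux here and the paper disposes of them in a single sentence; but your write-up needs either that exact-identification argument or a proven substitution lemma, and currently has neither.)
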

\begin{proof}
	We first present a proof for the case $\ell>0$ and afterwards a proof for the case $\ell=0$. 
    \proofsubparagraph*{Case: $\mathbf{\ell>0}$}
	We describe a polynomial-time procedure to construct an instance $\mathcal{J}$ of \textsc{Targeted MoV Fair Matching} from a given instance $\mathcal{I} = (G=(U \cupdot V,E),C,\col,\ell)$ of \textsc{$\mathcal{Q}$-MoV Fair Matching}.
	We randomly assign each color $c\in C$ to one of the subsets in $\mathcal{Q}$ with the intended meaning that if we assign color $c\in C$ to $S\in \mathcal{Q}$, then in a matching $M$ in $\mathcal{I}$ one of the following holds: 
	(i) for $v^1\in S$, $c$ appears as the most frequent color in $M(v)$ and for $v^2\in S$, $c$ appears as the second most frequent color in $M(v)$ and $c$ appears nowhere else as the most or second most frequent color or
    (ii) $c$ does not appear as the most or second most frequent color in $M(v)$ for any $v\in V$.
	Formally let $\lambda \colon C \to \mathcal{Q}$ be a function assigning each color to a subset in $\mathcal{Q}$, where each assignment is chosen uniformly and independently at random.\footnote{In the language of color coding, the function $\lambda$ is often described as assignments of ``colors'' (the ``colors'' in color coding are different from the colors used here).} 
	For every $v\in V$ and $i\in [2]$ with $v^i\in S$ for some $S \in \mathcal{Q}$, we find $c_{v}^i = \arg \max |N_c(v)|$, where $\arg \max$ is taken over all colors $c$ with $\lambda(c) = S$. 
	Then, we construct an instance $\mathcal{J}$ of \textsc{Targeted MoV Fair Matching}, where $\mu^i(v) = c_{v}^i$ for every $v^i \in \mathcal{V}$.
	If a hypothetical matching $M$ such that $\mu_M^i(u) = \mu^i(u)=c_{v}^i$ for $v\in V$ and $i\in [2]$ is not consistent with $\mathcal{Q}$, we let $\mathcal{J}$ be a trivial no-instance.
	Clearly, the construction of $\mathcal{J}$ takes polynomial time.

	Suppose that $\mathcal{I}$ is a yes-instance and $\mathcal{Q}$ is top-color maximal.
	We show that in this case $\mathcal{J}$ is a yes-instance with probability at least $\delta$ with $\delta^{-1} \in k^{O(k)}$.
	Let $M$ be an $\ell$-fair left-perfect many-to-one matching $M$ consistent with $\mathcal{Q}$ with $\sigma(M_{\mathcal{Q}}) = \sigma_{\mathcal{Q}}$.
	Assume that $\lambda(\mu_{M}^i(v)) = S_v^i$ holds for every $v \in V$ and $i \in [2]$, where $S_v^i \in \mathcal{Q}$ denotes the subset in $\mathcal{Q}$ to which $v^i$ belongs.
	We claim that under this assumption on $\lambda$, the instance $\mathcal{J}$ constructed by our procedure is a yes-instance.
	In fact, we show that $M$ is a solution of $\mathcal{J}$ (which also directly implies that there is a solution consistent with $\mathcal{Q}$ and thus that no trivial no-instance is returned).
	Since $M$ is an left-perfect $\ell$-fair matching in $G$, we only have to show that $\mu^i_M(v) = \mu^i(v) = c_v^i$ for every $v \in V$ and $i \in [2]$.

	Let $C_M := \{ \mu^i_M(v) \in C \mid v \in V, i \in [2] \}$ be the set of colors which are among the two most frequent colors in $M(v)$ for some $v \in V$ and let $C_M' := C \setminus C_M$ be the set of other colors.
	By \Cref{lemma:k:cmax}, for every $v \in V$, $c' \in C_M'$, and $i \in [2]$, we have $|N_{\mu_{M}^i(v)}(v)| \ge |N_{c'}(v)|$.
	Since we always break ties according to a fixed linear order (including when we find $\mu_M^i(v)$), we have $\mu_{M}^i(v) = \arg \max_{c \in C_M' \cup \{ \mu_M^i(v) \}} |N_c(v)|$.
	Recall that when constructing $\mathcal{J}$ we have defined $c_v^i = \arg \max |N_c(v)|$, where the maximum is taken over the set $C_v^i := \{ c \in C \mid \lambda(c) = S_v^i \}$.
	By our assumption that $\lambda(\mu_{M}^i(v)) = S_v^i$ for every $v \in V$ and $i \in [2]$, we have $\mu_M^i(v) \in C_v^i$.
	We also have $c \notin C_v^i$ for every $c \in C_M \setminus \{ \mu_M^i(v) \}$, which implies that $C_v^i \subseteq C_M' \cup \{ \mu_M^i(v) \}$.
	Consequently, we obtain $c_v^i = \arg \max_{c \in C_v^i} |N_c(v)|= \mu_M^i(v)$ for every $v \in V$ and $i \in [2]$.
	It follows that $M$ is a solution of $\mathcal{J}$.

	Finally, observe that the probability that $\lambda(\mu_{M}^i(v)) = S_v^i$ for every $v\in V$ and $i\in [2]$ is at least $|\mathcal{Q}|^{-|\mathcal{Q}|} \ge (2k)^{-2k}$, since $\lambda$ is chosen uniformly and independently at random. 
	Thus, if $\mathcal{I}$ is a yes-instance, $\mathcal{J}$ is a yes-instance with probability at least $(2k)^{-2k}$.
	
	If $\mathcal{J}$ is a yes-instance, then $\mathcal{I}$ is also a yes-instance, as, by construction of $\mathcal{J}$, if $\mathcal{J}$ is not a trivial no instance, a solution $M$ for $\mathcal{J}$ needs to satisfy $\mathcal{P}(M)=\mathcal{Q}$ and is thus also a solution for $\mathcal{I}$.
	
	\proofsubparagraph*{Case: $\mathbf{\ell=0}$}
	For $\ell = 0$, we choose $\lambda \colon C \to \mathcal{Q} \cup \{ \emptyset \}$ uniformly and independently at random.
	Informally speaking, we wish to have every color in $\{c'_v\mid v\in V\}$ (where $c_v'$ is the color from \Cref{lemma:k:cmaxl0}) map to $\emptyset$ in $\lambda$. 
	This means that these colors cannot be ``falsely'' picked as the most or second most frequent for some $v$. 
	Thus, \Cref{lemma:k:cmax} serves as a viable analogue of \Cref{lemma:k:cmaxl0} for the remaining part of the proof. 
	The event that $\lambda(c_v') = \emptyset$ for all $c \in \{c'_v\mid v\in V\}$ occurs with probability at least $(|\mathcal{Q}| + 1)^{-k}$.
	Moreover, the probability of the event that $\lambda(\mu_{M}^i(v)) = S$ for all $v \in V$ and $i \in [2]$ is at least $(|\mathcal{Q}| + 1)^{-|\mathcal{Q}|} \ge (2k + 1)^{-2k}$.
	Since these two events are independent, using a similar argumentation as for $\ell>0$ one can show that the lemma also holds for $\ell = 0$.
\end{proof}

We repeat the algorithm of \Cref{lemma:k:cc} independently $\delta^{-1} \in k^{O(k)}$ times on the given instance $\mathcal{I}$ of \textsc{$\mathcal{Q}$-Fair Matching}.
If $\mathcal{I}$ is a yes-instance and $\mathcal{Q}$ is top-color maximal, then at least one instance of \textsc{Targeted Mov Fair Matching} returned by the algorithm is a yes-instance with probability at least $1 - (1 - \delta)^{\delta^{-1}} \ge 1 - 1 / e$.
By \Cref{prop:k:targeted}, a yes-instance of \textsc{Targeted Mov Fair Matching} can be recognized in $O^\star(k^{O(k^2)})$ time.
We thus have a randomized algorithm to solve \textsc{$\mathcal{Q}$-MoV Fair Matching} in $O^\star(k^{O(k^2)})$ time if $\mathcal{Q}$ is top-color maximal.
Recall that $\mathcal{Q}$ is a partition of a $2k$-element set.
So there are at most $k^{O(k)}$ choices for $\mathcal{Q}$ (one of which needs to be top-color maximal), which gives us the following theorem (we remark that our algorithm can be derandomized using a standard method \cite{DBLP:books/sp/CyganFKLMPPS15}):

\begin{theorem}
	There is a randomized $O^\star(k^{O(k^2)})$-time algorithm to solve \textsc{MoV Fair Matching} even with the non-emptiness constraint.
\end{theorem}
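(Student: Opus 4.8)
The plan is to assemble the theorem from the three ingredients already established: the observation that \textsc{MoV Fair Matching} is a yes-instance iff \textsc{$\mathcal{Q}$-MoV Fair Matching} is a yes-instance for some partition $\mathcal{Q}$ of $\mathcal{V}$, the randomized reduction of \Cref{lemma:k:cc} from \textsc{$\mathcal{Q}$-MoV Fair Matching} to \textsc{Targeted MoV Fair Matching}, and the ILP-based algorithm of \Cref{prop:k:targeted} for the latter. First I would set up the outer loop: since $\mathcal{V}$ has $2k$ elements, there are only $k^{O(k)}$ partitions, so the algorithm can enumerate every candidate $\mathcal{Q}$. For each such $\mathcal{Q}$ it invokes the reduction of \Cref{lemma:k:cc} to build a \textsc{Targeted MoV Fair Matching} instance $\mathcal{J}$ and decides $\mathcal{J}$ via \Cref{prop:k:targeted} in $O^\star(k^{O(k^2)})$ time; the algorithm outputs yes as soon as some $\mathcal{J}$ is accepted and no otherwise. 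The non-emptiness constraint is inherited for free, since \Cref{prop:k:targeted} already handles it.

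For soundness I would invoke \Cref{lemma:k:cc}(ii) partition-by-partition: whenever \textsc{$\mathcal{Q}$-MoV Fair Matching} is a no-instance the produced $\mathcal{J}$ is a no-instance, so on a global no-instance (where every $\mathcal{Q}$-variant is a no-instance) the algorithm never erroneously accepts. For completeness I would argue that a global yes-instance admits a top-color maximal partition that is itself a yes-instance: if $M$ is any solution, then $\mathcal{Q} = \mathcal{P}(M)$ satisfies $\mathcal{M}_{\mathcal{Q}} \ne \emptyset$, and since every left-perfect matching $M$ has $\sigma(M) > 0$ we get $\sigma_{\mathcal{Q}} > 0$; hence the top-color maximal $\hat{\mathcal{Q}}$ obeys $\sigma_{\hat{\mathcal{Q}}} \ge \sigma_{\mathcal{Q}} > 0$, which by the convention $\sigma_{\mathcal{Q}'} = 0$ for $\mathcal{M}_{\mathcal{Q}'} = \emptyset$ forces $\mathcal{M}_{\hat{\mathcal{Q}}} \ne \emptyset$, i.e. $\hat{\mathcal{Q}}$-MoV Fair Matching is a yes-instance. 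Because $\hat{\mathcal{Q}}$ is top-color maximal, \Cref{lemma:k:cc}(i) then guarantees that the reduction applied to $\hat{\mathcal{Q}}$ succeeds with probability at least $\delta$, where $\delta^{-1} \in k^{O(k)}$.

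To turn this one-sided success into a high-probability guarantee, I would nest a probability-amplification loop inside the enumeration: for the fixed (unknown) partition $\hat{\mathcal{Q}}$, repeating the reduction of \Cref{lemma:k:cc} independently $\delta^{-1} \in k^{O(k)}$ times drives the failure probability down to $(1 - \delta)^{\delta^{-1}} \le 1/e$, so the overall success probability is at least $1 - 1/e$. Multiplying the costs---$k^{O(k)}$ partitions, $k^{O(k)}$ repetitions, and $O^\star(k^{O(k^2)})$ per \textsc{Targeted MoV Fair Matching} call---collapses to $O^\star(k^{O(k^2)})$, since the dominant factor is $k^{O(k^2)}$. Finally, I would remove the randomness by the standard derandomization of color coding, replacing the random $\lambda$ of \Cref{lemma:k:cc} by an explicit family of $(|C|, O(k))$-perfect hash functions (equivalently, splitters) as in \cite{DBLP:books/sp/CyganFKLMPPS15}; this incurs only a $k^{O(k)} \log|C|$ overhead and preserves the stated running time.

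I expect the only genuinely delicate point to be the completeness argument that the top-color maximal partition is a yes-instance, as this is precisely where the definition $\sigma_{\mathcal{Q}} = 0$ for $\mathcal{M}_{\mathcal{Q}} = \emptyset$ does the real work of ensuring that at least one \emph{correct} partition (for which \Cref{lemma:k:cc} is applicable) is examined; everything else is bookkeeping of running times and error probabilities.
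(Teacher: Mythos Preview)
Your proposal is correct and follows essentially the same approach as the paper: enumerate all $k^{O(k)}$ partitions $\mathcal{Q}$, for each apply the randomized reduction of \Cref{lemma:k:cc} (amplified by $\delta^{-1}$ independent repetitions) and solve the resulting \textsc{Targeted MoV Fair Matching} instance via \Cref{prop:k:targeted}, with derandomization via standard splitter/perfect-hash-family machinery. Your completeness argument---using $\sigma_{\mathcal{Q}} = 0$ iff $\mathcal{M}_{\mathcal{Q}} = \emptyset$ to conclude that the top-color maximal partition is itself a yes-instance---is in fact more explicit than the paper, which simply asserts that ``one of which needs to be top-color maximal'' without spelling this out.
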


\section{Complexity Dichotomies with respect to $|C|$ and Maximum Degree}
\label{sec:dic}

In this section, we study the computational complexity of \textsc{Max-Min/MoV Fair Matching} for fixed values of $\Delta_U$, $\Delta_V$, and $|C|$.
Recall that $\Delta_U$ (resp., $\Delta_V$) is the maximum degree of all vertices in $U$ (resp., $V$).
We identify several computational dichotomies regarding these parameters (see \Cref{fig:dichotomies}).

\subsection{Dichotomy with respect to $|C|$}

We first show a dichotomy on $|C|$.
In particular, \textsc{Max-Min/MoV Fair Matching} is polynomial-time solvable for $|C| = 2$ (even if there are arbitrary size lower bounds), while it is NP-hard for $|C|=3$:

\begin{restatable}{theorem}{dicc}
	\label{thm:dic:c}
	\textsc{Max-Min/MoV Fair Matching} is polynomial-time solvable for $|C| = 2$ and NP-hard for $|C| \ge 3$. The polynomial-time result even holds in the presence of arbitrary size lower bounds.
\end{restatable}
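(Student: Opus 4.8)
The plan is to prove the two directions separately. For the tractable side, note first that when $|C|=2$ the two measures coincide: for any $U'\subseteq U$ both $\MOV(U')$ and $\MaM(U')$ equal $\bigl||U'_{c_1}|-|U'_{c_2}|\bigr|$, so a single algorithm handles both. I would phrase the task as finding, for each $v\in V$, numbers $a_v=|M(v)_{c_1}|$ and $b_v=|M(v)_{c_2}|$ such that the two color classes are simultaneously realizable by a left-perfect matching and $|a_v-b_v|\le\ell$ for every $v$.

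The key idea for $|C|=2$ is a \emph{reverse-flow} reduction to an ordinary feasible-flow problem, which turns the coupling between the two colors into a plain arc capacity. I would build a network with source $s$ and sink $t$: route color $c_1$ \emph{forward} as $s\to u\to v$ (arc $s\to u$ with lower and upper bound $1$ for each $u\in U_{c_1}$, arcs $u\to v$ of capacity $1$ for $\{u,v\}\in E$), and route color $c_2$ \emph{backward} as $v\to u\to t$ (arcs $v\to u$ of capacity $1$, arc $u\to t$ with lower and upper bound $1$ for each $u\in U_{c_2}$). Then the net flow that the real edges push into each $v$ equals $a_v-b_v$. Introducing a hub $h$ with arcs $v\to h$ and $h\to v$ of capacity $\ell$ each, flow conservation at $v$ forces the net exchange with $h$ to equal $a_v-b_v$, so the capacity-$\ell$ arcs enforce exactly $|a_v-b_v|\le\ell$; adding arcs $s\to h$ and $h\to t$ of infinite capacity lets the global imbalance $|U_{c_1}|-|U_{c_2}|$ be routed. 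A feasible integral flow (whose existence, for a flow with lower bounds, is decidable in polynomial time) saturates every unit-bounded arc, hence corresponds to a left-perfect matching, while the capacity-$\ell$ arcs guarantee $\ell$-fairness; conversely any $\ell$-fair left-perfect matching yields such a flow, and integrality of network flow makes the counts, and thus the actual matching, integral.

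\textbf{The main obstacle} on the tractable side is incorporating the \emph{arbitrary size lower bounds} $|M(v)|=a_v+b_v\ge p_v$. The difficulty is that in the reverse-flow network $a_v$ enters $v$ while $b_v$ leaves it, so the sum $a_v+b_v$ is not the flow on any single arc (only the difference $a_v-b_v$ is), and one checks that replacing the sum bound by individual lower bounds on $a_v$ and $b_v$ is strictly stronger and hence incorrect. I would resolve this either by refining the $v$-gadget, splitting $v$ into $v_{\mathrm{in}}\to v_{\mathrm{out}}$ so that the paired flow $\min(a_v,b_v)$ travels on a dedicated arc while the excess (of size $|a_v-b_v|\le\ell$) is diverted to $h$, and then exploiting $a_v+b_v=2\min(a_v,b_v)+|a_v-b_v|$, or, more robustly, by casting the whole feasibility question (two realizable color vectors, linked by $\ell$-fairness and by the size constraints) as a submodular-flow/polymatroid feasibility problem, which is polynomial-time solvable; here the realizability of each color class is precisely the condition $\sum_{v\in W}a_v\ge|\nu_{c}(W)|$ supplied by \Cref{lemma:k:forward,lemma:k:backward}.

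For the hardness side I would give, for every fixed $|C|=t\ge 3$, a reduction from $t$-\textsc{Dimensional Matching} (NP-hard for all $t\ge3$) to \textsc{Max-Min Fair Matching} with $\ell=0$: put $U=X_1\cupdot\dots\cupdot X_t$ colored by coordinate class, let $V$ be the set of hyperedges, add an edge $\{u,e\}$ whenever $u\in e$, and set $\ell=0$. A $0$-fair left-perfect matching forces every $v\in V$ to receive equally many vertices of each color; since a hyperedge is adjacent to exactly one vertex per class, each $v$ receives either none or exactly one of every color, and a counting argument (all $t m$ elements must be matched using $m$ one-of-each hyperedges) shows that feasibility is equivalent to the existence of a perfect $t$-dimensional matching. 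For \textsc{MoV Fair Matching} the same reduction fails, since with $\ell=0$ a hyperedge receiving $(1,\dots,1,0)$ already has $\MOV=0$; instead I would establish NP-hardness for $|C|=3$ by a dedicated reduction (or invoke the known hardness of \textsc{Fair Regrouping}) and then extend to $|C|>3$ by adding inert colors on a disjoint gadget: because \textsc{MoV} depends only on the two most frequent colors, absent extra colors do not affect the fairness of the original right vertices, whereas the analogous padding is blocked for \textsc{Max-Min} (an absent color contributes $0$ to the minimum), which is exactly why the uniform $t$-\textsc{Dimensional Matching} reduction is used there.
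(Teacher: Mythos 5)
Your hardness direction is essentially the paper's own argument and is correct: the paper also reduces from \textsc{3-Dimensional Matching} with $\ell=0$ for \textsc{Max-Min} (\Cref{pr:hardnessMaxMin}) and invokes Stoica et al.\ for \textsc{MoV}; your uniform reduction from $t$-\textsc{Dimensional Matching} for every $t\ge 3$ is a clean way to get all $|C|\ge 3$ at once, and your padding trick for \textsc{MoV} works (just note that the disjoint gadget must itself be $\ell$-fair, e.g., for $\ell=0$ attach one vertex of the new color together with one vertex of an old color to a fresh right-side vertex, since a singleton has $\MOV$ equal to one). Your reverse-flow network for $|C|=2$ \emph{without} size bounds is also correct, and it is a genuinely different and arguably cleaner route than the paper's reduction to matching with gadgets.

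The gap is exactly at the point you flagged as the main obstacle, and neither of your fixes closes it; note that the theorem explicitly claims polynomial-time solvability \emph{with arbitrary size lower bounds}. After splitting $v$, the constraint $|M(v)|\ge p_v$ becomes $2m_v+d_v\ge p_v$, where $m_v$ is the flow on the dedicated arc and $d_v\le\ell$ is the hub exchange---a linear constraint with \emph{unequal} coefficients on two distinct arcs, which plain flow (arc bounds plus conservation) cannot express. Concretely, the only plain-flow implementation of your identity is a lower bound $L$ on the dedicated arc: in the split gadget one has $(a_v,b_v)=(m_v+e^1_v,\,m_v+e^2_v)$ with $e^1_v,e^2_v\in[0,\ell]$, so soundness (ruling out $e^1_v=e^2_v=0$) forces $L\ge\lceil p_v/2\rceil$, but then completeness fails: for $p_v=4$, $\ell=2$ the pair $(a_v,b_v)=(3,1)$ satisfies both constraints yet cannot be written with $m_v\ge 2$. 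Routing the excess through the dedicated arc instead fails symmetrically. The deeper reason is that flow conservation couples an in-flow and an out-flow only through their \emph{difference}; the sum $a_v+b_v$ of an in-flow and an out-flow is not the flow on any arc, and reversing the orientation of one color class swaps which of the two per-vertex constraints (sum versus difference) is expressible, so you can never encode both simultaneously. Your fallback to ``submodular-flow/polymatroid feasibility'' inherits the same problem: the realizability of each color class is indeed a base-polyhedron constraint for the supermodular function $|\nu_c(\cdot)|$, but submodular-flow boundary conditions are again difference constraints, and you give no formulation in which the per-vertex sum constraint $a_v+b_v\ge p_v$ appears---this is not a routine step. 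This is precisely why the paper leaves the flow world entirely: in \Cref{prop:dic:cpoly} it encodes the per-vertex set $\{(m_1,m_2)\colon m_1+m_2\ge p,\ |m_1-m_2|\le\ell\}$ by nonbipartite matching gadgets $H_1,\dots,H_6$ (with multiplicities depending on the sign and parity of $p-\ell$) and proves by a double induction (claim~($\star$)) that exactly the feasible pairs are realizable as degree patterns. Some such nonbipartite-matching or general-factor technology appears unavoidable for the size-constrained claim, and your proposal does not supply it.
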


We first prove that \textsc{Max-Min/MoV Fair Matching} is NP-hard for $|C| \ge 3$ and then that it is polynomial-time solvable for $|C| = 2$.

\subparagraph*{Hardness.}

Stoica et al.~\cite{DBLP:conf/atal/StoicaCDG20} showed that \textsc{MoV Fair Matching} is NP-hard for $|C| = 3$.
We show that \textsc{Max-Min Fair Matching} is also NP-hard even if there are three colors.

\begin{proposition} \label{pr:hardnessMaxMin}
	\textsc{Max-Min Fair Matching} is NP-hard for $|C| = 3$, even if $\Delta_U \le 3$ and $\Delta_V \le 3$.
\end{proposition}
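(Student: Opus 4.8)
The plan is to reduce from a bounded-occurrence variant of \textsc{3-Dimensional Matching} (\textsc{3DM}) that is known to remain NP-hard even when every element occurs in at most three triples (this restricted variant was shown NP-complete by Kann, 1991). Recall that such an instance consists of three disjoint sets $X$, $Y$, $Z$, each of size $n$, together with a set $T \subseteq X \times Y \times Z$ of triples in which each element of $X \cupdot Y \cupdot Z$ appears in at most three triples; the question is whether there is a subset $T' \subseteq T$ of pairwise disjoint triples covering $X \cupdot Y \cupdot Z$.

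Given such an instance, I would construct a \textsc{Max-Min Fair Matching} instance as follows. The left side is $U := X \cupdot Y \cupdot Z$, and I color every vertex in $X$ (resp.\ $Y$, $Z$) with color $1$ (resp.\ $2$, $3$), so that $|C| = 3$. The right side $V$ contains one vertex $w_t$ for every triple $t \in T$, and for $t = (x, y, z)$ I add exactly the three edges $\{x, w_t\}$, $\{y, w_t\}$, $\{z, w_t\}$. Finally I set $\ell := 0$. Each vertex $w_t \in V$ then has exactly three neighbors, one of each color, so $\Delta_V \le 3$; and each vertex in $U$ has degree equal to the number of triples containing it, hence $\Delta_U \le 3$. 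The construction is clearly computable in polynomial time.

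The crux is the following correspondence, which I would verify in both directions. Because $\ell = 0$, every $v \in V$ must satisfy $\MaM(M(v)) = 0$, i.e.\ all three colors occur equally often in $M(v)$. Since $w_t$ has precisely one neighbor of each color, the only $0$-fair options for $M(w_t)$ are the empty set or the full triple $\{x, y, z\}$; in particular a matched $w_t$ grabs all three of its neighbors. Thus a left-perfect $0$-fair matching $M$ induces the selection $T' := \{\, t \in T \mid M(w_t) \neq \emptyset \,\}$, and left-perfectness forces each element of $U$ to be matched to exactly one selected triple, so $T'$ is a set of disjoint triples covering $X \cupdot Y \cupdot Z$, i.e.\ a solution to \textsc{3DM}. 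Conversely, from a solution $T'$ I obtain a left-perfect $0$-fair matching by assigning, for each $t = (x, y, z) \in T'$, the three elements $x, y, z$ to $w_t$ (all other right vertices stay empty, which is $0$-fair). This equivalence establishes the proposition.

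The main obstacle is obtaining the degree bound $\Delta_U \le 3$: reducing from unrestricted \textsc{3DM} would yield right vertices of degree three but left vertices of unbounded degree. Starting instead from the bounded-occurrence variant of \textsc{3DM} resolves this directly. The remaining work is the routine verification that the $0$-fairness constraint over three colors exactly encodes the ``all-or-nothing'' behavior of each triple vertex, which is immediate from the degree structure.
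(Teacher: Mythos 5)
Your proposal is correct and matches the paper's own proof essentially verbatim: both reduce from the occurrence-bounded variant of \textsc{3-Dimensional Matching}, build one right-side vertex per triple adjacent to its three element vertices (colored by which of $X$, $Y$, $Z$ they come from), set $\ell = 0$, and exploit that $0$-fairness forces each triple vertex to be matched to all of its neighbors or none. The only cosmetic difference is the citation for the bounded-occurrence variant's NP-hardness (the paper uses Garey--Johnson, Problem SP1).
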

\begin{proof}
	We reduce from the following NP-hard variant of \textsc{3-Dimensional Matching}:
	Given three disjoint sets $X$, $Y$, and $Z$ and a subset $T \subseteq X\times Y \times Z$ of triplets where each element from $X\cup Y \cup Z$ occurs in at most three triples from $T$, the question is whether there is a perfect 3D-matching, i.e., a subset of triples $T'\subseteq T$ such that each element from $X\cup Y \cup Z$ occurs exactly in one triple from $T'$ \cite[Problem SP1]{DBLP:books/fm/GareyJ79}.

	Given an instance $(X\cupdot Y \cupdot Z, T)$ of \textsc{3-Dimensional Matching}, we construct an instance of \textsc{Max-Min Fair Matching} as follows.
	We set $\ell=0$ and introduce three colors $C = \{ \alpha, \beta, \gamma \}$.
	For each triple $t\in T$, we add a vertex $v_t$ to $V$.
	For each element $s\in X$ (resp., $s\in Y$ and $s\in Z$), we introduce a vertex $u_s$ of color $\alpha$ (resp., $\beta$ and $\gamma$) to $U$ and connect it to the vertices from $V$ corresponding to triples in which $s$ occurs.
	Note that $\Delta_U, \Delta_V \le 3$.

	Suppose that $M$ is a fair matching for the constructed instance.
	Note that if $M(v_t)$ is non-empty for some $t\in T$, then all vertices from $U$ corresponding to the three elements from $t$ must be assigned to $M(v_t)$, as $M(v_t)$ is $0$-fair.
	Thus, each $0$-fair left-perfect many-to-one matching in the constructed \textsc{Max-Min Fair Matching} instance corresponds to a perfect 3D-matching in the given \textsc{3-Dimensional Matching} instance and vice versa.
\end{proof}

\subparagraph*{Algorithm.}

We now show that \textsc{Max-Min/MoV Fair Matching} is polynomial-time solvable for $|C| = 2$.
Our algorithm works for arbitrary size lower bounds.

\begin{proposition}
	\label{prop:dic:cpoly}
	\textsc{Max-Min/MoV Fair Matching} can be solved in polynomial time for $|C| = 2$ even with arbitrary size lower bounds.
\end{proposition}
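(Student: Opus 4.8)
The plan is to exploit that with only two colors the two fairness measures coincide. Writing $a_v = |M(v)_{c_1}|$ and $b_v = |M(v)_{c_2}|$ for the two colors $c_1, c_2$, one checks that $\MOV(M(v)) = \MaM(M(v)) = |a_v - b_v|$, so a matching is $\ell$-fair for \emph{either} measure if and only if $|a_v - b_v| \le \ell$ for every $v \in V$; with (arbitrary, per-vertex) size lower bounds $p_v$ we additionally need $a_v + b_v \ge p_v$. Hence it suffices to decide whether there is a left-perfect many-to-one matching meeting, for every $v$, the \emph{balance} constraint $|a_v - b_v| \le \ell$ and the \emph{size} constraint $a_v + b_v \ge p_v$. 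I would phrase this as a feasibility question for an integer flow (a circulation with lower and upper arc bounds), which is solvable in polynomial time by a single max-flow computation and whose solutions can be taken integral.

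First I would build the network so that left-perfectness and balance are captured. Left-perfectness is enforced by an arc with lower and upper bound both equal to $1$ from the source to every $u \in U$, forcing each $u$ to be matched. To capture the balance constraint I would route the two colors in \emph{opposite} directions through each $v$: the color-$c_1$ vertices send flow into $v$ (so the inflow equals $a_v$) while the color-$c_2$ vertices draw flow out of $v$ (so the outflow equals $b_v$). Conservation at $v$ then forces the net flow exchanged between $v$ and a single auxiliary balancing node to equal $a_v - b_v$; giving the two balancing arcs at $v$ capacity $\ell$ confines this net value to $[-\ell, \ell]$, which is exactly $|a_v - b_v| \le \ell$. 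A feasible integral flow decomposes directly into the claimed per-color matchings (alternatively, one reads off the per-color counts and invokes \Cref{lemma:k:backward}), while \Cref{lemma:k:forward} gives the converse; so \emph{without} the size constraint this already yields a polynomial-time algorithm.

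The hard part will be incorporating the size lower bound $a_v + b_v \ge p_v$ \emph{together with} the balance constraint, since the two pull in opposite directions in flow terms: the difference $a_v - b_v$ is a signed quantity, read off as a net flow when the two colors traverse $v$ in opposite directions, whereas the sum $a_v + b_v$ is an unsigned quantity, read off as a throughput when both colors traverse $v$ in the same direction, and a single commodity assigns each color only one direction. To reconcile them I would split $v$ into an in-part and an out-part joined by a throughput arc carrying the lower bound $p_v$, and let the matched vertices of both colors cross this arc while only the minority color additionally passes the capacity-$\ell$ balancing arcs, so that the throughput arc measures $a_v + b_v$ and the balancing arcs still measure $a_v - b_v$. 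A short case distinction on which color is in the majority would verify that the gadget is feasible \emph{precisely} for the pairs $(a_v, b_v)$ with $a_v + b_v \ge p_v$ and $|a_v - b_v| \le \ell$; the network again has integral bounds, so max-flow decides feasibility in polynomial time. Should the single gadget prove too delicate, an equivalent fallback is to keep the balance-enforcing network and instead maximise, by a convex-cost (min-cost) flow, the reward $\sum_{v \in V} \min(a_v + b_v, p_v)$, which attains $\sum_{v \in V} p_v$ if and only if all size lower bounds can be met simultaneously. Either way, the step I would scrutinise most carefully is exactly this simultaneous, sound encoding of the sum and the difference at each vertex.
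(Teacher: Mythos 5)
Your opening observations are correct: for $|C|=2$ both measures reduce to $|a_v-b_v|\le\ell$, and your balance-only network (the two colors traversing $v$ in opposite directions, with capacity-$\ell$ balancing arcs) is a sound polynomial-time algorithm when no size bounds are present. The gap is exactly at the step you yourself flagged, and it is not merely ``delicate'': no single-commodity flow gadget of the kind you describe can enforce $a_v+b_v\ge p_v$ and $|a_v-b_v|\le\ell$ simultaneously. By Hoffman's circulation theorem, for any gadget that meets the two color classes only through attachment arcs of fixed orientation, the set of realizable pairs $(a_v,b_v)$ is cut out by upper and lower bounds on $a_v$, on $b_v$, and on the \emph{one} signed combination determined by the orientation: if both colors enter the gadget in the same direction (so that your throughput arc measures $a_v+b_v$), the two flows have merged and no arc or cut can any longer see $a_v-b_v$; if they traverse in opposite directions, only $a_v-b_v$ is visible and $a_v+b_v$ is not. ``Only the minority color passes the balancing arcs'' is not a condition a fixed network can impose --- the network topology cannot depend on which color ends up in the minority. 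A concrete witness that the target region escapes both orientations: for $p_v=4$, $\ell=2$ the pairs $(3,1)$ and $(1,3)$ are feasible, and any region defined by bounds on $a$, $b$ and $a-b$ (or on $a$, $b$ and $a+b$) containing these two points must also contain $(1,1)$, which violates $a_v+b_v\ge 4$. So the feasible region at $v$ is provably not the projection of any integral flow gadget, and your main construction cannot exist.

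The fallback does not repair this. The reward $\sum_{v}\min(a_v+b_v,p_v)$ is concave but not separable over the arcs of the balance-enforcing network ($a_v$ and $b_v$ live on distinct, oppositely oriented arcs), so convex-cost flow machinery does not apply; and no separable capped reward $r_1(a_v)+r_2(b_v)$ can simulate it, since already for $p_v=2$ and $\ell\ge 2$ the requirements $r_1(0)+r_2(2)=r_1(2)+r_2(0)=r_1(2)+r_2(2)=2$ force $r_1(0)+r_2(0)=2$, which must instead be strictly less than $2$. This obstruction is precisely why the paper's proof of \Cref{prop:dic:cpoly} leaves the flow world altogether: it reduces to maximum matching in a \emph{general} (non-bipartite) graph, replacing each $v$ by a disjoint union of constant-size gadgets $H_1,\dots,H_6$ (\Cref{fig:cc}, with multiplicities chosen per \Cref{table:ni}), whose contribution sets include non-flow-expressible pieces such as $\{(1,0),(0,1)\}$ (``exactly one'') and $\{(0,0),(1,1)\}$ (``both or neither''); an induction over $(p,\ell)$ then shows that the Minkowski sum of these pieces realizes exactly $\{(a,b): a+b\ge p,\ |a-b|\le\ell\}$. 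To salvage your plan you need this kind of disjunction/parity power --- general matching or \textsc{General Factor} --- rather than circulations.
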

\begin{proof}
Let $\mathcal{I}=(G=(U \cupdot V,E), C, \col, \ell)$ be an instance of \textsc{Max-Min/MoV Fair Matching}  with $C = \{ \alpha, \beta \}$ and size lower bound $\smin$.
We will describe a polynomial-time algorithm that determines whether there is an $\ell$-fair matching $M$ such that $|M(v)|\geq \smin$ for every $v \in V$ by reducing to \textsc{Maximum Weighted Matching}.

For this, we will construct a graph $G'=(V',E')$ and a subset $S \subseteq V'$ of vertices such that there is a one-to-one matching $M'$ in $G'$ where all vertices from $S$ are matched if and only if $\mathcal{I}$ is a yes-instance.
Note that given $G'$ and $S$, we can determine in polynomial time whether such a matching $M'$ exists, as this problem can be easily reduced to an instance of \textsc{Maximum Weighted Matching}.

\proofsubparagraph*{Construction.}
We give the general description of how we construct $G'$ and $S$.
We define a graph $G_v'$ for every $v \in V$, whose connected components all contain at most four vertices.
In every connected component of $G_v'$, we mark at most one vertex by $1^{\star}$ and at most on vertex by $2^{\star}$. 
Moreover, we mark some vertices by $s^{\star}$.
To construct $G'$, we add the vertices $U$ and the graph $G_v'$ for every $v \in V$.
Moreover, for every $u \in U_{\alpha}$ (resp., $u\in U_{\beta}$) and $v \in N_G(u)$, we add an edge between $u$ and every vertex marked by $1^{\star}$ (resp., $2^{\star}$) in $G_v$.
We include $U$ and every vertex marked by $s^{\star}$ into $S$.

\begin{figure}[t]
	\centering
	\begin{tikzpicture}[xscale=.95]
		\begin{scope}
			\node[vertex, label={[label distance=0.1cm]90:$1^{\star}$}] (v11) at (0, 0) {};
			\node[vertex, label={[label distance=0.1cm]270:$s^{\star}$}] (v12) at (1, 0) {};
			\node[vertex, label={[label distance=0.1cm]270:$s^{\star}$}] (v13) at (2, 0) {};
			\node[vertex, label={[label distance=0.1cm]90:$2^{\star}$}] (v14) at (3, 0) {};
			\node at (3.5, -.7) {$H_1$};
			\draw (v11) -- (v12) -- (v13) -- (v14);
		\end{scope}
		\begin{scope}[shift={(4.3, 0)}]
			\node[vertex, label={[label distance=0.1cm]90:$1^{\star}$}, label={[label distance=0.1cm]270:$s^{\star}$}] (v11) at (0, 0) {};
			\node[vertex] (v12) at (1, 0) {};
			\node[vertex, label={[label distance=0.1cm]90:$2^{\star}$}, label={[label distance=0.1cm]270:$s^{\star}$}] (v14) at (2, 0) {};
			\node at (2.5, -.7) {$H_2$};
			\draw (v11) -- (v12) -- (v14);
			\draw (v11) to [out=45,in=135] (v14);
		\end{scope}
		\begin{scope}[shift={(7.6, 0)}]
			\node[vertex, label={[label distance=0.1cm]90:$1^{\star}$}, label={[label distance=0.1cm]270:$s^{\star}$}] (v11) at (0, 0) {};
			\node[vertex] (v12) at (1, 0) {};
			\node[vertex, label={[label distance=0.1cm]90:$2^{\star}$},label={[label distance=0.1cm]270:$s^{\star}$}] (v14) at (2, 0) {};
			\node at (2.5, -.7) {$H_3$};
			\draw (v11) -- (v12) -- (v14);
		\end{scope}
		\begin{scope}[shift={(10.9, 0)}]
			\node[vertex, label={[label distance=0.1cm]90:$1^{\star}$}] (v11) at (0, 0) {};
			\node[vertex, label={[label distance=0.1cm]270:$s^{\star}$}] (v12) at (1, 0) {};
			\node[vertex, label={[label distance=0.1cm]90:$2^{\star}$}] (v14) at (2, 0) {};
			\node at (2.5, -.7) {$H_4$};
			\draw (v11) -- (v12) -- (v14);
		\end{scope}
	\end{tikzpicture}

	\begin{tikzpicture}
		\begin{scope}
			\node[
				vertex,
				label={[label distance=0.1cm]90:$1^{\star}$},
				label={[label distance=0.1cm]270:$s^{\star}$}
			] (v11) at (0, 0) {};
			\node[
				vertex,
				label={[label distance=0.1cm]90:$2^{\star}$},
				label={[label distance=0.1cm]270:$s^{\star}$}
			] (v11) at (1, 0) {};
			\node at (1.5, -.7) {$H_5$};
		\end{scope}
		\begin{scope}[shift={(3, 0)}]
			\node[
				vertex,
				label={[label distance=0.1cm]100:$1^{\star}$},
				label={[label distance=0.1cm]80:$2^{\star}$},
				label={[label distance=0.1cm]270:$s^{\star}$}
			] (v11) at (0, 0) {};
			\node at (.5, -.7) {$H_6$};
		\end{scope}
	\end{tikzpicture}
	\caption{Possible connected components $H_1,\dots, H_6$ of $G'_v$.}
	\label{fig:cc}
\end{figure}
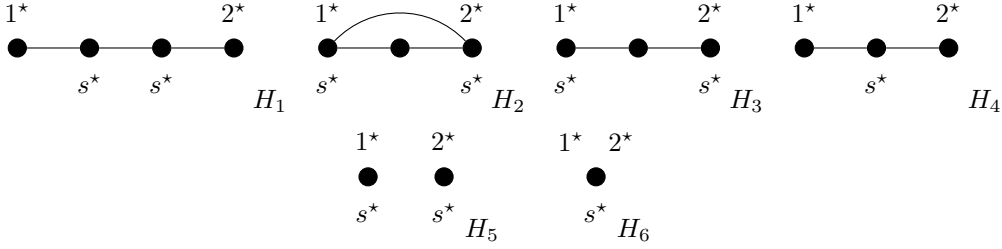

Now we describe how to construct the $G_v'$'s.
In fact, $G'_v$ will look the same for all $v\in V$ and only depends on $\ell$ and $p$ (and $n$).
Every connected component of $G_v'$ is one of the six graphs $H_1, \cdots, H_6$ (with markers) depicted in \Cref{fig:cc}, where we include $s_i$ copies of $H_i$ into $G_v'$. 
The values of $s_i$ that are specified in \Cref{table:ni}, where we have four possible cases depending on whether $\ell>0$, $p < \ell$ and the parity of $p - \ell$.

\begin{table}
	\begin{center}
		\begin{tabularx}{.78\textwidth}{lccccccc}
			\toprule
			& $s_1$ & $s_2$ & $s_3$ & $s_4$ & $s_5$ & $s_6$ \\
			\midrule
			$\ell = 0$ & $n$ & 0 &  0 & 0 & $\lceil \frac{p}{2} \rceil$ & 0 \\
			$0 \leq p < \ell$ & $n$ & 1 & 0 & $\ell - p - 1$ &  0 & $p$ \\
			$p \ge \ell > 0$ and $p - \ell \equiv 0$ & $n$ & 0 & 1 & 0 & $\frac{1}{2}(p - \ell) $ & $\ell - 1$ \\
			$p \ge \ell > 0$ and $p - \ell \equiv 1$ & $n$ & 0 &  0 & 1 & $\frac{1}{2}(p - \ell + 1) $ & $\ell - 1$ \\
			\bottomrule
		\end{tabularx}
	\end{center}
	\caption{The choice of $s_i$ for $i \in [6]$.}
	\label{table:ni}
\end{table}

\proofsubparagraph*{Correctness.}
For every graph $H_i$, let $\mathcal{A}_i$ be a set of row vectors $x \in \{ 0, 1 \}^2$ such that the graph obtained by deleting from $H_i$ a vertex marked by $1^{\star} $ if $x_{1} = 1$ and a vertex marked by $2^{\star} $ if $x_{2} = 1$ has a matching in which all vertices marked by $s^{\star}$ are matched.
Note that
\begin{equation}
	\begin{aligned}
	&\mathcal{A}_1 = \left\{ [0, 0], [1, 1] \right\}, \quad 
	\mathcal{A}_2 = \left\{ [0, 0], [0, 1], [1, 0], [1, 1] \right\},\quad
	\mathcal{A}_3 = \left\{ [0, 1], [1, 0], [1, 1] \right\}, \\
	&\mathcal{A}_4 = \left\{ [0, 0], [0, 1], [1, 0] \right\}, \quad 
	\mathcal{A}_5 = \left\{ [1, 1] \right\}, \quad
	\mathcal{A}_6 = \left\{ [1, 0], [0, 1] \right\}.
	\end{aligned} \label{eq:ai}
\end{equation}

Let $\mathcal{P}$ be the collection of pairs $(i, x)$, where $i \in [6]$ and $x \in \mathcal{A}_i$.
For $m_{1}, m_{2},p,\ell\in \mathbb{N}$, we say that a vector $a \in \mathbb{N}^{\mathcal{P}}$ \emph{represents} $[m_{1}, m_{2}]$ with respect to $p$ and $\ell$ if
\begin{equation}
	\label{eq:ni}
	[m_{1}, m_{2}] = \sum_{i \in [6]} \sum_{x \in \mathcal{A}_i} a_{(i,x)} x \text{ and }
	\sum_{x \in \mathcal{A}_i} a_{(i,x)} = s_i \text{ for each } i \in [6],
\end{equation}
where $s \in \mathbb{N}^{6}$ is a vector such that $s_i$ is defined according to \Cref{table:ni}.
We will show the following: 
\begin{equation} \tag{$\star$}
\parbox{0.9\textwidth}{
	 For $m_1,m_2,\smin, \ell\in \mathbb{N}$, we have $m_1 + m_2 \ge p$ and $|m_{1} - m_{2}| \le \ell$ if and only if there exists a vector $a \in \mathbb{N}^{\mathcal{P}}$ representing $[m_{1}, m_{2}]$ with respect to $p$ and $\ell$.
}
\end{equation}
From ($\star$) we can easily establish the correctness of our construction: 
Assume that an $\ell$-fair matching $M$ in $G$ is given with $|M(v)|\geq \smin$ for all $v\in V$. 
Then for some $v\in V$, let $m_1=|M(v)_{\alpha}|$ and $m_2=|M(v)_{\beta}|$. 
We then have $m_1 + m_2 \ge p$ and $|m_{1} - m_{2}| \le \ell$.
By ($\star$), there is a vector $a \in \mathbb{N}^{\mathcal{P}}$ representing $[m_{1}, m_{2}]$. 
This implies that we can find a matching $M'_v$ in $G'$ where all vertices from $G'_v$ that are in $S$ are matched and the vertices from $U$ that are matched to some vertex in $G'_v$ are exactly $M(v)$.
More precisely, for each $(i, x) \in \mathcal{P}$, the vertex marked by $1^{\star}$ (resp., $2^{\star}$) should be matched to some vertex from $U_\alpha$ (resp., $U_\beta$) in $M'_v$ if $x_1 = 1$ (resp., $x_2 = 1$) in $a_{(i, x)}$ copies of $H_i$ in $G_v'$.
Notably, the constructed matching $M'_v$ is disjoint for all $v\in V$. 
Thus, $M'=\bigcup_{v\in V} M'_v$ is a solution to the constructed instance. 
For the reverse direction, assume that there is a matching $M'$ in $G'$ matching all vertices from $S$. 
Then, for $v\in V$, let $m_1$ (resp., $m_2$) be the number of vertices from $U_{\alpha}$ (resp., $U_{\beta}$) matched to a vertex from $G'_v$. 
As $M'$ matches all vertices from $S$, there exists $a \in \mathbb{N}^{\mathcal{P}}$ representing $[m_{1}, m_{2}]$. 
By ($\star$), we obtain $m_1 + m_2 \ge p$ and $|m_{1} - m_{2}| \le \ell$. 
Thus, the matching $M$ where for every $v\in V$ we match to $v$ all vertices from $U$ that are matched to a vertex from $G'_v$ in $M'$ is a solution to the given instance.   

We now turn to the proof of ($\star$). We first consider the \emph{if} part.
Assume that for some $m_1,m_2,\smin, \ell\in \mathbb{N}$ we have a vector $a \in \mathbb{N}^\mathcal{P}$ representing $[m_{1}, m_{2}]$ with respect to $p$ and $\ell$.
We show that $m_1 + m_2 \ge p$ and $|m_{1} - m_{2}| \le \ell$.
Since $m_{1} + m_{2} = \sum_{i \in [6]} \sum_{x \in \mathcal{A}_i} a_{(i,x)} (x_1 + x_2)$ and  $m_{1} - m_{2} = \sum_{i \in [6]} \sum_{x \in \mathcal{A}_i} a_{(i,x)} (x_1 - x_2)$, we have by \Cref{eq:ai},
\begin{equation}
	\begin{aligned}
		\label{eq:goal}
		&m_{1} + m_{2} \ge \sum_{i \in [6]} \sum_{x \in \mathcal{A}_i} a_{(i,x)} \cdot \min_{x \in \mathcal{A}_i} (x_1 + x_2) = \sum_{i \in [6]} s_i \cdot \min_{x \in \mathcal{A}_i} (x_1 + x_2) = s_3 + 2s_5 + s_6 \text{ and } \\
		&|m_{1} - m_{2}| \le \sum_{i \in [6]} \sum_{x \in \mathcal{A}_i} a_{(i,x)} \cdot \max_{x \in \mathcal{A}_i} |x_1 - x_2| \le \sum_{i \in [6]} s_i \cdot \max_{x \in \mathcal{A}_i} |x_1 - x_2| = s_2 + s_3 + s_4 + s_6. 
	\end{aligned}
\end{equation}
It is straightforward to verify that $s_3 + 2s_5 + s_6 \geq p$ and $s_2 + s_3 + s_4 + s_6 \le \ell$ hold for every case given in \Cref{table:ni}.
Thus, we have that $m_1+m_2\geq p$ and $|m_{1} - m_{2}|\leq \ell$, which concludes the first direction.

We now turn to the \emph{only-if} part of ($\star$).
Suppose for some $m_1,m_2,\smin, \ell\in \mathbb{N}$ that $m_{1} + m_{2} \ge p$ and $|m_{1} - m_{2}| \le \ell$.
Without loss of generality, assume that $m_1 \le m_2$.
We show that there exists a vector $a \in \mathbb{N}^{\mathcal{P}}$ representing $[m_{1}, m_{2}]$ with respect to $p$ and $\ell$.
For each $(i, x) \in \mathcal{P}$, let $e_i^x \in \mathbb{N}^{\mathcal{P}}$ be a vector such that $(e_i^x)_{(i,x)} = 1$ and all other entries are 0.
Further, for each $i \in [6]$, let $e_i$ be a vector such that $e_i = 1$ and all other entries are $0$. 

For $\ell = 0$, we have $m_1 = m_2 \geq \lceil \frac{1}{2} p \rceil$. 
Let $t=m_1-\lceil \frac{1}{2} p \rceil$. 
Then, $\lceil \frac{1}{2} p \rceil e_5^{(1, 1)}+t e_1^{(1,1)}+(n-t) e_1^{(0,0)}$ represents $[m_1, m_2]$.

For $\ell > 0$, we make a case distinction based on the relation of $p$ and $\ell$. 
In the following, when we speak of some $p,\ell,m_1,m_2$, then we always assume that $m_{1} + m_{2} \ge p$ and $|m_{1} - m_{2}| \le \ell$.
First we consider the case $p<\ell$.
We prove that there is an $a\in \mathbb{N}^{\mathcal{P}}$ representing $[m_1,m_2]$ with respect to $p$ and $\ell$ by induction on $(\ell - p, \ell)$ assuming that $\ell>0$ and $p<\ell$. 

 \proofsubparagraph*{Base case.} Our base case is $(\ell - p, \ell)=(1,1)$ (as $\ell>0$ and $\ell>p$), implying that $\ell=1$ and $p=0$. 
In this case, we have $s = [n, 1, 0, 0, 0, 0]$.
Note that $m_2 \in \{ m_1, m_1 + 1 \}$.
If $m_2 = m_1$, then $(n - m_1) e_1^{(0, 0)} + m_1 e_1^{(1, 1)}+e_2^{(0,0)}$ represents $[m_1, m_2]$ with respect to $p$ and $\ell$.
If $m_2 = m_1 + 1$, then $(n - m_1) e_1^{(0, 0)} + m_1 e_1^{(1, 1)} + e_2^{(0, 1)}$ represents $[m_1, m_2]$ with respect to $p$ and $\ell$.

\proofsubparagraph*{Inductive step.}
As our induction hypothesis, we assume that for $(x,y) = (\ell', \ell' - p') \in \mathbb{N}^2$ with $y\geq x\geq 1$, the following holds:
If $p' \le m_{1}' + m_{2}'$ and $|m_{1}' - m_{2}'| \le \ell'$ for $m_1', m_2' \in \mathbb{N}$, then there exists a vector $a' \in \mathbb{N}^{\mathcal{P}}$ representing $[m_{1}', m_{2}']$ with respect to $p'$ and $\ell'$.
Our inductive step consists of two parts: 

\textbf{Case 1. }$(x,y) = (\ell', \ell' - p') \to(x+1,y) = (\ell, \ell - p)$:  This means that we have $\ell=\ell'+1$ and $p=p'+1$.  
Let $s \in \mathbb{N}^6$ (resp., $s' \in \mathbb{N}^6$) be the vectors defined by \Cref{table:ni} for $p$ and $\ell$ (resp., $p'$ and $\ell'$).  
Note that $s= s'+ e_6$.
Let $b$ be a vector representing $[m_1, m_2-1]$ with respect to $p'=p-1$ and $\ell'=\ell - 1$ (which exists by the induction hypothesis, as we have that $\ell'=y\geq 1$, $|m_1-m_2|\leq \ell$, and $m_1+m_2\geq p$).
Then, $b+e_6^{(0,1)}$ represents $[m_1,m_2]$ with respect to $p$ and $\ell$.

\textbf{Case 2. }$(x,y) = (\ell', \ell' - p') \to(x+1,y+1) = (\ell, \ell - p)$:
This means that we have $\ell=\ell'+1$ and $p=p'$.  
Let $s \in \mathbb{N}^6$ (resp., $s' \in \mathbb{N}^6$) be the vectors defined by \Cref{table:ni} for $p$ and $\ell$ (resp., $p'$ and $\ell'$).  
Note that $s = s' + e_4$.
If $m_2 - m_1 \le \ell - 1$, then let $b$ be a vector representing $[m_1, m_2]$ with respect to $p'=p$ and $\ell'=\ell - 1$ (which exists by the induction hypothesis).
Then, $b + e_4^{(0, 0)}$ represents $[m_1, m_2]$.
Otherwise, we have $m_2 - m_1 = \ell$.
Then, $(n - m_1) e_1^{(0, 0)} + m_1 e_1^{(1, 1)} + \sum_{i \in [2, 4, 6]}s_i  e_i^{(0, 1)}$ represents $[m_1, m_2]$ with respect to $p$ and $\ell$, as $\sum_{i \in [2, 4, 6]}s_i = \ell$.

\medskip

We now move on to the second case $p\geq \ell$. 
We prove that there is an $a\in \mathbb{N}^{\mathcal{P}}$ representing $[m_1,m_2]$ with respect to $p$ and $\ell$ by induction on $(p-\ell, p)$ assuming that $p\geq \ell>0$. 
Notably we thus have $p-\ell<p$.

\proofsubparagraph*{Base case.}Our base cases are 
$(p-\ell,p)\in \{(0,1),(1,2)\}$. 
For $(p-\ell,p)=(0,1)$, we have $p=1$ and $\ell=1$. 
In this case, we have $s = [n, 0, 1, 0, 0, 0]$. 
Note that $m_2 \in \{ m_1, m_1 + 1 \}$ and $m_2>0$.
If $m_2 = m_1$, then $(n - m_1) e_1^{(0, 0)} + (m_1-1) e_1^{(1, 1)}+  e_3^{(1, 1)}$ represents $[m_1, m_2]$ with respect to $p$ and $\ell$.
If $m_2 = m_1 + 1$, then $(n - m_1) e_1^{(0, 0)} + m_1 e_1^{(1, 1)} + e_3^{(0, 1)}$ represents $[m_1, m_2]$ with respect to $p$ and $\ell$.

For $(p-\ell,p)=(1,2)$, we have $p=2$ and $\ell=1$. 
In this case, we have $s = [n, 0, 0, 1, 1, 0]$.  
Note that $m_2 \in \{ m_1, m_1 + 1 \}$ and hence $m_1 \ge \lceil \frac{1}{2}(p - 1) \rceil = 1$.
If $m_2 = m_1$, then $(n - m_1) e_1^{(0, 0)} + (m_1-1) e_1^{(1, 1)}+ e_4^{(0,0)}+ e_5^{(1, 1)}$ represents $[m_1, m_2]$ with respect to $p$ and $\ell$.
If $m_2 = m_1+1$, then $(n - m_1) e_1^{(0, 0)} + (m_1-1) e_1^{(1, 1)}+ e_4^{(0,1)}+ e_5^{(1, 1)}$ represents $[m_1, m_2]$ with respect to $p$ and $\ell$.

\proofsubparagraph*{Inductive step.}
As our induction hypothesis, we assume that for $(x,y) = (p' - \ell', p') \in \mathbb{N}^2$ with $y\geq x\geq 1$, the following holds:
If $p' \le m_{1}' + m_{2}'$ and $|m_{1}' - m_{2}'| \le \ell'$ for $m_1', m_2' \in \mathbb{N}$, then there exists a vector $a' \in \mathbb{N}^{\mathcal{P}}$ representing $[m_{1}', m_{2}']$ with respect to $p'$ and $\ell'$.
Our inductive step consists of two parts: 

\textbf{Case 1. }$(x,y) = (p' - \ell', p') \to(x,y+1) = (p - \ell, p)$:
This means that we have $\ell=\ell'+1$ and $p=p'+1$.  
Let $s \in \mathbb{N}^6$ (resp., $s' \in \mathbb{N}^6$) be the vectors defined by \Cref{table:ni} for $p$ and $\ell$ (resp., $p'$ and $\ell'$).
Note in this case we have $s=s'+s_6$. 
Let $b$ be a vector representing $[m_1, m_2-1]$ with respect to $p'=p-1$ and $\ell'=\ell - 1$ (which exists by the induction hypothesis, as we have that $\ell'=y-x\geq 1$, $|m_1-m_2|\leq \ell$, and $m_1+m_2\geq p$).
Then, $b+e_6^{(0,1)}$ represents $[m_1,m_2]$ with respect to $p$ and~$\ell$.

\textbf{Case 2. }$(x,y) = (p' - \ell', p') \to(x+2,y+2) = (p - \ell, p)$:
This means that we have $\ell=\ell'$ and $p=p'+2$.  
Let $s \in \mathbb{N}^6$ (resp., $s' \in \mathbb{N}^6$) be the vectors defined by \Cref{table:ni} for $p$ and $\ell$ (resp., $p'$ and $\ell'$).
Note in this case we have $s=s'+s_5$.
Let $b$ be a vector representing $[m_1-1,m_2-1]$ with respect to $p'=p-2$ and $\ell'$ (which exists by the induction hypothesis, as $m_1-1+m_2-1\geq p'$). 
Then, $b + e_5^{(1, 1)}$ also represents $[m_1, m_2]$ with respect to $p$ and $\ell$.

This concludes the proof of ($\star$).
\end{proof}

\subsection{Dichotomy with respect to Maximum Degree}

We now present complexity dichotomies with respect to $\Delta_U$ and $\Delta_V$. All NP-hardness results here hold for three colors, while all polynomial-time results hold for an arbitrary number of colors.
We start by considering \textsc{MoV} (\Cref{sec:dich_mov}) and afterwards turn to \textsc{Max-Min} (\Cref{sec:dich_mm}).

\subsubsection{MoV} \label{sec:dich_mov}

\begin{restatable}{theorem}{dicmov}
	\label{thm:dic:mov}
	\textsc{MoV Fair Matching} is polynomial-time solvable if $\Delta_U \le 1$ or $\Delta_V \le 4$ (even with the non-emptiness constraint) and NP-hard otherwise.
\end{restatable}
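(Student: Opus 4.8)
The plan is to prove the dichotomy in three parts: the trivial tractable case $\Delta_U \le 1$, the substantive tractable case $\Delta_V \le 4$, and NP-hardness whenever $\Delta_U \ge 2$ and $\Delta_V \ge 5$. For $\Delta_U \le 1$ the argument is immediate: every $u \in U$ has at most one neighbor, so in any left-perfect matching each $u$ is \emph{forced} to be matched to its unique neighbor (and if some $u$ has degree $0$ we reject outright). Hence there is a single candidate matching, and we only test whether each $M(v)$ is $\ell$-fair, which takes polynomial time. Conversely, the NP-hardness region is closed upwards in both degree parameters, so it suffices to prove hardness at the single corner $\Delta_U = 2$, $\Delta_V = 5$.

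For $\Delta_V \le 4$, the plan is to reduce to maximum-weight general matching, following the gadget philosophy of the $|C|=2$ proof (the components $H_1,\dots,H_6$). The key observation is that $|N_G(v)| \le 4$, so $M(v)$ is a subset of at most four neighbors and its \textsc{MoV} depends only on the per-color counts $(m_c)_c$, where $c$ ranges over the at most four colors present in $N_G(v)$ and $\sum_c m_c \le 4$; moreover, within a single color all neighbors of $v$ are interchangeable. I would build for each $v$ a constant-size gadget $G_v'$ whose ``slots'' are grouped by color and which carries mandatory ($s^\star$) vertices, connect every neighbor $u$ of color $c$ to all $c$-slots of $G_v'$, add $U$ and the $s^\star$-vertices to a set $S$, and ask for a matching saturating $S$. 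As in the $|C|=2$ case, the achievable per-color usage vectors form a Minkowski sum of the individual components' contributions, and I would choose the components so that this sum is \emph{exactly} the set of $\ell$-fair count vectors (excluding the all-zero vector when the non-emptiness constraint is imposed). Since there are only finitely many color-multiset shapes of $N_G(v)$ (partitions of integers at most $4$) and finitely many regimes of $\ell$, realizability reduces to a finite case analysis, after which maximum-weight general matching solves the instance in polynomial time.

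For NP-hardness, I would exploit the structure forced by $\Delta_U = 2$: each $u$ has at most two neighbors, so the $U$-vertices behave like colored edges (or pendants) of a multigraph on $V$, and a left-perfect matching amounts to \emph{orienting} each colored edge toward one endpoint, with fairness demanding that each endpoint's received color-multiset has small \textsc{MoV}. I would reduce from a degree-restricted NP-hard problem (for instance a bounded-occurrence variant of \textsc{3-SAT}, or a \textsc{3-Dimensional Matching}/\textsc{Exact Cover} variant), using three colors and a small threshold (most plausibly $\ell = 1$), with variable/choice gadgets and consistency gadgets engineered so that every $V$-vertex has degree exactly $5$ and every $U$-vertex degree at most $2$; a fair orientation would then correspond to a satisfying assignment. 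The intuitive reason the boundary sits at $4$ versus $5$ is that the fair count-families of degree-$5$ vertices can encode genuinely hard (hypergraph-matching-like) selection constraints that are not expressible as Minkowski sums realizable by matching gadgets.

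The main obstacle I anticipate is twofold: on the tractable side, establishing that \emph{every} $\ell$-fair family for degree at most $4$ is matching-realizable (a finite but delicate verification, since not all set families admit such gadgets), and on the hardness side, designing gadgets that simultaneously respect $\Delta_U \le 2$ and $\Delta_V \le 5$ while faithfully transmitting the encoded constraints through \textsc{MoV}-fairness. The degree budget on the $V$-side is especially tight, since the fairness tests and the propagation of truth values must all be squeezed into degree-$5$ vertices.
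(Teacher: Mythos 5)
Your skeleton matches the paper's: $\Delta_U\le 1$ is trivial, NP-hardness is closed upwards so the single corner $\Delta_U=2,\Delta_V=5$ suffices, and the paper indeed proves that corner by a reduction from a bounded-occurrence \textsc{Satisfiability} variant with three colors, using variable and clause gadgets (the paper uses $\ell=0$, not $\ell=1$ as you guess). But on the hardness side you supply no gadgets at all, so that half is a plan rather than a proof, and the decisive problem lies in the tractable half $\Delta_V\le 4$.

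The obstacle you flag there (``not all set families admit such gadgets'') is not a delicate finite verification --- it is a provable impossibility for the framework you propose, and it is exactly why the paper abandons matching and reduces instead to \textsc{General Factor}, invoking Cornu\'ejols' theorem that degree lists with gaps of size at most one are polynomial-time tractable. Concretely, take $\ell=0$ and $N_G(v)=\{u_1,u_2,u_3\}$ with three distinct colors: the $0$-fair subsets are $\emptyset$, the three pairs, and the triple, while singletons are unfair. No gadget in the $|C|=2$ style (one slot per color, a slot deleted exactly when matched externally, a prescribed mandatory set $R$ to be saturated internally, families combined by Minkowski sums) realizes this family $\{\emptyset,\{a,b\},\{a,c\},\{b,c\},\{a,b,c\}\}$. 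Indeed, first, any two slots cannot both be outside $R$, since a witness matching for the pair they form would restrict to a witness for a singleton; so at least two slots, say $a,b$, lie in $R$. If all three lie in $R$, compare a witness $M_0$ for $\emptyset$ with a witness $M_3$ for the triple: each slot is covered by $M_0$ but not $M_3$, hence is a path endpoint of $M_0\triangle M_3$, and swapping along such a path shows it must end at another slot (any other ending produces a witness for a singleton); but three endpoints cannot be paired up among themselves. If instead $c\notin R$, run the same exchange between $M_0$ and a witness $M_{ac}$ for $\{a,c\}$: the path starting at $a$ ends either at a vertex covered only by $M_{ac}$, or at a vertex covered only by $M_0$, which then lies outside $R\setminus\{a\}$ because $M_{ac}$ saturates $R\setminus\{a\}$; either way the swap yields a witness for $\{a\}$, a contradiction. (The same parity obstruction persists if you add extra slots per color, and weights do not rescue the construction: without mandatory vertices the best completion value is monotone under deleting more slots, while $0$-fairness is non-monotone.) The paper's gadgets sidestep all of this because \textsc{General Factor} lists such as $\{0,2,3\}$ and $\{0,1,3\}$ encode precisely these non-matching-realizable families, and the non-emptiness constraint is handled by adjusting the lists. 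So to complete your proof you would have to replace the matching target by this strictly more expressive tool (or prove an analogue of Cornu\'ejols' theorem yourself); as written, your plan cannot be carried out.
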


Note that polynomial-time solvability for $\Delta_U=1$ is obvious, as in this case we know exactly where to assign each vertex from the left side.
To prove \Cref{thm:dic:mov}, we first show that \textsc{MoV Fair Matching} is polynomial-time solvable for $\Delta_V \le 4$ (even with the non-emptiness constraint) (\Cref{prop:dic:mov:algo}) and that it is NP-hard for $\Delta_U = 2$ and $\Delta_V = 5$ (\Cref{prop:mov:hardness}).

\subparagraph*{Algorithm.}
We start with giving an algorithm that solves \textsc{MoV} in polynomial-time for $\Delta_V \le 4$.
Our algorithm is a polynomial-time reduction to a polynomial-time solvable special case of \textsc{General Factor}, which we will define in the proof.

\begin{restatable}[\appmark]{proposition}{movdel}
	\label{prop:dic:mov:algo}
	\textsc{MoV Fair Matching} is polynomial-time solvable for $\Delta_V \le 4$ even with the non-emptiness constraint.
\end{restatable}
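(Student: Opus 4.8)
The plan is to reduce \textsc{MoV Fair Matching} with $\Delta_V \le 4$ to the special case of \textsc{General Factor} in which every vertex $w$ of the host graph $H$ carries a degree list $B(w) \subseteq \{0,1,\dots,\deg_H(w)\}$ all of whose \emph{gaps} (maximal runs of consecutive forbidden degrees sandwiched between two allowed ones) have length at most one; a classical result of Cornuéjols shows this special case is solvable in polynomial time. Here a feasible \emph{factor} is an edge subset $F \subseteq E(H)$ with $\deg_F(w) \in B(w)$ for every $w$. The skeleton of the reduction keeps all vertices of $U$, each with list $B(u) = \{1\}$ (which has no gaps), so that a factor selects for every $u$ exactly one incident edge, i.e.\ matches $u$ to a unique vertex of $V$. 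For every $v \in V$ I attach a gadget whose boundary edges are the original edges incident to $v$, and which is responsible for admitting precisely the subsets of $N_G(v)$ that are $\ell$-fair (and, under the non-emptiness constraint, nonempty).

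Since neighbours of $v$ of the same colour are interchangeable as far as $\MOV$ is concerned (and the tie-breaking in $\max^1,\max^2$ does not affect the $\MOV$ value), I would first route, for each colour $c$ occurring in $N_G(v)$, all color-$c$ edges into a single \emph{colour port} $p_{v,c}$; its degree on the boundary side then equals the number $m_{v,c}$ of color-$c$ vertices assigned to $v$, and the internal structure of the gadget must constrain the vector $(m_{v,c})_c$ to be $\ell$-fair. The decisive observation is that $\Delta_V \le 4$ forces $|M(v)| \le 4$, hence $m_{v,c} \le 4$, $\sum_c m_{v,c} \le 4$, and there are only finitely many possibilities for the colour profile of $N_G(v)$ (a partition of some $d \le 4$) and for the relevant thresholds $\ell \in \{0,1,2,3\}$. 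Indeed, for $\ell \ge 4$ every subset is trivially fair because $\MOV \le \max^1 \le |M(v)| \le 4$, and the problem collapses to finding any left-perfect many-to-one matching obeying the size bounds, which is itself a gap-$\le 1$ instance (intervals $B(u)=\{1\}$, $B(v)$ an interval) solvable by flow.

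For each of the finitely many pairs (profile, $\ell$), the set of $\ell$-fair selections is an explicit small family, and I would exhibit a fixed gadget — the ports plus a constant number of auxiliary vertices and edges — whose factors, restricted to the boundary, realise exactly this family while keeping every degree list gap-$\le 1$. For example, for the profile $(1,1,1,1)$ the fair family is a pure size constraint such as $\{0,2,3,4\}$ (or $\{2,3,4\}$ under non-emptiness), which can be placed directly on $v$; profiles such as $(2,1,1)$ genuinely couple colours (e.g.\ two equal-colour vertices are unfair while the two distinct singletons are fair) and require the port structure. Correctness then follows routinely: from an $\ell$-fair matching one reads off a factor by taking the used edges, and from a factor one recovers an $\ell$-fair left-perfect many-to-one matching because $B(u)=\{1\}$ and each gadget admits only fair nonempty boundary selections.

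The main obstacle — and the place where $\Delta_V \le 4$ is indispensable — is the gadget design: I must verify that for every profile and every $\ell \le 3$ the fair family is realisable with all degree lists having gaps of length at most one. This is exactly what breaks down at $\Delta_V = 5$, where the analogous families force a list with a gap of length two, consistent with the NP-hardness of \Cref{prop:mov:hardness}. A further point demanding care is that excising the empty selection to enforce the non-emptiness constraint must not introduce a gap of length two in any list, which again has to be checked in each of the finitely many cases.
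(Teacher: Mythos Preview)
Your approach is exactly the paper's: reduce to the gap-$\le 1$ special case of \textsc{General Factor} (Cornu\'ejols), set $L(u)=\{1\}$ for every $u\in U$, and attach to each $v\in V$ a constant-size gadget determined by the colour profile of $N_G(v)$ and by $\ell$, exploiting that $\Delta_V\le 4$ leaves only finitely many cases. The paper does precisely this and supplies the explicit gadgets (roughly a dozen cases across $\ell\in\{0,1,2\}$, plus separate variants for the non-emptiness constraint); your colour-port abstraction is compatible with what the paper actually builds, since in most cases same-colour neighbours are indeed wired to a common internal vertex.

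The gap is that you do not construct the gadgets: you give only the trivial $(1,1,1,1)$ case and otherwise assert that suitable gadgets exist, while correctly flagging this as ``the main obstacle.'' But the gadget constructions \emph{are} the proof --- there is no abstract reason why every fair family on $\le 4$ elements is realisable with gap-$\le 1$ lists, and (as you note) the analogous statement fails at $\Delta_V=5$. Several cases are not just size constraints and require nontrivial internal structure; for example, for profile $(2,2)$ at $\ell=0$ the paper uses four internal vertices with lists $\{2\},\{2\},\{1\},\{1\}$ wired in a cycle, and for $(2,1,1)$ at $\ell=1$ under the non-emptiness constraint it needs four internal vertices with lists $\{3\},\{3\},\{3\},\{0,1,2\}$. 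Your proposal is thus a correct plan that matches the paper's, but it stops before the only substantive step.
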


\begin{proof}[Proof Sketch.]
	We give a polynomial-time reduction to \textsc{General Factor}:
	In an instance of \textsc{General Factor} the input is an undirected graph $H = (W, F)$ and a degree list function $L \colon W \to 2^{\mathbb{N}}$ such that $L(w) \subseteq \{ 0, \dots, \deg_G(w) \}$ for every vertex $w \in W$.
	The problem asks for a spanning subgraph $H' = (W, F')$ for $F' \subseteq F$ such that $\deg_{H'}(w) \in L(w)$ for every $w \in W$.
	By a result of Cornu{\'{e}}jols \cite{DBLP:journals/jct/Cornuejols88}, \textsc{General Factor} is polynomial-time solvable if for every $w \in W$, the degree list $L(w)$ has gaps of size at most one, i.e., $\{ \min L(w), \min L(w) + 1, \dots, \max L(w) \} \setminus L(w)$ does not contain any two consecutive integers.
	Given an instance $\mathcal{I}=(G=(U\cupdot V, E), C, \col, \ell)$
	of \textsc{MoV Fair Matching} with $\Delta_V \le 4$, we will construct an equivalent instance $\mathcal{J} = (H = (W, F), L)$ of this polynomial-time solvable special case of \textsc{General Factor}.
	
	In the following, we give an overview of our construction.
	For every $v \in V$, we add a subgraph $H_v = (W_v, F_v)$ to $H$ which contains all vertices from $U$ adjacent to $v$ in $G$ ($N_{G}(v) \subseteq W_v$) but no other vertices from $U$.
	For the construction of $H_v$, we make extensive case distinctions depending on $N_G(v)$ and $\ell$.
	See \Cref{fig:hv-ex} for two examples.
	The construction of $H_v$ for all other cases are deferred to the appendix.
	The graph $H$ is then the union of $H_v$ for all $v\in V$; notably, a vertex $u$ from $U$ may appear in multiple subgraphs $H_v$, in which case we identify all occurrences of $u$ and merge them into one vertex.
	Moreover, for each $u\in U$, we set $L(u) = \{ 1 \}$, which ensures that $u$ is ``matched'' in every solution of~$\mathcal{J}$.
	
	To show that $\mathcal{I}$ and $\mathcal{J}$ are equivalent, it suffices to show the following for every $v \in V$:
	\begin{equation} \tag{$\star$}
	\parbox{0.9\textwidth}{
		A vertex set $S \subseteq N_G(v)$ is $\ell$-fair if and only if there is a spanning subgraph $H_v' = (W_v, F_v')$ of $H_v$ such that $\deg_{H_v'}(u) = 1$ for every $u \in S$, $\deg_{H_v'}(u) = 0$ for every $u \in N(v) \setminus S$, and $\deg_{H_v'}(v') \in L(v')$ for every $v' \in W_v \setminus U$.
	}
	\end{equation}
	To see why $(\star)$ is sufficient, assume that $(\star)$ holds true.
	For the forward direction, suppose that $\mathcal{I}$ is a yes-instance, i.e., there is a $\ell$-fair left-perfect many-to-one matching $M$ in $G$.
	Then, $M(v) \subseteq N_G(v)$ is $\ell$-fair for every $v \in V$.
	Hence, as we assume that $(\star)$ holds, we have a subgraph $H_v'$ for every $v \in V$ that satisfies the degree constraints of $(\star)$.
	Consider a spanning subgraph $H'$ whose edge set is the union of the edge set of $H_v'$ over all $v$.
	It is easy to verify that $H'$ constitutes a solution for $\mathcal{J}$.
	For the converse direction, suppose that $\mathcal{J}$ is a yes-instance, i.e., there is a spanning subgraph $H' = (W, F')$ of $H$ with $\deg(w) \in L(w)$ for every $w \in W$.
	Then, the subgraph of $H'$ induced by $W_v$ satisfies all the degree constraints of $(\star)$.
	For $v\in V$, let $S_v \subseteq N_G(v)$ be the set of vertices from $U$ that have a neighbor in $H'[W_v]$.
	By $(\star)$, $S_v$ is $\ell$-fair.
	Moreover, since $L(u) = \{ 1 \}$ for every $u \in U$, every vertex appears in $S_v$ for exactly one vertex $v \in V$.
	It follows that a matching $M$ with $M(v) = S_v$ for every $v \in V$ is a $\ell$-fair left-perfect many-to-one matching in $\mathcal{I}$. 
	We remark that we can adapt our algorithm to handle the non-emptiness constraint.
\end{proof}
\begin{figure}
	\centering
	\begin{subfigure}{.5\textwidth}
		\centering
		\resizebox{.6\textwidth}{!}{\begin{tikzpicture}[yscale=0.8]
			\node[square, label={[label distance=0.15cm]180:$u_1$}] (u1) at (0, 3) {};
			\node[square, label={[label distance=0.15cm]180:$u_2$}] (u2) at (0, 2) {};
			\node[triangle, label={[label distance=0.15cm]180:$u_3$}] (u3) at (0, 1) {};
			\node[itriangle, label={[label distance=0.15cm]0:$u_4$}] (u4) at (0, 0) {};

			\node[vertex, label={[label distance=0.00cm]95:$v_1 \colon \{ 1 \}$}] (v1) at (2, 2.5) {};
			\node[vertex, label={[label distance=0.00cm]95:$v_2 \colon \{ 1 \}$}] (v2) at (2, 1.5) {};
			\node[vertex, label={[label distance=0.00cm]95:$v_3 \colon \{ 1 \}$}] (v3) at (2, .5) {};
			\node[vertex, label={[label distance=0.25cm]270:$v_4 \colon \{ 0, 1, 3 \}$}] (w) at (3.5, 1.5) {};

			\draw (u1) -- (v1) -- (u2);
			\draw (u3) -- (v2) -- (w);
			\draw (u4) -- (v3) -- (w);
			\draw (w)-- (v1);
		\end{tikzpicture}}
		\caption{$H_v$ for $\ell = 0$ and $N(v)$ has two vertices of color $\alpha$, one vertex of color $\beta$, and one vertex of color $\gamma$.}
	\end{subfigure}
	\hfill
		\begin{subfigure}{.45\textwidth}
		\centering
	\resizebox{.6\textwidth}{!}{\begin{tikzpicture}[yscale=0.6]
		\node[triangle, label={[label distance=0.15cm]180:$u_1$}] (u1) at (0, 3) {};
		\node[triangle, label={[label distance=0.15cm]180:$u_2$}] (u2) at (0, 2) {};
		\node[itriangle, label={[label distance=0.15cm]0:$u_3$}] (u3) at (0, 1) {};
		\node[itriangle, label={[label distance=0.15cm]0:$u_4$}] (u4) at (0, 0) {};

		\node[vertex, label={[label distance=0.1cm]0:$v_1 \colon \{ 1, 2 \}$}] (v1) at (2, 2.3) {};
		\node[vertex, label={[label distance=0.1cm]0:$v_2 \colon \{ 1, 2 \}$}] (v2) at (2, .7) {};

		\draw (u1) -- (v1) -- (u2);
		\draw (u3) -- (v2) -- (u4);
		\draw (v1) -- (v2);
	\end{tikzpicture}}
	\caption{$H_v$ for $\ell = 1$ and $N(v)$ has two vertices of color $\alpha$ and two vertices of color $\beta$.}
	\end{subfigure}
	\caption{Exemplary constructions of $H_v$.}
	\label{fig:hv-ex}
\end{figure}
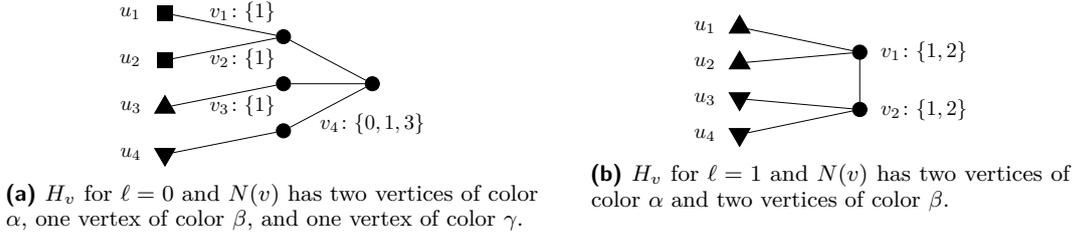

\subparagraph*{Hardness.}
It was already proven by Stoica et al.~\cite{DBLP:conf/atal/StoicaCDG20} that \textsc{MoV Fair Matching} is NP-hard.
However, the maximum degree is unbounded in their reduction.
We strengthen their result by showing that \textsc{MoV Fair Matching} is also NP-hard for three colors if the degree of vertices in $U$ is at most two and the degree of vertices in $V$ is at most five.

\begin{proposition}
	\label{prop:mov:hardness}
	\textsc{MoV Fair Matching} is NP-hard even for $\Delta_U = 2$ and $\Delta_V = 5$.
\end{proposition}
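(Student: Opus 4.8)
The plan is to exploit the fact that $\Delta_U = 2$ turns the problem into an edge-orientation problem: identify each degree-two vertex $u \in U$ with an edge joining its two neighbours in $V$, so that a left-perfect many-to-one matching is exactly an orientation of these edges, and the \textsc{MoV}-fairness requirement becomes a constraint on the multiset of colours entering each $v \in V$. Degree-one vertices of $U$ act as fixed ``pendants'' that preload a prescribed number of vertices of a given colour at a vertex $v$, which I will use to shape the set of feasible in-configurations. I would reduce from \textsc{3-SAT}. The two features to engineer are (a) a variable gadget that forces a single global Boolean value and makes it available to every clause in which the variable occurs, and (b) a clause gadget, realised by a single degree-five vertex, that is \textsc{MoV}-feasible if and only if at least one of its three incident literal-edges is oriented into it.

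For the variable gadget I would first build a long cycle $v_1, u_1, v_2, u_2, \dots, v_{2m}$ on vertices of $V$, where every connecting vertex $u_i \in U$ has colour $1$, and attach to each $v_i$ one pendant of colour $2$ and one of colour $3$. Choosing $\ell = 0$ makes a count vector $(k,1,1)$ infeasible exactly when $k = 2$, so no cycle vertex may receive both of its incident colour-$1$ edges; since the cycle has as many edges as vertices, a counting argument then forces every $v_i$ to receive exactly one, i.e.\ the orientation is one of the two consistent rotations, yielding a global bit. I would then expose this bit to the clauses through an additional layer attached to designated ``slot'' vertices, using literal-edges (degree-two vertices of $U$) whose admissible orientation is tied to the rotation direction by a suitable colour/pendant choice, together with fan-out copy-gadgets (again bounded-degree $V$-vertices) that duplicate the bit to all occurrences and invert it for negated literals. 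Each clause vertex $v_C$ receives its three literal-edges together with preloaded pendants chosen so that $\MOV(M(v_C)) \le \ell$ holds precisely when at least one literal-edge enters $v_C$; one checks that $\deg(v_C) \le 3 + 2 = 5$, that all cycle, slot, and copy vertices of $V$ have degree at most five, and that every $u \in U$ has degree at most two.

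Correctness would follow by showing that a satisfying assignment yields a fair orientation (set each cycle to the rotation encoding the truth value, orient the literal-edges accordingly, and verify every clause vertex receives an entering edge) and, conversely, that any fair orientation induces a consistent assignment satisfying all clauses. The main obstacle is the gadget design itself: with only three colours and the single scalar constraint $\MOV \le \ell$, one must determine for each gadget exactly which in-configurations are feasible, handle the $\pm 1$ tie cases in the definition of $\MOV$ carefully, re-verify the forced in-degrees once slot and copy vertices perturb the cycle's counting argument, and make the literal-edge orientation rigidly follow the global bit. I also expect the bound $\Delta_V = 5$ to be tight for a structural reason matching \Cref{prop:dic:mov:algo}: the $\Delta_V \le 4$ algorithm reduces to \textsc{General Factor} with degree lists whose gaps have size at most one (polynomial-time solvable by Cornu\'{e}jols \cite{DBLP:journals/jct/Cornuejols88}), whereas degree five is the smallest degree at which the $\MOV$ constraint can force a non-convex (gap-$\ge 2$) set of admissible in-configurations, which is precisely what the clause and slot gadgets need in order to encode a genuine choice.
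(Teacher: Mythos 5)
Your edge-orientation reframing is sound, and the cycle variable gadget works as stated in isolation: with $\ell = 0$, a cycle vertex preloaded with one colour-$2$ and one colour-$3$ pendant can absorb at most one of its two incident colour-$1$ cycle edges, and the counting argument (as many cycle edges as cycle vertices) then forces exactly one, i.e.\ a global rotation. However, the two load-bearing gadgets of your reduction are never constructed, and at least one of them cannot exist in the form you describe. Consider the clause gadget: a single vertex $v_C$ of degree five with three literal edges and at most two pendants, required under $\ell = 0$ to be feasible exactly when at least one literal edge enters. If the two pendants have distinct colours, the empty configuration has colour counts $(1,1,0)$, hence $\MOV = 0$, so it is feasible --- violating the requirement. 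If the two pendants share a colour, say counts $(0,2,0)$, then a single entering literal gives counts $(1,2,0)$ or $(3,0,0)$, both with $\MOV \ge 1$, so no single-literal configuration is feasible --- again violating the requirement. So no choice of pendant and literal colours works; the clause check genuinely needs several $V$-vertices. This is exactly what the paper does: its clause gadget uses two vertices $v_y, v_y'$, a vertex $u_y$ of colour $\alpha$ that must go to one of them, and two further ``fill'' vertices $u_y^1, u_y^2$ with private escape vertices $v_y^1, v_y^2$, so that $0$-fairness forces whichever of $v_y, v_y'$ receives $u_y$ to also receive a $\gamma$-coloured literal vertex from a variable gadget.

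The second unresolved gap is the coupling between the cycle rotation and the literal edges. If a literal edge is attached directly to a cycle vertex, the feasibility analysis changes: e.g.\ with a colour-$2$ literal edge the configuration (two cycle edges in, literal in) has counts $(2,2,1)$ and is $0$-fair, so the counting argument no longer forces one cycle edge per vertex --- ``defect'' pairs (one vertex absorbing two cycle edges, another absorbing none) become possible and the global bit is lost. Your proposed fix (slot vertices and fan-out copy gadgets, with inversion for negated literals) is precisely the hard part and is left entirely unspecified; this is where the actual work of the reduction lies. Note that the paper sidesteps fan-out altogether by reducing from the \textsc{Satisfiability} variant in which every variable occurs exactly twice positively and twice negatively: its variable gadget (two vertices $v_x^+, v_x^-$ sharing a colour-$\alpha$ vertex $u_x$, each padded so that $0$-fairness forces the two $\gamma$-coloured connectors $u_x^{\pm,1}, u_x^{\pm,2}$ to stay home exactly when $u_x$ lands there) directly supplies all four literal connectors, so no copy gadget is ever needed. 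As it stands, your proposal is a plausible programme with a correct first step, not a proof; completing it would require either building the missing multi-vertex clause and copy gadgets or switching, as the paper does, to a bounded-occurrence SAT variant that makes copying unnecessary.
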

\begin{proof}
	We reduce from an NP-hard variant of \textsc{Satisfiability} where each clause consists of exactly three literals and each variable occurs in exactly two clauses positively and in exactly two clauses negatively \cite{DBLP:journals/eccc/ECCC-TR03-049}.
	An instance of \textsc{Satisfiability} consists of a set $X$ of variables and a set $Y$ of clauses.

	\begin{figure}
		\centering
		\begin{tikzpicture}[xscale=0.8, yscale=1.2]
			\node[triangle, label={[label distance=0.15cm]180:$u_x^{\ngt, \alpha}$}] (ux1) at (0, 0) {};
			\node[triangle, label={[label distance=0.15cm]180:$u_x$}] (ux2) at (0, 1) {};
			\node[triangle, label={[label distance=0.15cm]180:$u_x^{\ps, \alpha}$}] (ux3) at (0, 2) {};

			\node[vertex, label={[label distance=0.15cm]90:$v_x^\ps$}] (vx1) at (2, 1.4) {};
			\node[vertex, label={[label distance=0.15cm]270:$v_x^\ngt$}] (vx2) at (2, .6) {};

			\node[itriangle, label={[label distance=0.15cm]180:$u_x^{\ps, \beta}$}] (uy) at (4, 1.4) {};
			\node[itriangle, label={[label distance=0.15cm]180:$u_x^{\ngt, \beta}$}] (uy2) at (4, .6) {};

			\node[square, label={[label distance=0.15cm]0:$u_x^{\ps, 1}$}] (uxp1) at (4, 2.6) {};
			\node[square, label={[label distance=0.15cm]0:$u_x^{\ps, 2}$}] (uxp2) at (4, 2) {};
			\draw[dashed] (5.8, 3.1) -- (uxp1);
			\draw[dashed] (5.8, 2.5) -- (uxp2);

			\node[square, label={[label distance=0.15cm]0:$u_x^{\ngt, 1}$}] (uxn1) at (4, 0) {};
			\node[square, label={[label distance=0.15cm]0:$u_x^{\ngt, 2}$}] (uxn2) at (4, -.6) {};
			\draw[dashed] (5.8, .5) -- (uxn1);
			\draw[dashed] (5.8, -.1) -- (uxn2);

			\draw (ux3) -- (vx1) -- (ux2);
			\draw (ux2) -- (vx2) -- (ux1);
			\draw (uxp1) -- (vx1) -- (uxp2);
			\draw (uxn1) -- (vx2) -- (uxn2);

			\draw (uy) -- (vx1);
			\draw (uy2) -- (vx2);

		\end{tikzpicture}
		\hspace{1em}
		\begin{tikzpicture}[xscale=0.8, yscale=1.2]
			\node[vertex, label={[label distance=0.15cm]180:$v_y$}] (vy) at (0, 2) {};
			\node[vertex, label={[label distance=0.15cm]180:$v_y'$}] (vyp) at (0, 0) {};

			\node[triangle, label={[label distance=0.15cm]0:$u_y$}] (uy) at (2, 0) {};
			\node[triangle, label={[label distance=0.15cm]40:$u_y^2$}] (uy2) at (2, 1) {};
			\node[triangle, label={[label distance=0.15cm]40:$u_y^1$}] (uy1) at (2, 2) {};

			\node[vertex, label={[label distance=0.15cm]90:$v_y^1$}] (vy1) at (4, 1.7) {};
			\node[vertex, label={[label distance=0.15cm]270:$v_y^2$}] (vy2) at (4, .7) {};

			\node[itriangle, label={[label distance=0.15cm]180:$u_y^{1, \beta}$}] (uy11) at (6, 2) {};
			\node[square, label={[label distance=0.15cm]0:$u_y^{1, \gamma}$}] (uy12) at (6, 1.4) {};
			\node[itriangle, label={[label distance=0.15cm]180:$u_y^{2, \beta}$}] (uy21) at (6, 1) {};
			\node[square, label={[label distance=0.15cm]0:$u_y^{2, \gamma}$}] (uy22) at (6, 0.4) {};

			\draw (vy) -- (uy) -- (vyp);
			\draw (vy) -- (uy1) -- (vy1);
			\draw (vy) -- (uy2) -- (vy2);
			\draw (uy11) -- (vy1) -- (uy12);
			\draw (uy21) -- (vy2) -- (uy22);

			\draw[dashed] (vy) -- (-1, 2.6);
			\draw[dashed] (vy) -- (-1, 3);
			\draw[dashed] (vyp) -- (-1, -.6);
		\end{tikzpicture}
		\caption{A variable gadget (left) and clause gadget (right) from the proof of \Cref{prop:mov:hardness}. Vertices of color $\alpha$ are depicted as triangles, vertices of color $\beta$ as down-pointing triangles, and vertices of color $\gamma$ as squares.
		A dotted line denotes an edge between a variable gadget and a clause gadget.}
		\label{fig:mov:hardness}
	\end{figure}
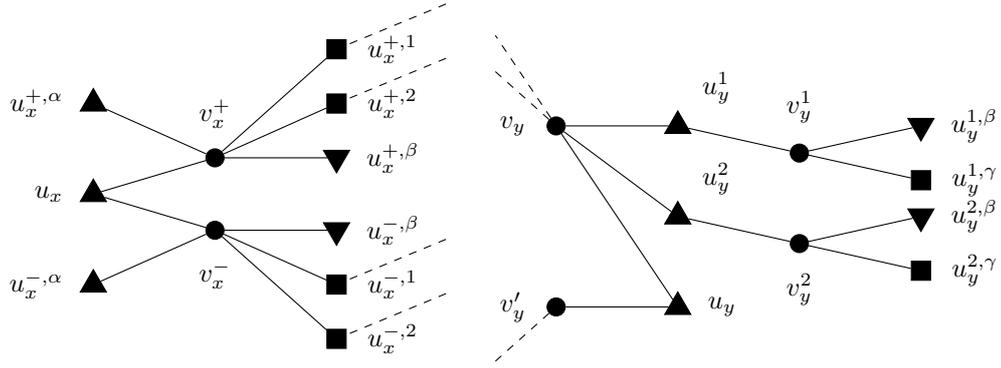

	\proofsubparagraph*{Construction.}
	We set $\ell=0$ and use three colors $\alpha$, $\beta$, and $\gamma$.
	We will construct a graph $G = (U \cupdot V, E)$ consisting of a variable gadget for each variable and a clause gadget for each clause.
	We start with describing the variable gadgets (see \Cref{fig:mov:hardness} for an illustration).
	For each variable $x$, we do the following:
	\begin{itemize}
		\item
		We introduce three vertices $u_x, u_x^{\ps, \alpha}, u_x^{\ngt, \alpha}\in U$ of color $\alpha$, two vertices $u_x^{\ps, \beta}, u_x^{\ngt, \beta}\in U$ of color $\beta$, four vertices $u_x^{\ps, 1}, u_x^{\ps, 2}, u_x^{\ngt, 1}, u_x^{\ngt, 2}\in U$ of color $\gamma$, and two vertices $v_x^\ps, v_x^\ngt \in V$.
		\item
		We add edges such that $v_x^{\ps}$ is adjacent to $u_x, u_x^{\ps, \alpha}, u_x^{\ps, \beta}, u_x^{\ps, 1}, u_x^{\ps, 2}$ and $v_x^{\ngt}$ is adjacent to $u_x, u_x^{\ngt, \alpha}, u_x^{\ngt, \beta}, u_x^{\ngt, 1}, u_x^{\ngt, 2}$.
	\end{itemize}
	Essentially, matching $u_x$ to $v_x^{\ps}$ (resp., $v_x^{\ngt}$) equates to assigning false (resp., true) to the variable $x$.
	Whenever $u_x$ is matched to $v_x^{\ps}$ (resp., $v_x^{\ngt}$), both $u_x^{\ps, 1}$ and $u_x^{\ps, 2}$ (resp., $u_x^{\ngt, 1}$ and $u_x^{\ngt, 2}$) must also be matched to $v_x^{\ps}$ (resp., $v_x^{\ngt}$) in order to respect the 0-fairness constraint.

	We next describe the clause gadgets (see \Cref{fig:mov:hardness} for an illustration).
	For each clause $y \in Y$, we do the following:
	\begin{itemize}
		\item We introduce three vertices $u_y, u_y^1, u_y^2\in U$ of color $\alpha$, two vertices $u_y^{1, \beta}, u_y^{2, \beta}\in U$ of color $\beta$, two vertices $u_y^{1, \gamma}$, $u_y^{2, \gamma}\in U$ of color $\gamma$, and four vertices $v_y, v_y', v_y^1, v_y^2 \in V$.
		\item We add edges such that $v_y$ is adjacent to $u_y, u_y^1, u_y^2$, such that $v_y'$ is adjacent to $u_y$, and such that $v_y^i$ is adjacent to $u_y^{i}, u_y^{i, \beta}, u_y^{i, \gamma}$ for each $i \in [2]$.
	\end{itemize}
	Note that $u_y$ must be matched to either $v_y$ or $v_y'$.
	Moreover, all the neighbors of $v_y$ and $v_y'$ in the clause gadget have color $\alpha$.
	To achieve $0$-fairness, it follows that $v_y$ or $v_y'$ must be matched to at least one vertex of color $\beta$ or $\gamma$ (in fact this will be always $\gamma$) from a variable gadget.

	Lastly, we connect variable gadgets and clause gadgets.
	For each $y \in Y$, let $l_y^1, l_y^2, l_y^3$ be the literals in $y$.
	For every $y \in Y$ and $i \in [3]$, we add an edge as follows:
	\begin{itemize}
		\item
		If $l_y^i = x$ and the literal $l_y^i$ is the $j$th occurrence ($j \in [2]$) of $x$, then add an edge between $u_x^{+, j}$ and $v_y$ (for $i=3$, in this case we add an edge between $u_x^{+, j}$ and $v_y'$).
		\item
		If $l_y^i = \overline{x}$ and the literal $l_y^i$ is the $j$th occurrence ($j \in [2]$) of $\overline{x}$, then add an edge between $u_x^{-, j}$ and $v_y$ (for $i=3$, in this case we add an edge between $u_x^{-, j}$ and $v_y'$).
	\end{itemize}
	This concludes the construction.
	It is easy to verify that $\Delta_U \le 2$ and $\Delta_V \le 5$.

	\proofsubparagraph*{Correctness.}
	($\Rightarrow$)
	Let $\varphi$ be a satisfying assignment of the given \textsc{Satisfiability} instance.
	We construct a matching $M$ in $G$ as follows.
	First, we add edges to $M$ such that
	\begin{align*}
		M(v_x^\ps) = \begin{cases}
			\{ u_x^{\ps, \alpha}, u_x^{\ps, \beta} \} & \text{ if $x$ is true in $\varphi$,} \\
			N(v_x^\ps) & \text{ if $x$ is false in $\varphi$,} \\
		\end{cases}
		\quad
		M(v_x^\ngt) = \begin{cases}
			N(v_x^\ngt) & \text{ if $x$ is true in $\varphi$,} \\
			\{ u_x^{\ngt, \alpha}, u_x^{\ngt,\beta} \} & \text{ if $x$ is false in $\varphi$} \\
		\end{cases}
	\end{align*}
	for every $x \in X$.
	Moreover, for every $x \in X$ and $i \in [2]$, we add to $M$ an edge between $u_{x}^{\ps, i}$ (resp., $u_x^{\ngt, i}$) and its neighbor in a clause gadget if $x$ is true (resp., false) in $\varphi$.
	Now every vertex from the left side occurring in a variable gadgets is matched.
	We now discuss how to match the vertices in clause gadgets.
	We add to $M$ edges $\{ v_y^i, u_y^{i, \beta} \}$ and $\{ v_y^i, u_y^{i, \gamma} \}$ for every $y \in Y$ and $i \in [2]$.
	Moreover, we do as follows for every $y \in Y$:
	\begin{itemize}
		\item
		If all literals are true in $\varphi$, then add $\{ v_y, u_y^1 \}$, $\{ v_y, u_y^2 \}$, and $\{ v_y', u_y \}$.
		\item
		If only $l_y^1$ and $l_y^2$ are true, then add $\{ v_y, u_y^1 \}$, $\{ v_y^2, u_y^2 \}$, and $\{ v_y, u_y \}$.
		\item
		If only $l_y^1$ (or $l_y^2$) and $l_y^3$ are true, then add $\{ v_y, u_y^1 \}$, $\{ v_y^2, u_y^2 \}$, and $\{ v_y', u_y \}$.
		\item
		If only $l_y^1$ (or $l_y^2$) is true, then add $\{ v_y^1, u_y^1 \}$, $\{ v_y^2, u_y^2 \}$, and $\{ v_y, u_y \}$.
		\item
		If only $l_y^3$ is true, then add $\{ v_y^1, u_y^1 \}$, $\{ v_y^2, u_y^2 \}$, and $\{ v_y', u_y \}$.
	\end{itemize}
	It is straightforward to verify that $M$ is a left-perfect $\ell$-fair matching.

	$(\Leftarrow)$
	Assume that there is a left-perfect $\ell$-fair matching in the constructed \textsc{MoV Fair Matching} instance.
	We claim that the assignment $\varphi$, where $x$ is set to true if $u_x \in M(v_x^{-})$ and to false if $u_x \in M(v_x^+)$, is a satisfying assignment for the given formula.
	Assume for a contradiction that a clause $y \in Y$ is not satisfied by $\varphi$.
	Note that $u_y \in M(v_y)$ or $u_y \in M(v_y')$.
	Since $M(v_y)$ and $M(v_y')$ are both $0$-fair, at least one of $M(v_y)$ and $M(v_y')$ contains a vertex of color $\gamma$ or $\beta$.
	Without loss of generality, assume that $u_x^{\ps, 1} \in M(v_y) \cup M(v_y')$.
	Then, we have $u_x \in M(v_x^-)$ since otherwise $M(v_x^+)$ is not 0-fair.
	This, however, implies that $x$ is true in $\varphi$, a contradiction.
\end{proof}

Notably this result also clearly shows hardness for arbitrary $\Delta_U\geq 2$ and $\Delta_V\geq 5$, as we can simply add an isolated trivial yes-instance with the respective vertex degrees to the constructed instance. 

\subsubsection{Max-Min} \label{sec:dich_mm}

For \textsc{Max-Min Fair Matching}, we prove that fewer cases are polynomial-time solvable than for \textsc{MoV Fair Matching}:
\begin{restatable}{theorem}{dicmm}
	\label{thm:dic:mm}
	\textsc{Max-Min Fair Matching} is polynomial-time solvable if $\Delta_U \le 1$, $\Delta_V \le 2$, or $(\Delta_U, \Delta_V) = (2, 3)$ (even with the non-emptiness constraint) and NP-hard~otherwise.
\end{restatable}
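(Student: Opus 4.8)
The plan is to treat the red (hard) and green (polynomial) regions of \Cref{fig:dichotomies} (left) separately, splitting each into the few boundary cases that the claimed dichotomy forces.

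\textbf{Hardness.} Two reductions plus padding cover all red cells. For $\Delta_U \ge 3$ and $\Delta_V \ge 3$ I would invoke \Cref{pr:hardnessMaxMin}, which already gives NP-hardness at $(\Delta_U,\Delta_V)=(3,3)$ with three colors and $\ell=0$; exactly as in the remark following \Cref{prop:mov:hardness}, attaching an isolated, $\ell$-fair-solvable component whose vertices realize the desired larger degrees lifts this to every $\Delta_U\ge 3,\ \Delta_V\ge 3$. The remaining red cells form the column $\Delta_U=2$ with $\Delta_V\ge 4$, and for these I would give a fresh reduction producing an instance with $\Delta_U=2$ and $\Delta_V=4$ (three colors, small $\ell$), from a degree-restricted \textsc{Satisfiability} or \textsc{3-Dimensional Matching} variant, using variable/clause gadgets in the spirit of \Cref{prop:mov:hardness}; padding then reaches every $\Delta_V\ge 4$. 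The only real care is to keep every left vertex of degree at most two while still forcing the rigid ``all colors equally often'' behaviour of \textsc{Max-Min}.

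\textbf{Easy algorithmic cases.} For $\Delta_U\le 1$ the matching is forced — each $u$ has a unique neighbour — so one merely checks $\ell$-fairness of the induced sets $M(v)$, which works verbatim under the non-emptiness constraint. For $\Delta_V\le 2$ I would reduce to the gap-one tractable special case of \textsc{General Factor} \cite{DBLP:journals/jct/Cornuejols88} by the gadget-per-$v$ construction of \Cref{prop:dic:mov:algo}: for each $v$ one realizes the family of $\ell$-fair subsets of $N_G(v)$ by a small gadget $H_v$ with $L(u)=\{1\}$ for all $u\in U$. The crucial point is that, over a neighbourhood of size at most two, the most restrictive \textsc{Max-Min} constraint is ``both or neither'', i.e.\ a degree list $\{0,2\}$, which has a gap of size one; all other fair families over $\le 2$ elements are likewise realizable with gap-one lists (these are exactly the two-element families $\mathcal{A}_i$ appearing in \Cref{prop:dic:cpoly}). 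Hence this region is polynomial for \emph{every} $\Delta_U$, matching the full green row, and the non-emptiness constraint is absorbed by forbidding the empty choice in the relevant gadgets.

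\textbf{The main obstacle: $(\Delta_U,\Delta_V)=(2,3)$.} Here the same gadget-per-$v$ idea breaks down. When $\ell=0$, $|C|=3$ and a vertex $v$ has three neighbours of the three distinct colors, $M(v)$ must be either empty or contain exactly one vertex of each color — an ``all-or-nothing over three'' condition corresponding to the degree list $\{0,3\}$, whose gap $\{1,2\}$ has size two. Such a constraint is \emph{not} expressible by gap-one \textsc{General Factor} gadgets: otherwise the reduction of \Cref{pr:hardnessMaxMin} would embed \textsc{3-Dimensional Matching} into Cornuéjols's polynomial class, which is impossible unless $\mathrm{P}=\mathrm{NP}$. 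Thus the per-$v$ strategy cannot be applied verbatim, and the polynomiality must come from a global argument that exploits $\Delta_U\le 2$. The plan is to pass to the auxiliary multigraph $G^{\star}$ on vertex set $V$ in which every degree-two $u\in U$ becomes an edge joining its two neighbours and every degree-one $u$ a pendant, so that left-perfect matchings correspond to orientations of $G^{\star}$ and $\ell$-fairness becomes a local condition on the colors of the edges oriented into each $v$ (now of degree at most three). Since each edge of $G^{\star}$ meets only two vertices, the coupling between these local constraints is sparse, and I would reduce the whole orientation problem to a \emph{single} instance of the gap-one case of \textsc{General Factor} (or of \textsc{Maximum Weighted Matching}) on a carefully built auxiliary graph that distributes the per-$v$ color bookkeeping across the incident edges rather than concentrating it at $v$. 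The bulk of the work — and the step I expect to be genuinely delicate — is verifying that this construction realizes exactly the fair configurations for all finitely many color compositions of a degree-at-most-three neighbourhood, for every $\ell$ and $|C|$ (in particular the $\ell=0$, $|C|\le 3$ corner cases), while keeping every degree list of gap at most one; accommodating the non-emptiness constraint then adds one further routine case distinction.
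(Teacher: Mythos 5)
Your handling of $\Delta_U \le 1$, of the row $\Delta_V \le 2$ (your uniform gadget route differs slightly from the paper, which treats $\ell\ge 2$, $\ell=0$ via \Cref{prop:dic:cpoly}, and $\ell=1$ via \textsc{General Factor} separately, but both work), and the padding arguments are fine; the hardness for $\Delta_U\ge 3,\Delta_V\ge 3$ correctly reuses \Cref{pr:hardnessMaxMin}. However, two essential pieces are missing. The first is the NP-hardness for $(\Delta_U,\Delta_V)=(2,4)$, which you only announce. This is not routine: the paper's reduction from degree-bounded \textsc{3-Dimensional Matching} needs, for every element $s$, an extra gadget of three right-vertices $v_s^1,v_s^2,v_s^3$ and four vertices $a_s^1,a_s^2,b_s^1,b_s^2$ of the two other colors, wired so that $0$-fairness forces exactly one of $u_s^1,u_s^2,u_s^3$ out of the element gadget and into a triple vertex; this wiring is precisely where $\Delta_U=2$ is preserved, and it is the technical core of that cell of the dichotomy.

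The second, more serious gap is that your plan for $(\Delta_U,\Delta_V)=(2,3)$ is self-defeating in its critical subcase. You correctly observe that for $\ell=0$, $|C|=3$, a vertex $v$ with three distinctly colored neighbours imposes the boundary behaviour $\{0,3\}$, and that no gap-one \textsc{General Factor} gadget can realize it (else the instances of \Cref{pr:hardnessMaxMin} would become polynomial). But your proposed fix --- a construction that ``distributes the bookkeeping across the incident edges'' yet still ``realizes exactly the fair configurations'' of every degree-three neighbourhood --- would be exactly such a gadget: any local gadget whose feasible boundary degree vectors on the three edges are $(0,0,0)$ and $(1,1,1)$ could just as well be attached to degree-three left vertices with $L(u)=\{1\}$, again solving the NP-hard instances. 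So your own impossibility argument rules out every factor/matching-based local encoding, however the bookkeeping is spread; no amount of care in the ``delicate verification'' you defer can rescue it. What actually makes this case polynomial is a global parity argument exploiting $\Delta_U\le 2$: after preprocessing degree-one left vertices and deleting right vertices whose full neighbourhood is unfair, build the graph $G_0$ on vertex set $V$ with an edge $\{v,v'\}$ for every $u$ with $N(u)=\{v,v'\}$; since each $v$ must take all of $N(v)$ or nothing and each $u$ must be taken by exactly one endpoint, a $0$-fair left-perfect matching exists if and only if $G_0$ is bipartite. This $2$-coloring step, not a matching computation, is the missing idea; the remaining subcases $\ell=1$ and $\ell=2$ (and the non-emptiness variants) can then indeed be dispatched by matching/\textsc{General Factor} constructions of the kind you describe.
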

Again note that the case $\Delta_u\leq 1$ is trivial.

\subparagraph*{Algorithm.}

We first show the polynomial-time solvability for $\Delta_V \le 2$.
We then give a polynomial-time algorithm for $\Delta_U = 2$ and $\Delta_V = 3$.

\begin{proposition}
	\textsc{Max-Min Fair Matching} is polynomial-time solvable for $\Delta_V \leq 2$ even with the non-emptiness constraint.
\end{proposition}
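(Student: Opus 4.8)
The plan is to reduce to the polynomial-time solvable special case of \textsc{General Factor} already exploited in \Cref{prop:dic:mov:algo} (degree lists whose gaps have size at most one, solvable by Cornu{\'e}jols~\cite{DBLP:journals/jct/Cornuejols88}). The crucial observation is that when $\Delta_V \le 2$ the $\ell$-fairness of $M(v)$ depends only on the \emph{number} of vertices matched to $v$, not on their identities. Indeed, since $\deg_G(v) \le 2$ we have $|M(v)| \le 2$. If $|M(v)| = 0$ then $\MaM(M(v)) = 0$; if $|M(v)| = 1$ then $\MaM(M(v)) = 1$ when $|C| \ge 2$ and $0$ when $|C| = 1$, independently of which neighbor is chosen; and if $|M(v)| = 2$ then $M(v) = N_G(v)$ is the unique two-element subset, so $\MaM(M(v))$ is a fixed quantity determined by the two colors of $N_G(v)$ and by $|C|$. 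Hence for every $v \in V$ the set of \emph{admissible degrees} $L(v) \subseteq \{ 0, 1, 2 \}$ --- those $d$ for which a $d$-element subset of $N_G(v)$ is $\ell$-fair --- is well defined and computable in polynomial time.

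First I would build the \textsc{General Factor} instance $\mathcal{J} = (H, L)$ with $H := G$, setting $L(u) := \{ 1 \}$ for every $u \in U$ (which enforces left-perfectness) and $L(v)$ equal to the admissible-degree set above for every $v \in V$. For the non-emptiness constraint I would simply delete $0$ from each $L(v)$ (and more generally a size lower bound would be handled by deleting all degrees below it). By construction, a spanning subgraph $H' = (W, F')$ with $\deg_{H'}(w) \in L(w)$ for all $w \in W$ corresponds exactly to a left-perfect many-to-one matching $M$ in which every $M(v)$ is $\ell$-fair (and, under the non-emptiness constraint, nonempty): reading off $M(v)$ as the set of $U$-neighbors of $v$ in $H'$ translates the degree condition at $v$ into $\ell$-fairness of $M(v)$ by the observation above, while the condition $\deg_{H'}(u) = 1$ translates into $u$ being matched to exactly one vertex of $V$.

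It remains to verify that $\mathcal{J}$ falls into the tractable case, i.e., that every list has gaps of size at most one. This is immediate: $L(u) = \{ 1 \}$ has no gap, and any $L(v) \subseteq \{ 0, 1, 2 \}$ lies in an interval of length at most two, so the only integer that can be missing strictly between $\min L(v)$ and $\max L(v)$ is $1$; thus no two consecutive integers are ever missing. Applying the algorithm of Cornu{\'e}jols~\cite{DBLP:journals/jct/Cornuejols88} then solves $\mathcal{J}$, and hence the original instance, in polynomial time.

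I do not expect a genuine obstacle here: the entire difficulty is concentrated in the single structural observation that, for $\Delta_V \le 2$, fairness is a function of the matched degree alone, so the color information collapses into the degree list and the problem becomes an instance of the gap-one special case of \textsc{General Factor}. The only points needing care are the boundary cases $|C| = 1$ and $\ell = 0$ (where some $L(v)$ may turn out empty, in which case $\mathcal{J}$, and therefore the instance, is correctly a no-instance) and the incorporation of the non-emptiness constraint, both handled by the degree-list bookkeeping described above.
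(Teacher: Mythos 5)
Your proof is correct, and it takes a more unified route than the paper's. The paper proves this proposition by a case distinction on $\ell$: for $\ell \ge 2$ fairness of sets of size at most two is automatic, so only left-perfectness (plus, with non-emptiness, covering $V$) must be checked; for $\ell = 0$ it observes that $|C| \ge 3$ forces $M(v) = \emptyset$ and otherwise delegates to the two-color algorithm of \Cref{prop:dic:cpoly}, itself a nontrivial gadget reduction to \textsc{Maximum Weighted Matching}; and only for $\ell = 1$ does it invoke the gap-one special case of \textsc{General Factor}, giving list $\{0,1\}$ to vertices of $V$ with two same-colored neighbors and $\{0,1,2\}$ to the rest. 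Your key observation --- that for $\Delta_V \le 2$ the value $\MaM(M(v))$ depends only on $|M(v)|$ (given the fixed colors of $N_G(v)$), so the fairness constraint collapses into a degree list $L(v) \subseteq \{0,1,2\}$, and any such list automatically satisfies the gap condition --- lets you run one and the same reduction for every $\ell$. In particular, for $\ell = 0$ and $|C| = 2$ your construction yields the list $\{0,2\}$, whose single gap is exactly what Cornu\'ejols' theorem tolerates, so the appeal to \Cref{prop:dic:cpoly} disappears entirely, and the non-emptiness constraint is absorbed by deleting $0$ from each list, which again preserves the gap condition. What your route buys is a shorter, uniform argument with a cleaner treatment of non-emptiness; the paper's case analysis uses the same underlying tool but applies it only to $\ell = 1$, handling the other cases ad hoc. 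The degenerate situations (an empty list $L(v)$, or an isolated vertex of $U$ for which $L(u) = \{1\}$ is infeasible) are trivially detected and correctly yield no-instances, as you note.
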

\begin{proof}
	Let $\mathcal{I}=(G=(U \cupdot V,E),C,\col,\ell)$ be an instance of \textsc{Max-Min Fair Matching}.
	We make a case distinction based on the value of $\ell$.
	For each case, we first consider the case without the non-emptiness constraint.

	If $\ell\geq 2$, then $\mathcal{I}$ is a yes-instance if and only if every vertex from $U$ has at least one neighbor (as each vertex $v \in V$ has at most two neighbors, we can assign all vertices from $U$ arbitrarily).
	If the non-emptiness constraint is imposed, then $\mathcal{I}$ is a yes-instance if and only if there exists a left-perfect matching covering all vertices from $V$ in $G$. 

	Suppose that $\ell=0$.
	If $|C| \ge 3$, then $M(v)$ becomes $0$-fair only if $M(v) = \emptyset$ since $|M(v)| \le 2$.
	Thus, we may assume that $|C| = 2$:
	So we can solve \textsc{Max-Min Fair Matching} in polynomial time by \Cref{prop:dic:cpoly} for $\ell=0$ even with the non-emptiness constraint.

	For $\ell=1$, we follow a different approach: 
	Let $V'\subseteq V$ be the subset of vertices from $V$ that are adjacent to two vertices of the same color from $U$. 
	Then, we return yes if there is a left-perfect matching $M$ in $G$ with $|M(v)| \in \{ 0, 1 \}$ for all $v\in V'$ and $|M(v)|\in \{0, 1, 2\}$ for $v\in V\setminus V'$.
	Note that this reduces to a polynomial-time solvable case of \textsc{General Factor}. 
	The algorithm is correct because no vertex from $V'$ can be matched to two vertices because their two neighbors are of the same color which would result in a violation of $1$-fairness. 
	Vertices from $V\setminus V'$, on the other hand, can be matched to no, one, or both of their two neighbors because they are of different colors. 
	If the non-emptiness constraint is present, we exclude zero from the degree list of each vertex from $V$.
\end{proof}

\begin{proposition}
	\textsc{Max-Min Fair Matching} is polynomial-time solvable for $\Delta_U = 2$ and $\Delta_V = 3$ even with the non-emptiness constraint.
\end{proposition}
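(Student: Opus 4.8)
The plan is to classify instances by the number of colors and by $\ell$, and to reduce each resulting case either to a polynomial-time flow/$b$-matching problem (as the $\Delta_V\le 2$ and \textsc{MoV} algorithms reduce to \textsc{General Factor}) or, in the one genuinely hard subcase, to \textsc{2-Satisfiability}. The governing observation is that $\Delta_V\le 3$ forces $|M(v)|\le 3$ for every $v\in V$, so the fairness of $M(v)$ depends only on its few possible color-count patterns. First I would dispose of the easy cases: for $|C|=1$ every set is fair and it remains to test for a left-perfect matching (covering all of $V$ under the non-emptiness constraint), and for $|C|=2$ the statement is already \Cref{prop:dic:cpoly}. Hence I may assume $|C|\ge 3$ and split into $\ell\ge 1$ and $\ell=0$.

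For $\ell\ge 1$ and $|C|\ge 3$ I would first establish the characterization that a set $M(v)$ is $\ell$-fair if and only if $|M(v)_c|\le \ell$ for every color $c$. The reverse direction is immediate from $\MaM(M(v))\le \max_c|M(v)_c|$; the forward direction (which fails for $|C|=2$) holds here because, as $|M(v)|\le 3$, either $M(v)$ omits some color, whence $\min_c|M(v)_c|=0$ and fairness gives $\max_c|M(v)_c|=\MaM(M(v))\le \ell$, or $M(v)$ meets all colors, which with $|C|\ge 3$ forces $|C|=3$ with one vertex of each color so every count equals $1\le \ell$. This reduces the problem to finding a left-perfect matching in which each color occurs at most $\ell$ times at every $v$, a bounded-capacity bipartite $b$-matching. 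I would model it as a feasible-flow problem: a source sends exactly one unit through each $u\in U$; $u$ forwards its unit to a node $(v,\col(u))$ for some $v\in N_G(u)$; each arc $(v,c)\to v$ has capacity $\ell$ (the per-color cap); and each arc $v\to \text{sink}$ has lower bound $L_v$ and capacity $\deg_G(v)$, where $L_v=1$ under the non-emptiness constraint and $L_v=0$ otherwise. A feasible flow of value $|U|$ exists if and only if the desired matching exists, and such flows with lower bounds are computable in polynomial time; note this case needs only $\Delta_V\le 3$, not $\Delta_U\le 2$.

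The hard case is $\ell=0$, where fairness becomes an all-or-nothing condition and the per-color-cap reduction breaks down. If $|C|\ge 4$, no nonempty set is $0$-fair, so the instance is trivial. For $|C|=3$, a nonempty $0$-fair set must consist of exactly one vertex of each of the three colors; hence a vertex $v$ can be ``on'' (with $M(v)=N_G(v)$) only if $N_G(v)$ contains exactly one vertex per color, and otherwise must be ``off'' ($M(v)=\emptyset$). Since every $u\in U$ has at most two neighbors and is matched exactly once, each $u$ must have \emph{exactly one} ``on'' neighbor. I would encode this as a \textsc{2-Satisfiability} instance with a variable $x_v$ per $v$ (forced false for ineligible $v$): a degree-one vertex $u$ forces its unique neighbor on, and a degree-two vertex $u$ with neighbors $v,v'$ contributes the clauses $(x_v\vee x_{v'})$ and $(\lnot x_v\vee\lnot x_{v'})$, which together say that exactly one of $v,v'$ is on. A satisfying assignment yields a valid matching and vice versa; the non-emptiness variant simply forces every $x_v$ true (a direct check that all $v$ are eligible and all $u$ have degree one). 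This is solvable in polynomial time.

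The main obstacle is precisely the $\ell=0$, $|C|=3$ subcase: the fair family at $v$ is $\{\emptyset, N_G(v)\}$, an all-or-nothing condition on three elements that no gap-$1$ \textsc{General Factor} gadget can realize, so a genuinely different argument is unavoidable. This is exactly where the bound $\Delta_U\le 2$ is essential: it makes every $u$ lie in at most two candidate sets, turning the exact-cover requirement into pairwise ``exactly one'' constraints expressible in \textsc{2-Satisfiability}; with $\Delta_U=3$ the same requirement becomes exact cover by $3$-sets and is NP-hard, matching \Cref{pr:hardnessMaxMin}. The remaining work is routine: verifying the flow model's correctness (including the non-emptiness lower bounds) and checking the small finite case analysis behind the $\ell=0$ eligibility condition.
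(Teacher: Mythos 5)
Your proposal is correct, but it takes a genuinely different route from the paper for the cases $\ell\ge 1$. The paper, after the same reductions ($|C|\le 2$ via \Cref{prop:dic:cpoly}, $\ell\ge 3$ trivial without the non-emptiness constraint), handles $\ell=1$ and $\ell=2$ by two separate bespoke constructions: auxiliary graphs $G_1$, $G_2$ in which each $v\in V$ is replaced by a small gadget (private degree-one partners, shared partners for same-colored neighbors, etc.) so that a one-to-one matching covering $U$ exists iff the instance is a yes-instance; under the non-emptiness constraint it redesigns these gadgets again as instances of the gap-one \textsc{General Factor} problem of Cornu\'ejols, with a noticeably more intricate three-vertex gadget for $\ell=1$. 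Your key lemma --- that for $|C|\ge 3$, $|M(v)|\le 3$ and $\ell\ge 1$, a set is $\ell$-fair iff every color occurs at most $\ell$ times in it --- collapses all of $\ell=1$, $\ell=2$ and $\ell\ge 3$ into a single capacitated-flow formulation, and the non-emptiness constraint becomes a mere lower bound on the $v$-to-sink arcs rather than a gadget redesign; this is more uniform than the paper's treatment and makes explicit that $\Delta_U\le 2$ is only needed when $\ell=0$ (consistent with the paper's \Cref{prop:dic:mov:algo}, which needs only $\Delta_V\le 4$ for \textsc{MoV}). For $\ell=0$ the two proofs are essentially the same algorithm in different clothing: your exactly-one-``on''-neighbor 2-SAT constraints are precisely a proper 2-coloring condition on the paper's auxiliary graph $G_0$ (whose bipartiteness the paper tests), and your unit clauses for ineligible $v$ and degree-one $u$ replace the paper's preprocessing; your non-emptiness criterion (all $v$ eligible, all $u$ of degree one) also matches the paper's. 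Two trivial loose ends you should state explicitly: a degree-zero vertex in $U$ makes the instance a no-instance (your 2-SAT and flow formulations would otherwise silently ignore it), and ``trivial'' for $\ell=0$, $|C|\ge 4$ means yes iff $U=\emptyset$ (and additionally $V=\emptyset$ under non-emptiness); the paper glosses over the same points.
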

\begin{proof}
    We first consider the case without the non-emptiness constraint.
	Consider an instance $\mathcal{I} = (G=(U \cupdot V,E),C,\col,\ell)$ of \textsc{Max-Min Fair Matching}.
	Since \textsc{Max-Min Fair Matching} is polynomial-time solvable for $|C| = 2$ by \Cref{prop:dic:cpoly}, we assume that $|C| \ge 3$.
	We also assume that $\ell < 3$, since any matching in $G$ is $\ell$-fair for $\ell \ge 3$.

	Suppose that $\ell = 0$. 
	We may assume that $|C| = 3$, since $|C| \ge 4$ would imply that there is no $0$-fair matching.
	Thus, for each vertex $v\in V$ we have to either assign all vertices from $N(v)$ or no vertex from $N(v)$ to $v$.
	We further may assume that there is no vertex $u \in U$ with $\deg(u) \le 1$, since such vertices can be easily preprocessed---for the neighbor $v \in V$ of $u$, simply delete $v$ and $N(v)$ if $N(v)$ is 0-fair, otherwise we can conclude that the given instance is a no-instance.
	We may also assume that $N(v)$ is 0-fair for every $v \in V$, as $|C|\geq 3$ and $\Delta_V=3$ and otherwise we cannot assign any vertex to $v$ (so we delete $v$).
	Consider the graph $G_0$ whose vertex set is $V$ and whose edge set contains an edge $\{ v, v' \}$ if and only if there is a vertex $u \in U$ with $N(u) = \{ v, v' \}$ (note that all vertices from $U$ have degree two in $G$).
	We claim that $\mathcal{I}$ is a yes-instance if and only if $G_0$ is bipartite.

	If there is a $0$-fair matching $M$ in $G$, then the vertex set $S = \{ v \in V \mid M(v) \ne \emptyset \}$ is an independent set in $G_0$, as $|M(v)|=3$ for all $v\in V$.
	Moreover, at least one endpoint of every edge in $G_0$ is in $S$, since $M$ is left-perfect.
	It follows that $G_0$ is bipartite.
	Conversely, if $G_0$ is bipartite, then take one bipartition $(V_1', V_2')$ of $G_0$.
	Since every edge in $G_0$ is incident with one vertex in $V_1'$, we can construct a $0$-fair matching $M$ by setting $M(v_1) = N_{G}(v_1)$ for every $v_1 \in V_1'$ and $M(v) = \emptyset$ for every $v_2 \in V_2'$.

	Suppose that $\ell = 1$.
	We again construct an auxiliary graph $G_1$.
	First add $U$ to $G_1$.
	We do the following for every $v \in V$:
	\begin{itemize}
		\item
		If $N(v)$ contains at most one vertex of each color, then add a degree-one neighbor to every $u \in N(v)$.
		\item
		If every vertex in $N(v)$ has the same color, then add a vertex $v'$ to $G_1$ and add edges $\{ u, v' \}$ for every $u \in N(v)$.
		\item
		Otherwise, $N(v)$ consists of two vertices $u_1, u_2$ of color, say $\alpha$, and one vertex $u_3$ of color, say $\beta$.
		Then, add two vertices:
		one vertex adjacent to $u_1$ and $u_2$ and another adjacent to $u_3$.
	\end{itemize}
	It is straightforward to verify that $G_1$ has a one-to-one matching that covers every vertex in $U$ if and only if $\mathcal{I}$ has a $\ell$-fair left-perfect many-to-one matching (for the correctness of the last case note that we assume that $|C|\geq 3$).

	Finally, suppose that $\ell = 2$.
	We construct an auxiliary graph $G_2$.
	First add $U$ to $G_1$.
	We do the following for every $v \in V$:
	\begin{itemize}
		\item if $N(v)$ consists of three vertices of the same color, then add two vertices $v_1$ and $v_2$ and add edges such that both $v_1$ and $v_2$ are adjacent to every vertex of $N(v)$, and
		\item otherwise, add a degree-one neighbor to every vertex in $N(v)$.
	\end{itemize}
	It is straightforward to verify that $G_2$ has a one-to-one matching that covers every vertex in $U$ if and only if $\mathcal{I}$ has a $\ell$-fair left-perfect many-to-one matching.
	
	\medskip
	If we impose the non-emptiness constraint, then the algorithm needs to be adapted. 
	For $\ell=0$, we have a yes-instance if and only if each each vertex from $U$ has only one neighbor and $N(v)$ is $0$-fair for each $v\in V$. 
	For $\ell=1$, we reduce the problem to an instance of the polynomial-time solvable variant of \textsc{General Factor} introduced in \Cref{prop:dic:mov:algo}: 
    We modify the given graph $G$ and set the allowed degrees for every $v \in V$ as follows.
	\begin{itemize}
		\item
		If $N(v)$ contains at most one vertex of each color, then we set the allowed degrees of $v$ to $\{1,2,3\}$.
		\item
		If every vertex in $N(v)$ has the same color, then we set the allowed degrees of $v$ to $\{1\}$
		\item
		Otherwise, $N(v)$ consists of two vertices $u_1, u_2$ of color, say $\alpha$, and one vertex $u_3$ of color, say $\beta$.
		Then, we replace $v$ by three vertices $v_1$, $v_2$, and $v_3$. 
		$v_1$ is adjacent to $u_1$, $u_2$, $v_2$, and $v_3$, and $v_1$ has allowed degrees $\{2\}$. 
		$v_2$ is adjacent to $u_3$, $v_1$, and $v_3$, and $v_2$ has allowed degrees $\{2\}$. 
		$v_3$ is adjacent to $v_1$ and $v_2$ and has allowed degrees $\{0,1\}$. 
	\end{itemize}
	For $\ell=2$, we again reduce the problem to an instance of the polynomial-time solvable variant of \textsc{General Factor} introduced in \Cref{prop:dic:mov:algo} on the given graph $G$: 
    We set the allowed degrees of $v\in V$ to $\{1,2\}$ if  $N(v)$ consists of three vertices of the same color and to $\{1,2,3\}$ otherwise.
\end{proof}

\subparagraph*{Hardness.}
Note that we have already proven in \Cref{pr:hardnessMaxMin} that \textsc{Max-Min Fair Matching} is NP-hard for $\Delta_U \ge 3$ and $\Delta_V \ge 3$. Thus, to establish \Cref{thm:dic:mm}, it remains to prove the following:
\begin{proposition}
	\textsc{Max-Min Fair Matching} is NP-hard for $\Delta_U = 2$ and $\Delta_V = 4$.
\end{proposition}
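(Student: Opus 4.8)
The plan is to reduce from the same restricted variant of \textsc{Satisfiability} used in \Cref{prop:mov:hardness} (each clause has exactly three literals and each variable occurs exactly twice positively and twice negatively) \cite{DBLP:journals/eccc/ECCC-TR03-049}, again setting $\ell = 0$ and using three colors $\alpha, \beta, \gamma$. The key structural observation I would exploit is that, with three colors and $\ell = 0$, a set $M(v)$ is fair if and only if all three colors occur equally often; since $\Delta_V \le 4$ this forces $|M(v)| \in \{0, 3\}$, so every \emph{active} vertex $v \in V$ (one with $M(v) \ne \emptyset$) receives exactly one vertex of each color and hence leaves exactly one of its (up to four) neighbors unmatched to it. When the four neighbors of $v$ carry, say, two vertices of color $\alpha$ and one each of $\beta$ and $\gamma$, the only freedom is which of the two $\alpha$-neighbors to drop; this binary choice is precisely what I would use to encode a Boolean value.

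For the \emph{variable gadget} of a variable $x$ I would introduce a right vertex $v_x$ whose four left neighbors have colors $\alpha, \alpha, \beta, \gamma$, where the two $\alpha$-neighbors $u_x^{\ps}$ and $u_x^{\ngt}$ each have degree two, their second edge leading into the clause gadgets corresponding to the positive (resp.\ negative) occurrences of $x$. Forcing $v_x$ to be active (for instance by attaching a degree-one left neighbor, which must be matched to $v_x$) then makes $v_x$ drop exactly one of $u_x^{\ps}, u_x^{\ngt}$; the dropped vertex, being left-perfect-matched, must be absorbed on the clause side, which I read as making the corresponding literal available to satisfy its clauses. For the \emph{clause gadget} of a clause $y$ I would build a constant-size arrangement of right vertices of degree at most four that, under $\ell = 0$, admits a fair left-perfect matching if and only if at least one of the three incident literal-vertices is made available by its variable gadget, mirroring the $v_y/v_y'$ mechanism of \Cref{prop:mov:hardness}. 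Throughout I would verify that every left vertex has degree at most two and every right vertex degree at most four; the stronger \textsc{Max-Min} fairness (all three colors equal, rather than only the top two as for \textsc{MoV}) is what lets me shave the right-degree bound from five down to four, since no padding neighbors are needed to balance a third, less frequent color.

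For correctness I would argue both directions in the usual way: a satisfying assignment dictates, in each variable gadget, which $\alpha$-neighbor to drop and thereby a fair left-perfect matching, and each clause gadget can then be completed because at least one literal is available; conversely, any fair left-perfect matching forces each $v_x$ into one of its two consistent configurations, defining a truth assignment, and the clause gadgets can be left-perfectly matched only when every clause has an available literal, i.e.\ is satisfied. I expect the main obstacle to be the gadget design itself: unlike the \textsc{MoV} construction, the rigid $\{0,3\}$ size pattern of active vertices leaves almost no slack, so the clause gadget must simultaneously (i)~propagate the ``at least one literal true'' semantics, (ii)~keep all left degrees at most two and all right degrees at most four, and (iii)~guarantee that every left vertex can be matched in \emph{some} configuration exactly when the clause is satisfiable. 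Checking that these three requirements can be met together, and that no unintended fair matching exists, is the delicate part; the degree bookkeeping and the case analysis over which neighbor each active right vertex drops are then routine once the gadgets are fixed.
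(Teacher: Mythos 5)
There is a genuine gap, and it sits exactly where you pushed the difficulty: the clause gadget you postulate cannot exist. With three colors and $\ell = 0$, \textsc{Max-Min} fairness forces every active right vertex to absorb \emph{exactly one vertex of each color} (as you note, $|M(v)| \in \{0,3\}$). Now consider any self-contained clause gadget: a set $R$ of right vertices, internal left vertices whose neighborhoods lie entirely in $R$ (say $I_\alpha, I_\beta, I_\gamma$ of them per color), plus the three incident literal vertices, all of color $\alpha$, of which some number $f$ are ``forced in'' because their variable gadget dropped them. All internal left vertices and all forced-in literal vertices must be matched inside $R$, and nothing else can be, so if $a$ denotes the number of active right vertices of the gadget, a fair matching requires $I_\alpha + f = a$, $I_\beta = a$, and $I_\gamma = a$. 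Hence $f = I_\beta - I_\alpha$ is \emph{uniquely determined} by the gadget: it can tolerate exactly one value of $f$, and can never simultaneously accept $f = 1, 2, 3$ while rejecting $f = 0$. So requirement (iii) in your plan is not merely delicate, it is unachievable for any closed gadget, which is fatal for a reduction from \textsc{Satisfiability}, whose clause semantics (``at least one true literal'') is inherently of variable multiplicity. A second, independent problem is fan-out: in the SAT variant you start from, each variable occurs twice positively and twice negatively, but your variable gadget exposes only one degree-two vertex per polarity ($u_x^{\ps}$ and $u_x^{\ngt}$ each have one edge to $v_x$ and one edge left over), so the truth value of $x$ can reach only one positive and one negative clause; duplicating these vertices breaks the gadget, since a single degree-four right vertex cannot retain two $\alpha$-vertices in a $0$-fair set.

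This counting rigidity is precisely why the paper does \emph{not} reduce from \textsc{Satisfiability} for \textsc{Max-Min} (that route is used only for \textsc{MoV}, in \Cref{prop:mov:hardness}, where fairness constrains just the two most frequent colors and extra vertices of a third color can be absorbed). Instead, the paper reduces from the bounded-occurrence variant of \textsc{3-Dimensional Matching}, exactly as in \Cref{pr:hardnessMaxMin}: the exact-cover semantics ``each element lies in exactly one chosen triple'' matches the exact one-of-each-color absorption that $0$-fairness imposes. Concretely, each element $s$ gets three copies $u_s^1, u_s^2, u_s^3$ (one per triple containing $s$), three right vertices $v_s^1, v_s^2, v_s^3$, and four auxiliary left vertices $a_s^1, a_s^2, b_s^1, b_s^2$ of the two other colors wired so that any fair left-perfect matching sends \emph{exactly one} of the three copies out to a triple vertex; triple vertices then absorb one vertex of each color, so nonempty triple vertices form a perfect 3D-matching. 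If you want to salvage your approach, the lesson is to pick a source problem whose satisfaction condition is an exact count rather than a threshold; with ``at least one,'' no amount of gadget engineering under $\ell = 0$ and three colors will close the gap.
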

\begin{proof}
	Our construction for this proposition is similar but more involved than the one for \Cref{pr:hardnessMaxMin}:

	\proofsubparagraph*{Construction.} We reduce from the variant of \textsc{3-Dimensional Matching} where each element occurs in at most three triples (as defined in the proof of  \Cref{pr:hardnessMaxMin}). 
	Given an instance $(X\cupdot Y \cupdot Z, T)$ of \textsc{3-Dimensional Matching}, we construct an instance of \textsc{Max-Min Fair Matching} as follows.
	Again, we set $\ell=0$ and introduce three colors $\alpha$, $\beta$, and $\gamma$.
	For each triple $t\in T$, we add a vertex $v_t$ to $V$ and for each element $s\in X\cupdot Y \cupdot Y$, we add three vertices $v_s^1$, $v_s^2$, and $v_s^3$ to $V$.
	Moreover, for each element $s\in X$, we do the following:
	\begin{itemize}
		\item We introduce three vertices $u^1_s$, $u^2_s$, and $u^3_s$ of color $\alpha$ to $U$ and connect each of them to a different vertex from $V$ corresponding to one of the three triples in which $s$ occurs.
		\item We connect $u^i_s$ to $v^i_s$ for each $i\in [3]$.
		\item We add two vertices $a^1_s$ and $a^2_s$ of color $\beta$ and two vertices $b^1_s$ and $b^2_s$ of color $\gamma$ to $U$.
		We connect $a_s^1$ to $v_s^1$ and $v_s^3$, $b_s^1$ to $v_s^2$ and $v_s^3$, and $a_s^2$ and $b_s^2$ to $v_s^1$ and $v_s^2$.
		Note that each of these four vertices has exactly two neighbors (see \Cref{fig:abv} for an illustration).
	\end{itemize}
	For each element $s\in Y$ (resp., $s\in Z$), we do exactly the same except that vertices $u^i_s$ have color $\beta$ (resp., $\gamma$), vertices $a^i_s$ have color $\gamma$ (resp., $\alpha$), and vertices $b^i_s$ have color $\alpha$ (resp., $\beta$).

	It is easy to see that $\deg(u)=2$ for all $u\in U$ and $\deg(v)\leq 4$ for all $v\in V$.
	The general idea of the reduction is that for each $s\in X\cupdot Y \cupdot Z$, exactly one of $u^1_s$, $u^2_s$, and $u^3_s$ cannot be matched to one of $v_s^1$, $v_s^2$, and $v_s^3$.
	This implies that exactly one of $u^1_s$, $u^2_s$, and $u^3_s$ must be matched to a vertex $v_t$ for some $t\in T$.
	As we show, this can be achieved in a 0-fair matching if and only if there is a perfect 3D-matching.

	\smallskip
	{\bfseries ($\Rightarrow$)}
	Assume that there is a perfect 3D-matching $T'\subseteq T$ in the given \textsc{3-Dimensional Matching} instance.
	We construct a $0$-fair left-perfect many-to-one matching $M$ as follows.
	For each $t\in T'$ and $s\in t$, we match the vertex $u_s^i$ (for some $i \in [3]$) to $v_t$ where $u_s^i$ is a vertex adjacent to $v_t$.
	Then, $M(v_t)$ is comprised of three vertices of distinct colors, and hence $0$-fair.
	Furthermore, we define $M(v_s^1)$, $M(v_s^2)$, and $M(v_s^3)$ for each $s\in X\cupdot Y \cupdot Z$ as follows.
	Note that exactly one of $u_s^1$, $u_s^2$, and $u_s^3$ is already matched to $v_t$ for some $t \in T'$.
	\begin{itemize}
		\item If $u_s^1$ is already matched, then let $M(v_s^1) = \emptyset$, $M(v_s^2) = \{ u_s^2, a_s^2, b_s^2 \}$, and $M(v_s^3) = \{ u_s^3, a_s^1, b_s^1\}$.
		\item If $u_s^2$ is already matched, then let $M(v_s^1) = \{ u_s^1, a_s^2, b_s^2\}$, $M(v_s^2) = \emptyset$, and $M(v_s^3) = \{ u_s^3, a_s^1, b_s^1\}$.
		\item If $u_s^3$ is already matched, then let $M(v_s^1) = \{ u_s^1, a_s^1, b_s^2\}$, $M(v_s^2) = \{ u_s^2, a_s^2, b_s^1 \}$, and $M(v_s^3) = \emptyset$.
	\end{itemize}
	It is easy to see that in every case, $M(v_s^1)$, $M(v_s^2)$, and $M(v_s^3)$ are $0$-fair.
	Thus, the constructed matching is $0$-fair and clearly a left-perfect many-to-one matching.

	\begin{figure}[t]
		\begin{center}
		\begin{tikzpicture}[yscale=1.3]
			\node[triangle, label={[label distance=0.15cm]270:$a_s^1$}] (a1) at (0, 0) {};
			\node[triangle, label={[label distance=0.15cm]270:$a_s^2$}] (a2) at (1, 0) {};
			\node[itriangle, label={[label distance=0.15cm]90:$b_s^1$}] (b1) at (2, 0) {};
			\node[itriangle, label={[label distance=0.15cm]90:$b_s^2$}] (b2) at (3, 0) {};

			\node[vertex, label={[label distance=0.15cm]90:$v_s^1$}] (v1) at (0, 1) {};
			\node[vertex, label={[label distance=0.15cm]90:$v_s^2$}] (v2) at (1.5, 1) {};
			\node[vertex, label={[label distance=0.15cm]90:$v_s^3$}] (v3) at (3, 1) {};

			\draw (v3) -- (a1) -- (v1);
			\draw (v2) -- (a2) -- (v1);
			\draw (v3) -- (b1) -- (v2);
			\draw (v2) -- (b2) -- (v1);
		\end{tikzpicture}
		\end{center}
		\caption{The connection between $v_s^1, v_s^2, v_s^3$ and $a_s^1, a_s^2, b_s^1, b_s^2$.}
		\label{fig:abv}
	\end{figure}
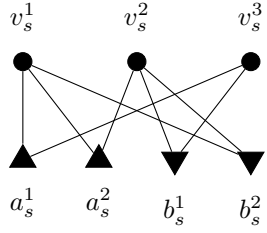

	{\bfseries ($\Leftarrow$)}
	Assume that there is a $0$-fair left-perfect many-to-one matching $M$ in the constructed \textsc{Max-Min Fair Matching} instance.
	We show that $\{t\in T\mid M(v_t)\neq \emptyset\}$ constitutes a perfect 3D-matching.
	As $M$ is 0-fair, for every triplet $t \in T$ with $M(v_t) \ne \emptyset$, every element $s \in t$ satisfies $u_s^i \in M(v_t)$ for some $i \in [3]$.
	It thus suffices to show that for every $s \in X \cupdot Y \cupdot Z$, exactly one of $u_s^1$, $u_s^2$, and $u_s^3$ is matched to $v_t$ for some $t \in T$ in $M$ with $s\in t$.
	Consider an element $s \in X\cupdot Y \cupdot Z$ and let $c$ (resp., $c'$) be the color of $u_s^1$, $u_s^2$, and $u_s^3$ (resp., $a_s^1$ and $a_s^2$).
	As $M$ is left-perfect, we have $a_s^1, a_s^2 \in M(v_s^1) \cup M(v_s^2) \cup M(v_s^3)$ (recall that $N(a_s^1), N(a_s^2) \subseteq \{ v_s^1, v_s^2, v_s^3 \}$).
	By construction, $N(\{ v_s^1, v_s^2, v_s^3 \})$ has no vertex of color $c'$ other than $a_s^1$ and $a_s^2$, and thus as $M$ is left-perfect and $a_s^1, a_s^2 \in M(v_s^1) \cup M(v_s^2) \cup M(v_s^3)$,  the set $M(v_s^1) \cup M(v_s^2) \cup M(v_s^3)$ contains exactly two vertices of color $c'$.
	Since $M$ is 0-fair, $M(v_s^1) \cup M(v_s^2) \cup M(v_s^3)$ thus contains exactly two vertices of color $c$.
	This implies that exactly one of $u_s^1, u_s^2, u_s^3$ is excluded from $M(v_s^1) \cup M(v_s^2) \cup M(v_s^3)$.
	Consequently, as $M$ is left-perfect, for every $s\in X\cupdot Y \cupdot Z$, it holds that there is exactly one $i^\star \in [3]$ such that $M(u_s^{i^\star})=v_{t}$ for some $t \in T$ with $s\in t$.
	Thus, we can conclude that $\{t\in T\mid M(v_t)\neq \emptyset\}$ is a perfect 3D-matching.
\end{proof}

This result also clearly shows hardness for arbitrary $\Delta_U \ge 2$ and $\Delta_V \ge 4$, as we can simply add an isolated trivial yes-instance with the respective vertex degrees to the constructed instance. 

\toappendix{

\section{Remainder of the Proof of \Cref{prop:dic:mov:algo}.}

\movdel*
\begin{proof}
	
Recall that we need to construct a subgraph $H_v$ for every $v \in V$ that satisfies the following:
\begin{equation*} \tag{$\star$}
\parbox{0.9\textwidth}{
A vertex set $S \subseteq N_G(v)$ is $\ell$-fair if and only if there is a spanning subgraph $H_v' = (W_v, F_v')$ of $H_v$ such that $\deg_{H_v'}(u) = 1$ for every $u \in S$, $\deg_{H_v'}(u) = 0$ for every $u \in N(v) \setminus S$, and $\deg_{H_v'}(v') \in L(v')$ for every $v' \in W_v' \setminus U$.
}
\end{equation*}

We now describe the construction of $H_v$ for $v\in V$ for different situations depending on $N_G(v)$ and $\ell$.
We omit the proofs of $(\star)$ for each $H_v$, since they are rather straightforward when $H_v$ is given and start by considering the case without the non-emptiness constraint.
For $U' \subseteq U$, let $\oc(U')$ denote the $|C|$-dimensional vector whose $i$th entry is the number of occurrences of the $i$th most frequently occurring color in $U'$, i.e., $\max^i_{c \in C} |U_c'|$.
To simplify notation, we omit trailing zero entries in $\oc(U')$.
We consider more than a dozen of cases depending on $\ell$ and $\oc(N(v))$ and start by discussing some easy cases:
If $N_G(v)$ consists of vertices of one color, then independent of $\ell$, $H_v$ consists of a vertex $v_1$ with $L(v_1) = \{ 0, \dots, \ell \}$ and all vertices from $N_G(v)$, where $v_1$ is adjacent to all vertices from $N_G(v)$.
So we may assume that $N_G(v)$ has at least two colors.
Moreover, if $\ell \ge \oc(N_G(v))_1$, then any subset  $S \subseteq N_G(v)$ is $\ell$-fair.
In this case, $H_v$ consists of a vertex $v_1$ with $L(v_1) = \{ 0, \dots, \deg_G(v) \}$ and all vertices from $N_G(v)$, where $v_1$ is adjacent to all vertices from $N_G(v)$.
So we may also assume that $\ell < \oc(N(v))_1$.
From  $\ell < \oc(N(v))_1$, $\Delta_V\leq 4$, and that $N_G(v)$ has at least two colors for each vertex $v\in V$, it follows that  $\ell \le 2$ needs to hold.
For all remaining cases the construction of $H_v$ is given in \Cref{fig:dic:ell2} (for $\ell = 2$), \Cref{fig:dic:ell1} (for $\ell = 1$), and \Cref{fig:dic:ell0} (for $\ell = 0$). 

If we impose the non-emptiness constraint, i.e., in the matching $M$ to be found $M(v)\neq \emptyset$ for each $v\in V$ holds, then we need to slightly modify the above presented constructions for $H_v$. 
For $\ell=0$ and $\ell=2$ these modifications are straightforward. 
For $\ell=0$, for instance, for $\oc(N(v)) = (2, 1, 1)$ (\Cref{fig:dic:ell0:211}), we need to change the set of allowed degrees for $v_4$ to $\{0,1\}$ (thereby forcing that two of $v_1$, $v_2$ and $v_3$ are matched to a vertex from $U$), and for $\oc(N(v)) = (2, 2)$ (\Cref{fig:dic:ell0:22}), we need to delete the edge between $v_1$ and $v_2$ (thereby forcing that both $v_1$ and $v_2$ are matched to vertices from $U$). 
For $\ell=2$, we need to adapt the construction presented in \Cref{fig:dic:ell2}  by changing the set of allowed degrees of $v_1$ to $\{0,2,3\}$.
For $\ell=1$ the modifications are more intricate.
Thus, we explicitly present the constructions for this case in \Cref{fig:dic:ell1:nonempty}.
\end{proof}

\begin{figure}
	\captionsetup[subfigure]{justification=centering}
	\centering
	\begin{subfigure}{.34\textwidth}
	\begin{tikzpicture}
		\node[triangle, label={[label distance=0.15cm]180:$u_1$}] (u1) at (0, 3) {};
		\node[triangle, label={[label distance=0.15cm]180:$u_2$}] (u2) at (0, 2) {};
		\node[triangle, label={[label distance=0.15cm]180:$u_3$}] (u3) at (0, 1) {};
		\node[itriangle, label={[label distance=0.15cm]0:$u_4$}] (u4) at (0, 0) {};

		\node[vertex, label={[label distance=0.15cm]0:$v_1 \colon \{ 0, 1, 2, 3 \}$}] (v1) at (2, 2.3) {};
		\node[vertex, label={[label distance=0.15cm]0:$v_2 \colon \{ 1 \}$}] (v2) at (2, .7) {};

		\draw (u1) -- (v1);
		\draw (u2) -- (v1);
		\draw (u3) -- (v1);
		\draw (u4) -- (v2);
		\draw (v1) -- (v2);
	\end{tikzpicture}
	\caption{$\oc(N(v)) = (3, 1)$.}
	\end{subfigure}
	\caption{The construction of $H_v$ for the case that $\ell = 2$.}
	\label{fig:dic:ell2}
\end{figure}
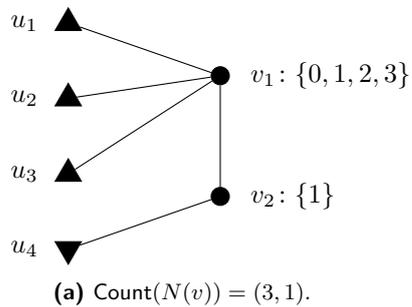

\begin{figure}
	\captionsetup[subfigure]{justification=centering}
	\centering
	\begin{subfigure}{.3\textwidth}
		\centering
		\begin{tikzpicture}
			\node[triangle, label={[label distance=0.15cm]180:$u_1$}] (u1) at (0, 1) {};
			\node[itriangle, label={[label distance=0.15cm]0:$u_2$}] (u2) at (0, 0) {};
			\node[vertex, label={[label distance=0.15cm]270:$v_1 \colon \{ 0, 2 \}$}] (v1) at (2, .5) {};
			\draw (u1) -- (v1) -- (u2);
		\end{tikzpicture}
		\caption{$\oc(N(v)) = (1, 1)$.}
	\end{subfigure}
	\begin{subfigure}{.3\textwidth}
		\centering
		\begin{tikzpicture}
			\node[triangle, label={[label distance=0.15cm]180:$u_1$}] (u1) at (0, 2) {};
			\node[triangle, label={[label distance=0.15cm]180:$u_2$}] (u2) at (0, 1) {};
			\node[itriangle, label={[label distance=0.15cm]0:$u_3$}] (u3) at (0, 0) {};

			\node[vertex, label={[label distance=0.15cm]90:$v_1 \colon \{ 1 \}$}] (v1) at (2, 1.6) {};
			\node[vertex, label={[label distance=0.15cm]270:$v_2 \colon \{ 1 \}$}] (v2) at (2, .4) {};

			\draw (u1) -- (v1) -- (u2);
			\draw (v1) -- (v2) -- (u3);
		\end{tikzpicture}
		\caption{$\oc(N(v)) = (2, 1)$.}
	\end{subfigure}
	\begin{subfigure}{.35\textwidth}
		\centering
		\begin{tikzpicture}
			\node[triangle, label={[label distance=0.15cm]180:$u_1$}] (u1) at (0, 2) {};
			\node[itriangle, label={[label distance=0.15cm]0:$u_2$}] (u2) at (0, 1) {};
			\node[square, label={[label distance=0.15cm]180:$u_3$}] (u3) at (0, 0) {};

			\node[vertex, label={[label distance=0.15cm]270:$v_1 \colon \{ 0, 2, 3 \}$}] (v) at (2, 1) {};

			\draw (v) -- (u1);
			\draw (v) -- (u2);
			\draw (v) -- (u3);
		\end{tikzpicture}
		\caption{$\oc((v)) = (1, 1, 1)$}
	\end{subfigure}

	\vspace{5ex}

	\begin{subfigure}{.45\textwidth}
		\centering
		\begin{tikzpicture}
			\node[triangle, label={[label distance=0.15cm]180:$u_1$}] (u1) at (0, 3) {};
			\node[triangle, label={[label distance=0.15cm]180:$u_2$}] (u2) at (0, 2) {};
			\node[triangle, label={[label distance=0.15cm]180:$u_3$}] (u3) at (0, 1) {};
			\node[itriangle, label={[label distance=0.15cm]0:$u_4$}] (u4) at (0, 0) {};

			\node[vertex, label={[label distance=0.15cm]0:$v_2 \colon \{ 1 \}$}] (v1) at (2, .7) {};
			\node[vertex, label={[label distance=0.15cm]0:$v_1 \colon \{ 1 \}$}] (v2) at (2, 2.3) {};

			\draw (u1) -- (v2);
			\draw (u2) -- (v2);
			\draw (u3) -- (v2);
			\draw (u4) -- (v1);
			\draw (v1) -- (v2);
		\end{tikzpicture}
		\caption{$\oc(N(v)) = (3, 1)$}
	\end{subfigure}
	\begin{subfigure}{.45\textwidth}
		\centering
		\begin{tikzpicture}
			\node[triangle, label={[label distance=0.15cm]180:$u_1$}] (u1) at (0, 3) {};
			\node[triangle, label={[label distance=0.15cm]180:$u_2$}] (u2) at (0, 2) {};
			\node[itriangle, label={[label distance=0.15cm]0:$u_3$}] (u3) at (0, 1) {};
			\node[itriangle, label={[label distance=0.15cm]0:$u_4$}] (u4) at (0, 0) {};

			\node[vertex, label={[label distance=0.1cm]270:$v_2 \colon \{ 2 \}$}] (v1) at (2, .7) {};
			\node[vertex, label={[label distance=0.1cm]90:$v_1 \colon \{ 2 \}$}] (v2) at (2, 2.3) {};
			\node[vertex, label={[label distance=0.1cm]270:$v_2' \colon \{ 1 \}$}] (w1) at (3.5, .7) {};
			\node[vertex, label={[label distance=0.1cm]90:$v_1' \colon \{ 1 \}$}] (w2) at (3.5, 2.3) {};

			\draw (u1) -- (v2) -- (u2);
			\draw (u3) -- (v1) -- (u4);
			\draw (v1) -- (v2);
			\draw (v1) -- (w1) -- (w2) -- (v2);
		\end{tikzpicture}
		\caption{$\oc(N(v)) =(2, 2)$}  \label{fig:dic:ell0:22}
	\end{subfigure} 

	\vspace{5ex}

	\begin{subfigure}{.45\textwidth}
		\begin{tikzpicture}
			\node[square, label={[label distance=0.15cm]180:$u_1$}] (u1) at (0, 3) {};
			\node[square, label={[label distance=0.15cm]180:$u_2$}] (u2) at (0, 2) {};
			\node[triangle, label={[label distance=0.15cm]180:$u_3$}] (u3) at (0, 1) {};
			\node[itriangle, label={[label distance=0.15cm]0:$u_4$}] (u4) at (0, 0) {};

			\node[vertex, label={[label distance=0.25cm]0:$v_1 \colon \{ 1 \}$}] (v1) at (2, 2.5) {};
			\node[vertex, label={[label distance=0.02cm]95:$v_2 \colon \{ 1 \}$}] (v2) at (2, 1.5) {};
			\node[vertex, label={[label distance=0.02cm]95:$v_3 \colon \{ 1 \}$}] (v3) at (2, .5) {};
			\node[vertex, label={[label distance=0.25cm]270:$v_4 \colon \{ 0, 1, 3 \}$}] (w) at (3.5, 1.5) {};

			\draw (u1) -- (v1) -- (u2);
			\draw (u3) -- (v2) -- (w);
			\draw (u4) -- (v3) -- (w);
			\draw (w)-- (v1);
		\end{tikzpicture}
		\caption{$\oc(N(v)) = (2, 1, 1)$} \label{fig:dic:ell0:211}
	\end{subfigure}
	\begin{subfigure}{.45\textwidth}
		\begin{tikzpicture}
			\node[triangle, label={[label distance=0.15cm]180:$u_1$}] (u1) at (0, 3) {};
			\node[itriangle, label={[label distance=0.15cm]0:$u_2$}] (u2) at (0, 2) {};
			\node[square, label={[label distance=0.15cm]180:$u_3$}] (u3) at (0, 1) {};
			\node[star4, label={[label distance=0.15cm]180:$u_4$}] (u4) at (0, 0) {};

			\node[vertex, label={[label distance=0.15cm]290:$v_1 \colon \{ 0, 2, 3, 4 \}$}] (v) at (2, 1.5) {};

			\draw (u1) -- (v) -- (u2);
			\draw (u3) -- (v) -- (u4);
		\end{tikzpicture}
		\caption{$\oc(N(v)) = (1, 1, 1, 1)$}
	\end{subfigure}
	\caption{The construction of $H_v$ for the case that $\ell = 0$.}
	\label{fig:dic:ell0}
\end{figure}

\begin{figure}
	\captionsetup[subfigure]{justification=centering}
	\centering
	\begin{subfigure}{.4\textwidth}
		\centering
		\begin{tikzpicture}
			\node[triangle, label={[label distance=0.15cm]180:$u_1$}] (u1) at (0, 2) {};
			\node[triangle, label={[label distance=0.15cm]180:$u_2$}] (u2) at (0, 1) {};
			\node[itriangle, label={[label distance=0.15cm]0:$u_3$}] (u3) at (0, 0) {};

			\node[vertex, label={[label distance=0.15cm]90:$v_1 \colon \{ 0, 1, 2 \}$}] (v1) at (2, 1.6) {};
			\node[vertex, label={[label distance=0.15cm]270:$v_2 \colon \{ 1 \}$}] (v2) at (2, .4) {};

			\draw (u1) -- (v1) -- (u2);
			\draw (v1) -- (v2) -- (u3);
		\end{tikzpicture}
		\caption{$\oc(N(v)) = (2, 1)$.}
	\end{subfigure}
	\begin{subfigure}{.4\textwidth}
	\begin{tikzpicture}
		\node[triangle, label={[label distance=0.15cm]180:$u_1$}] (u1) at (0, 3) {};
		\node[triangle, label={[label distance=0.15cm]180:$u_2$}] (u2) at (0, 2) {};
		\node[triangle, label={[label distance=0.15cm]180:$u_3$}] (u3) at (0, 1) {};
		\node[itriangle, label={[label distance=0.15cm]0:$u_4$}] (u4) at (0, 0) {};

		\node[vertex, label={[label distance=0.15cm]0:$v_1 \colon \{ 0, 1, 2 \}$}] (v1) at (2, 2.3) {};
		\node[vertex, label={[label distance=0.15cm]0:$v_2 \colon \{ 1 \}$}] (v2) at (2, .7) {};

		\draw (u1) -- (v1);
		\draw (u2) -- (v1);
		\draw (u3) -- (v1);
		\draw (u4) -- (v2);
		\draw (v1) -- (v2);
	\end{tikzpicture}
	\caption{$\oc(N(v)) = (3, 1)$.}
	\end{subfigure}

	\vspace{5ex}

	\begin{subfigure}{.4\textwidth}
	\begin{tikzpicture}
		\node[triangle, label={[label distance=0.15cm]180:$u_1$}] (u1) at (0, 3) {};
		\node[triangle, label={[label distance=0.15cm]180:$u_2$}] (u2) at (0, 2) {};
		\node[itriangle, label={[label distance=0.15cm]0:$u_3$}] (u3) at (0, 1) {};
		\node[itriangle, label={[label distance=0.15cm]0:$u_4$}] (u4) at (0, 0) {};

		\node[vertex, label={[label distance=0.1cm]0:$v_1 \colon \{ 1, 2 \}$}] (v1) at (2, 2.3) {};
		\node[vertex, label={[label distance=0.1cm]0:$v_2 \colon \{ 1, 2 \}$}] (v2) at (2, .7) {};

		\draw (u1) -- (v1) -- (u2);
		\draw (u3) -- (v2) -- (u4);
		\draw (v1) -- (v2);
	\end{tikzpicture}
	\caption{$\oc(N(v)) =(2, 2).$}
	\end{subfigure}
	\begin{subfigure}{.4\textwidth}
	\begin{tikzpicture}
		\node[square, label={[label distance=0.15cm]180:$u_1$}] (u1) at (0, 3) {};
		\node[square, label={[label distance=0.15cm]180:$u_2$}] (u2) at (0, 2) {};
		\node[triangle, label={[label distance=0.15cm]180:$u_3$}] (u3) at (0, 1) {};
		\node[itriangle, label={[label distance=0.15cm]0:$u_4$}] (u4) at (0, 0) {};

		\node[vertex, label={[label distance=0.1cm]0:$v_1 \colon \{ 0, 1, 2 \}$}] (v1) at (2, 2.3) {};
		\node[vertex, label={[label distance=0.1cm]0:$v_2 \colon \{ 1, 2 \}$}] (v2) at (2, .7) {};

		\draw (u1) -- (v1) -- (u2);
		\draw (u3) -- (v2) -- (u4);
		\draw (v1) -- (v2);
	\end{tikzpicture}
	\caption{$\oc(N(v)) =(2, 1, 1).$}
	\end{subfigure}

	\caption{The construction of $H_v$ for the case that $\ell = 1$.}
	\label{fig:dic:ell1}
\end{figure}

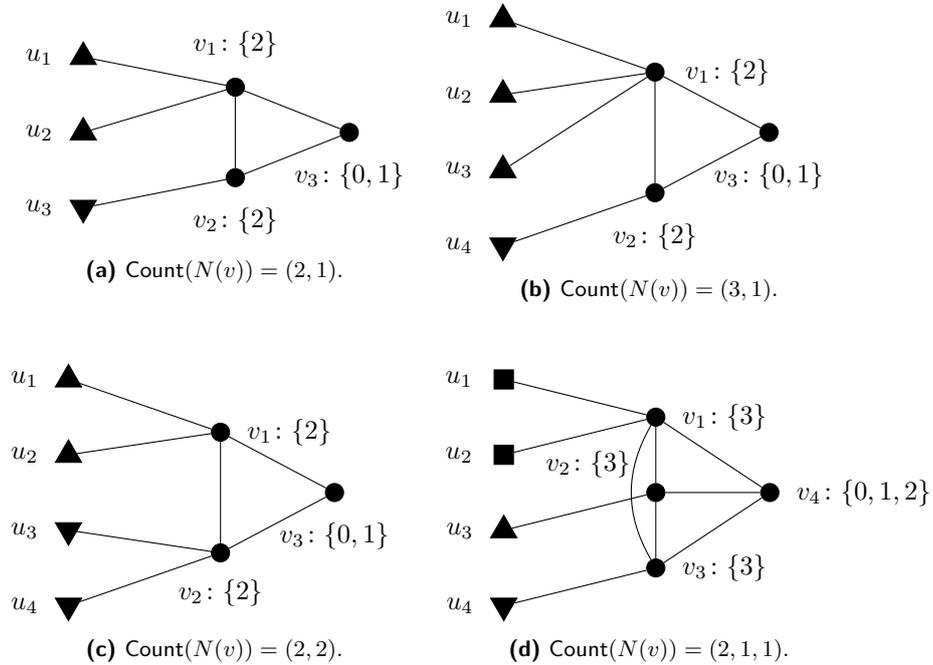
\begin{figure}
	\captionsetup[subfigure]{justification=centering}
	\centering
	\begin{subfigure}{.4\textwidth}
		\centering
		\begin{tikzpicture}
			\node[triangle, label={[label distance=0.15cm]180:$u_1$}] (u1) at (0, 2) {};
			\node[triangle, label={[label distance=0.15cm]180:$u_2$}] (u2) at (0, 1) {};
			\node[itriangle, label={[label distance=0.15cm]0:$u_3$}] (u3) at (0, 0) {};

			\node[vertex, label={[label distance=0.15cm]90:$v_1 \colon \{  2 \}$}] (v1) at (2, 1.6) {};
			\node[vertex, label={[label distance=0.15cm]270:$v_2 \colon \{ 2 \}$}] (v2) at (2, .4) {};
			\node[vertex, label={[label distance=0.15cm]270:$v_3 \colon \{ 0,1 \}$}] (v3) at (3.5, 1) {};

			\draw (u1) -- (v1) -- (u2);
			\draw (v1) -- (v2) -- (u3);
			\draw (v1) -- (v3);
			\draw (v2) -- (v3);
		\end{tikzpicture}
		\caption{$\oc(N(v)) = (2, 1)$.}
	\end{subfigure}
	\begin{subfigure}{.4\textwidth}
	\begin{tikzpicture}
		\node[triangle, label={[label distance=0.15cm]180:$u_1$}] (u1) at (0, 3) {};
		\node[triangle, label={[label distance=0.15cm]180:$u_2$}] (u2) at (0, 2) {};
		\node[triangle, label={[label distance=0.15cm]180:$u_3$}] (u3) at (0, 1) {};
		\node[itriangle, label={[label distance=0.15cm]0:$u_4$}] (u4) at (0, 0) {};

		\node[vertex, label={[label distance=0.15cm]0:$v_1 \colon \{ 2 \}$}] (v1) at (2, 2.3) {};
		\node[vertex, label={[label distance=0.15cm]270:$v_2 \colon \{ 2 \}$}] (v2) at (2, .7) {};
		\node[vertex, label={[label distance=0.15cm]270:$v_3 \colon \{ 0,1 \}$}] (v3) at (3.5, 1.5) {};

		\draw (u1) -- (v1);
		\draw (u2) -- (v1);
		\draw (u3) -- (v1);
		\draw (u4) -- (v2);
		\draw (v1) -- (v2);
		\draw (v1) -- (v3);
        \draw (v2) -- (v3);
	\end{tikzpicture}
	\caption{$\oc(N(v)) = (3, 1)$.}
	\end{subfigure}

	\vspace{5ex}

	\begin{subfigure}{.4\textwidth}
	\begin{tikzpicture}
		\node[triangle, label={[label distance=0.15cm]180:$u_1$}] (u1) at (0, 3) {};
		\node[triangle, label={[label distance=0.15cm]180:$u_2$}] (u2) at (0, 2) {};
		\node[itriangle, label={[label distance=0.15cm]0:$u_3$}] (u3) at (0, 1) {};
		\node[itriangle, label={[label distance=0.15cm]0:$u_4$}] (u4) at (0, 0) {};

		\node[vertex, label={[label distance=0.1cm]0:$v_1 \colon \{ 2 \}$}] (v1) at (2, 2.3) {};
		\node[vertex, label={[label distance=0.1cm]-90:$v_2 \colon \{ 2 \}$}] (v2) at (2, .7) {};
		\node[vertex, label={[label distance=0.15cm]270:$v_3 \colon \{ 0,1 \}$}] (v3) at (3.5, 1.5) {};

		\draw (u1) -- (v1) -- (u2);
		\draw (u3) -- (v2) -- (u4);
		\draw (v1) -- (v2);
		\draw (v1) -- (v3);
        \draw (v2) -- (v3);
	\end{tikzpicture}
	\caption{$\oc(N(v)) =(2, 2).$}
	\end{subfigure}
	\begin{subfigure}{.4\textwidth}
	\begin{tikzpicture}
		\node[square, label={[label distance=0.15cm]180:$u_1$}] (u1) at (0, 3) {};
		\node[square, label={[label distance=0.15cm]180:$u_2$}] (u2) at (0, 2) {};
		\node[triangle, label={[label distance=0.15cm]180:$u_3$}] (u3) at (0, 1) {};
		\node[itriangle, label={[label distance=0.15cm]0:$u_4$}] (u4) at (0, 0) {};

		\node[vertex, label={[label distance=0.1cm]0:$v_1 \colon \{ 3 \}$}] (v1) at (2, 2.5) {};
		\node[vertex, label={[label distance=0.1cm]160:$v_2 \colon \{ 3 \}$}] (v2) at (2, 1.5) {};
		\node[vertex, label={[label distance=0.1cm]0:$v_3 \colon \{ 3 \}$}] (v3) at (2, 0.5) {};
		\node[vertex, label={[label distance=0.1cm]0:$v_4 \colon \{ 0, 1, 2 \}$}] (v4) at (3.5, 1.5) {};

		\draw (u1) -- (v1) -- (u2);
		\draw (u3) -- (v2); 
		\draw (u4) -- (v3); 
		\draw (v1) -- (v2);
		\draw (v2) -- (v3);
		\path (v1) edge[bend right=30] node [right] {} (v3);
		\draw (v1) -- (v4);
		\draw (v2) -- (v4);
		\draw (v3) -- (v4);
	\end{tikzpicture}
	\caption{$\oc(N(v)) =(2, 1, 1).$}
	\end{subfigure}

	\caption{The construction of $H_v$ for the case that $\ell = 1$ with non-emptiness constraint.}
	\label{fig:dic:ell1:nonempty}
\end{figure}

}

\section{Fair Matching on Complete Bipartite Graphs} \label{se:cliques}
A natural special case of \textsc{Fair Matching} is when the underlying graph is complete, i.e., each vertex in $U$ can be assigned to any vertex in $V$.
This special case is also among the three problems introduced by Stoica
et al.\ \cite{DBLP:conf/atal/StoicaCDG20} (they called it \textsc{Fair Regrouping\_X}).
Stoica~et~al.~\cite{DBLP:conf/atal/StoicaCDG20} presented a straightforward
XP algorithm for \textsc{MoV Fair Matching} on a complete graph with size constraints parameterized by $|V|$ but left open the classical complexity.
We partially settle this open question by proving that \textsc{MoV Fair Matching} on complete bipartite graphs is polynomial-time solvable even with the non-emptiness constraint. 
In fact, we find a precise characterization of yes-instances, which turns out to be surprisingly simple.
However, it requires an intricate analysis to prove this, especially when the non-emptiness constraint is present. Afterwards, we prove an analogous result for \textsc{Max-Min Fair matching}.

To simplify notation, we assume that $C = \{ c_1, \dots, c_{|C|} \}$ and that $|U_{c_i}|\geq |U_{c_{i+1}}|$ for each $i\in [1,|C|-1]$ and set $|U_{c_i}| := 0$ for $i > |C|$.

\begin{restatable}{theorem}{clique}\label{th:clique}
	A \textsc{MoV Fair Matching} instance $\mathcal{I}=(G=(U \cupdot V,E),C,\col,\ell)$ with $G$ being a complete bipartite graph is a yes-instance if and only if $|U_{c_1}|\leq \ell k + \sum_{i\in [k]} |U_{c_{i+1}}|$.
	With the non-emptiness constraint, $\mathcal{I}$ is a yes-instance if and only if it additionally satisfies: $\ell > 0$ and $n\ge k$, or $\ell=0$ and $n \ge 2k$.
\end{restatable}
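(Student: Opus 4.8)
The plan is to first exploit completeness of $G$: since every vertex of $U$ may be sent to every vertex of $V$, a left-perfect many-to-one matching is nothing more than a non-negative integer matrix $(n_c(v))_{c \in C,\, v \in V}$ with prescribed row sums $\sum_{v \in V} n_c(v) = |U_c|$, where $n_c(v)$ counts the color-$c$ vertices assigned to $v$. Writing $a_i := |U_{c_i}|$ (so $a_1 \ge a_2 \ge \cdots$ and $a_i = 0$ for $i > |C|$), such a matrix is an $\ell$-fair matching precisely when each column satisfies $\max^1_{c} n_c(v) - \max^2_{c} n_c(v) \le \ell$, and the non-emptiness constraint additionally demands $\sum_c n_c(v) \ge 1$ for all $v$. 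I would prove the two directions of each biconditional separately.

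For \emph{necessity} of $a_1 \le \ell k + \sum_{i \in [k]} a_{i+1}$, fix any $\ell$-fair matching and set $b_v := \max_{c \ne c_1} n_c(v)$. A short case distinction, according to whether $c_1$ attains, ties for, or misses the per-column maximum, shows $n_{c_1}(v) \le b_v + \ell$ in every case, so summing over $v$ gives $a_1 = \sum_v n_{c_1}(v) \le \ell k + \sum_v b_v$. Each $b_v$ equals $n_{c(v)}(v)$ for some argmax color $c(v) \ne c_1$; grouping the sum by color and using $\sum_v n_c(v) = a_c$ yields $\sum_v b_v \le \sum_{c \in S} a_c$, where $S$ is the set of colors occurring as some $c(v)$. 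Since $|S| \le |V| = k$ and $c_1 \notin S$, monotonicity of $(a_i)$ bounds this by $a_2 + \cdots + a_{k+1}$, which is the claimed inequality. For the extra non-emptiness conditions, $k$ non-empty columns require $n = \sum_c a_c \ge k$; and when $\ell = 0$ a non-empty column must contain two colors tied for the maximum (a single color forces $\MOV \ge 1$), hence at least two vertices, forcing $n \ge 2k$.

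The substance is the converse. Given $a_1 \le \ell k + \sum_{i=2}^{k+1} a_i$, I would exhibit a fair matrix by an extremal/exchange principle: among all assignments of all vertices to the $k$ columns, take one minimizing the lexicographically sorted vector of column excesses $e_v := \max^1_c n_c(v) - \max^2_c n_c(v)$, and show the minimizer has $e_v \le \ell$ everywhere. If some column $w$ has $e_w > \ell$ with over-dominant color $c^\star$, I would move a single $c^\star$-vertex out of $w$ into a column where $c^\star$ is not more than $\ell$-dominant, strictly improving the sorted excess vector and contradicting minimality. The heart of the argument, and the \emph{main obstacle}, is showing that such a receiving column always exists: this is exactly where the global inequality supplies enough ``support'' from the non-dominant colors to absorb the excess of $c^\star$, and making it quantitative (bounding how concentrated $c^\star$ can be across the other columns, including the delicate case $c^\star \ne c_1$) amounts to running the necessity counting in reverse. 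An equivalent route is an explicit balanced construction that spreads each $c_i$ as evenly as possible and then stacks $c_1$ up to level $b_v + \ell$ per column; I would present whichever verification turns out shorter.

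Finally, for the non-emptiness direction I would start from a fair matching produced above, possibly with empty columns, and repair it. When $\ell > 0$ and $n \ge k$, an empty column can be filled by relocating one vertex from a column of size at least two: the filled column then holds a single vertex (excess $1 \le \ell$) and the donor's top-two counts each drop by at most one, preserving its fairness, while $n \ge k$ guarantees a donor exists. When $\ell = 0$ and $n \ge 2k$, every non-empty column already carries a tied pair, and I would instead redistribute such pairs so that each column receives at least two vertices forming a perfect tie, the bound $n \ge 2k$ being precisely what makes this possible. I expect this repair step, especially the $\ell = 0$ case where every column must retain an exact tie, to be the second most delicate part after the sufficiency construction.
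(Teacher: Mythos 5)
Your necessity arguments (both for the bare inequality and for the extra non-emptiness conditions) are correct and essentially identical to the paper's. The genuine gap is in sufficiency, which you yourself flag as ``the main obstacle'' and never close---and the specific exchange you propose is invalid. Take $\ell = 0$, $k = 2$, and $|U_{c_1}| = 2$, $|U_{c_2}| = |U_{c_3}| = 1$ (so $2 \le 0 + 1 + 1$ holds), and consider the assignment where $v_1$ receives both $c_1$-vertices plus the $c_2$- and $c_3$-vertices (excess $1$) and $v_2$ is empty (excess $0$). The over-dominant color is $c_1$, and the receiving column $v_2$ contains no $c_1$-vertex at all, yet moving one $c_1$-vertex there turns the sorted excess vector $(1,0)$ into $(1,0)$ again: the donor drops to excess $0$ but the receiver rises to excess $1$. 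So ``move one $c^\star$-vertex to a column where $c^\star$ is not dominant'' need not strictly improve your potential, and the contradiction with minimality evaporates; any fix must use the global inequality quantitatively (multi-vertex exchanges or a refined potential), which is precisely the part you left unproven. Your fallback ``balanced construction'' is worse than underspecified---it is the wrong instinct for MoV. If $|U_{c_2}| = \dots = |U_{c_{k+1}}| = m$ and $|U_{c_1}| = km$ with $\ell = 0$ (a yes-instance), spreading each color $c_i$, $i \ge 2$, evenly gives every column $b_v = m/k$, so stacking $c_1$ up to $b_v + \ell$ places only $m$ of the $km$ vertices of $c_1$. Capacity for $c_1$ comes from \emph{concentrating}, not balancing: the paper's construction pairs the $c_1$-surplus against one color per column ($|U'_{c_{i+1}}| + \ell$ vertices of $c_1$ together with all of $c_{i+1}$ in column $i$), then dumps the leftover small colors on $v_1$, which stays $\ell$-fair because each such color has at most $|U_{c_2}|$ vertices. (Even spreading is the right construction for the paper's \textsc{Max-Min} analogue, not for MoV.)

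The non-emptiness repair also has gaps. For $\ell > 0$ your plan matches the paper's, except that the relocated vertex must be of a \emph{most frequent} color of the donor: your justification ``the donor's top-two counts each drop by at most one, preserving its fairness'' is false for an arbitrary vertex, since removing a vertex of the unique second-most frequent color raises the donor's margin from $\ell$ to $\ell + 1$. For $\ell = 0$, ``redistribute tied pairs'' does not engage with the actual difficulty: after the greedy construction, the columns $v_2,\dots,v_{k'}$ do decompose into $c_1$/$c_{i+1}$ pairs, but $v_1$ can end up containing exactly three colors with equal multiplicity $\alpha$, and such a set cannot be cut into tied pairs---three vertices of three distinct colors form one $0$-fair set but cannot be split into two non-empty $0$-fair sets. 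The paper needs a case analysis plus an induction showing that such a set splits into $\kappa$ non-empty $0$-fair parts exactly when $2\kappa \le 3\alpha$, and this (together with the preliminary moves that reduce $v_1$ to that shape) is the bulk of the proof of the second statement; it is absent from your proposal.
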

\begin{proof}
	We first prove the theorem without the non-emptiness constraint and later extend the proof. 
	We assume that $V = \{ v_1, \dots, v_k \}$.
	
	\proofsubparagraph*{\textbf{No size constraints}.}
	We consider both directions separately and start by proving that if there is a left-perfect $\ell$-fair many-to-one matching $M$ in $\mathcal{I}$, then $|U_{c_1}|\leq \ell k + \sum_{i\in [k]} |U_{c_{i+1}}|$ needs to be satisfied:
	For each $i \in [k]$, let $c_i'$ denote the most frequent color in $M(v_i)$ among $C \setminus \{ c_1 \}$.
	Since $M(v_i)$ is $\ell$-fair for each $i \in [k]$, we then have $|U_{c_1}| = \sum_{i \in [k]} |M(v_i)_{c_1}| \le \sum_{i \in [k]} (|M(v_i)_{c_i'}| + \ell) \le \ell k + \sum_{i\in [k]} |U_{c_{i+1}}|$.

	We now prove the reverse direction: 
	If $\mathcal{I}$ fulfills $|U_{c_1}|\leq \ell k + \sum_{i\in [k]} |U_{c_{i+1}}|$, then there is a left-perfect $\ell$-fair many-to-one matching $M$ in $\mathcal{I}$. To do so, we assume that $\mathcal{I}$ satisfies this constraint and present an algorithm that constructs $M$. 

	We refer to the following algorithm as Algorithm 1.
	Let $U' := U$ and let $i := 1$. 
	We do the following as long as there are at least $|U'_{c_{i + 1}}| + \ell$ vertices of color $c_1$ in $U'$:
	\begin{itemize}
	\item
	Delete $|U'_{c_{i + 1}}|+\ell$ vertices of color $c_1$ from $U'$ and match them to $v_i$.
	\item
	Delete $|U'_{c_{i + 1}}|$ vertices of color $c_{i + 1}$ from $U'$ and match them to $v_i$.
	\item
	Increment $i$ by 1.
	\end{itemize}
	If after this, there are vertices of color~$c_1$ left in~$U'$, then we match $\min(|U'_{c_{i+1}}|,|U'_{c_1}|)$ vertices of color~$c_{i+1}$ and $|U'_{c_1}|$~vertices of color~$c_1$ from $U'$ to $v_i$. 
	Note that as the constraint $|U_{c_1}|\leq \ell k + \sum_{i\in [k]} |U_{c_{i+1}}|$ is fulfilled, Algorithm 1 terminates for $i\leq k$ and thus every vertex deleted from $U'$ is matched to some vertex from $V$.
	Finally, we match the remaining vertices in $U'$ to $v_1$. Note that $M(v_1)$ remains $\ell$-fair:
	$U'$ only contains vertices of color $c_{j}$ for $j\ge i + 1$ and $|U_{c_{j}}| \le |U_{c_2}|$ by our assumption ($U_{c_2}\subseteq M(v_1)$ by construction).
	Thus, $M$ is $\ell$-fair and we thus have constructed a left-perfect $\ell$-fair many-to-one matching.
	
	\proofsubparagraph*{\textbf{Non-emptiness constraint}.}
	Assume that there is a left-perfect $\ell$-fair many-to-one matching $M$ for $\mathcal{I}$ with $M(v)\neq \emptyset$ for each $v\in V$. 
	
	We start by proving the forward direction. 
	Since we have $M(v) \ne \emptyset$ for each $v\in V$, $n\geq k$ holds.
	For $\ell = 0$, we actually have $|M(v)| \ge 2$ and hence we have $n \ge 2k$.
	Applying again the same argument as for the first part without the non-emptiness constraint, 
	$|U_{c_1}|\leq \ell k + \sum_{i\in [k]} |U_{c_{i+1}}|$ needs to be satisfied:	
	For each $i \in [k]$, let $c_i'$ denote the most frequent color in $M(v_i)$ among $C \setminus \{ c_1 \}$.
	Since $M(v_i)$ is $\ell$-fair for each $i \in [k]$, we then have $|U_{c_1}| = \sum_{i \in [k]} |M(v_i)_{c_1}| \le \sum_{i \in [k]} (|M(v_i)_{c_i'}| + \ell) \le \ell k + \sum_{i\in [k]} |U_{c_{i+1}}|$.  
	
	We now turn to the reverse direction and prove that if $\mathcal{I}$ satisfies $|U_{c_1}|\leq \ell k + \sum_{i\in [k]} |U_{c_{i+1}}|$ and satisfies
	$\ell > 0$ and $n\geq k$ or satisfies $n \ge 2k$, then we can construct a left-perfect $\ell$-fair many-to-one matching. 
	For this, let $M$ be the matching constructed by Algorithm 1 presented in the first part for the case without non-emptiness constraint (Algorithm 1 is still applicable here as $|U_{c_1}|\leq \ell k + \sum_{i\in [k]} |U_{c_{i+1}}|$ is satisfied). 
	Suppose that $M$ matches vertices from $U$ to $k'$ vertices $\{ v_1,\dots, v_{k'} \}$ from $V$, i.e., $M(v_i) \ne \emptyset$ for $i \in [k']$.
	By the assumption that $|U_{c_1}|\leq \ell\cdot k + \sum_{i\in [k]} |U_{c_i}|$, it holds that $k\geq k'$. 
	Moreover, as argued above, $M(v_i)$ is $\ell$-fair for every $i \in [k']$.
	If $k' = k$, then we are done.
	So assume that $k' < k$.
	
	We now reassign some vertices and match them to $v_i$ for $i>k'$ to satisfy the non-emptiness constraint. We distinguish two cases based on whether $\ell>0$ or $\ell=0$:	
	
	If $\ell >0$, for each $j\in [k'+1,k]$, we do the following.
	Choose an arbitrary $v_i$ for $i\in [j-1]$ with $|M(v_i)|\geq 2$ and some vertex $u \in M(v_i)$ of the most frequent color in $M(v_i)$ (such a vertex always exists as $n\geq k$).
	We delete the edge $\{u,v_i\}$ from $M$ and match $u$ to $v_j$ instead.
	Clearly, $M(v_j)=\{u\}$ is $1$-fair.
	Moreover, $M(v_i)$ remains $\ell$-fair:
	Unless $\MOV(M(v_i)) = 0$, deleting $u$ does not increase $\MOV(M(v_i))$ (and in case $\MOV(M(v_i)) = 0$ this only increases the margin of victory by one).
	Since we always assume that $n \ge k$, this procedure eventually produces a left-perfect $\ell$-fair many-to-one matching. 
	
	It remains to consider the case $\ell=0$. We again describe an algorithm that takes as input the left-perfect $\ell$-fair many-to-one matching $M$ with $M(v_i)\neq \emptyset$ for $i\in [k']$	returned by Algorithm 1 and construct from this a matching respecting the non-emptiness constraint. 
	We start by executing the following steps exhaustively one after another and immediately stop if $M(v)\neq \emptyset$ for all $v\in V$:
	\begin{itemize}
		\item
		If $|M(v_i)| > 2$ for some $i \in [2, k']$, then let $u\in M(v_i)_{c_1}$ and $u'\in M(v_i)_{c_{i + 1}}$. We delete $\{u,v_i\}$ and $\{u',v_i\}$ from $M$ and match both of them to some vertex $v\in V$ with $M(v)=\emptyset$.
		Observe that $M(v_i)$ remains $0$-fair, as it only contains vertices of color $c_1$ and $c_{i+1}$ and that $M(v) = \{ u, u' \}$ also remains $0$-fair.
		\item
		If $M(v_1)$ contains vertices of at least four colors (note that $M(v_1)$ contains at least one vertex of color $c_1$ and $c_2$), then choose two vertices $u, u'$ of distinct colors in $C \setminus \{ c_1, c_2 \}$.
		Then, delete $\{u,v_1\}$ and $\{u',v_1\}$ from $M$ and set $M(v)=\{ u, u' \}$ for some $v\in V$ with $M(v)=\emptyset$. As the number of occurrences of the two most frequent colors (i.e., $c_1$ and $c_2$) in $M(v_1)$ does not change, $M(v_1)$ remains $0$-fair. 
		\item
		If $M(v_1)$ contains vertices of $c_1$ and $c_2$, and exactly one more color, say, $c_3'$, and $|M(v_1)_{c_1}| = |M(v_1)_{c_2}| > |M(v_1)_{c_3'}|$, then let $u\in M(v_1)_{c_1}$ and $u'\in M(v_1)_{c_2}$. We delete $\{u,v_1\}$ and $\{u',v_1\}$ from $M$ and set $M(v)=\{ u, u' \}$ for some $v\in V$ with $M(v)=\emptyset$. As the number of vertices of color $c_1$ and $c_2$ both decrease by exactly one, $M(v_1)$ remains $0$-fair.
	\end{itemize}
	Suppose that after executing this procedure there is still a $v\in V$ with $M(v)=\emptyset$. 
	Then, there are two possibilities: 
	If $M(v_1)$ contains only vertices of $c_1$ and $c_2$, then as long as $|M(v_1)|>2$ we can execute the procedure specified in the first case on $M(v_1)$ until $M(v)\neq \emptyset$ for all $v\in V$ (this is always possible as $n\geq 2k$ and in this case we have $|M(v_i)|\in {0,2}$ for all $i\in [2,k]$). 
	
	Otherwise, $M(v_1)$ contains vertices of three colors $c_1$, $c_2$, and $c'_3$. 
	As $M(v_1)$ is $0$-fair throughout the procedure, we then have $\alpha := |M(v_1)_{c_1}|=|M(v_1)_{c_2}|=|M(v_1)_{c_3'}|$ (note that by construction of Algorithm 1, we have that initially $c_1$ and $c_2$ appear at least as often as any other color in $M(v_1)$) and $|M(v_1)_c| = 0$ for all $c \in C \setminus \{ c_1, c_2, c_3' \}$.
	Since $|M(v)|=2$ or $|M(v)|=0$ for all $v\in V\setminus \{v_1\}$, there are $\frac{1}{2}(n - 3 \alpha) + 1$ vertices $v\in V$ with $M(v)\ne \emptyset$.
	Thus, it remains to show that $M(v_1)$ can be partitioned into $\kappa := k - \frac{1}{2}(n - 3 \alpha)$ sets that are all $0$-fair.
	By the assumption that $n \ge 2k$, we have $2 \kappa \le 3 \alpha$.
	
	As argued above, it suffices to show that the constructed set $M(v_1)$ can be partitioned into $\kappa$ $0$-fair sets for any integer $\kappa > 0$ with $2 \kappa \le 3 \alpha$, i.e., $\kappa \le \lfloor 3 \alpha / 2 \rfloor$.
	For each $i \in [3]$, let $u_i^1, \dots, u_i^\alpha$ be the vertices of color $c_i$ ($c_i'$ for $i = 3$) in $M(v_1)$.
	Observe that $M(v_1)$ can be partitioned into $\kappa$ $\ell$-fair sets whenever $\kappa \le \alpha$:
	Let the $j$th set be $\{ u_1^j, u_2^j, u_3^j \}$ for each $j \in [\kappa - 1]$ and let the $\kappa$th set be the set of all other vertices.
	Now we address the remaining case by proving via induction on $\alpha$ that there is a partition of $M(v_1)$ into $\kappa$ $0$-fair sets for any integer $\kappa$ with $\alpha < \kappa \le \lfloor 3 \alpha / 2 \rfloor$.
	We consider four cases depending on the value of $\alpha$.
	\begin{itemize}
		\item
		It clearly holds for $\alpha = 1$, since there is no integer $\kappa$ satisfying $\alpha < \kappa \le \lfloor 3 \alpha / 2 \rfloor$.
		\item
		For $\alpha = 2$, we can infer that $\kappa = 3$.
		Observe that $M(v_1)$ can be partitioned into $\kappa = 3$ $0$-fair sets: $\{ u_1^1, u_2^2 \}, \{ u_2^1, u_3^2 \}, \{ u_3^1, u_1^2 \}$.
		\item
		Suppose that $\alpha \ge 3$ and that $\alpha$ is odd, i.e., $\alpha = 2 \beta + 1$ for some $\beta \in \mathbb{N}$.
		Then, $\kappa \le \lfloor 3 \alpha / 2 \rfloor = 3 \beta + 1$.
		By the induction hypothesis, $M(v_1) \setminus \{ u_1^\alpha, u_2^\alpha, u_3^\alpha \}$ can be partitioned into $\kappa'$ $0$-fair sets for each $\kappa' \le \lfloor 3 (\alpha - 1) / 2 \rfloor = 3 \beta$.
		Since $\{u_1^\alpha, u_2^\alpha, u_3^\alpha\}$ is a $0$-fair set, we obtain a desired partition of $M(v_1)$.
		\item
		Suppose that $\alpha \ge 3$ and that $\alpha$ is even, i.e., $\alpha = 2 \beta$ for $\beta \in \mathbb{N}$.
		Then, $\kappa \le \lfloor 3 \alpha / 2 \rfloor = 3 \beta$.
		By the induction hypothesis, $M(v_1) \setminus \{ u_1^{\alpha - 1}, u_1^\alpha, u_2^{\alpha - 1}, u_2^\alpha, u_3^{\alpha - 1}, u_3^\alpha \}$ can be partitioned into $\kappa'$ $0$-fair sets for each $\kappa' \le \lfloor 3 (\alpha - 2) / 2 \rfloor = 3 \beta - 3$.
		Since  $\{ u_1^{\alpha - 1}, u_1^\alpha, u_2^{\alpha - 1}, u_2^\alpha, u_3^{\alpha - 1}, u_3^\alpha \}$ can be partitioned into three $0$-fair sets as in the case $\alpha = 2$, we obtained a desired partition of $M(v_1)$.
	\end{itemize}
	Thus, we have constructed a desired partition of $M(v_1)$ into $\kappa$ $\ell$-fair sets and can assign to $v_1$ and to each vertex $v\in V$ with $M(v)=\emptyset$ one of these sets, thereby obtaining a left-perfect $0$-fair many-to-one matching $M$.
\end{proof}

We now show an analogous statement for \textsc{Max-Min Fair Matching}. Unsurprisingly, while for MoV the relation between the number of occurrences of the most frequent color and the summed occurrences of other colors is important, here the relation between the number of occurrences of the most and least frequent color are decisive. 
In fact, our constructed solution here is slightly simpler as for MoV as we only need to distribute the vertices of each color as evenly as possible. 

\begin{restatable}{theorem}{cliqueMaxMin}\label{th:cliqueMaxMin}
	A \textsc{Max-Min Fair Matching} instance $\mathcal{I}=(G=(U \cupdot V,E),C,\col,\ell)$ with $G$ being a complete bipartite graph is a yes-instance if and only if $|U_{c_1}|\leq \ell k + |U_{c_{|C|}}|$.
	With the non-emptiness constraint, $\mathcal{I}$ is a yes-instance if and only if it additionally satisfies: $\ell > 0$ and $n\ge k$, or $\ell=0$ and $|U_{c_1}| \ge k$. 
\end{restatable}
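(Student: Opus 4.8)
The plan is to characterize yes-instances of \textsc{Max-Min Fair Matching} on complete bipartite graphs via the condition $|U_{c_1}| \le \ell k + |U_{c_{|C|}}|$, mirroring the structure of the \textsc{MoV} proof (\Cref{th:clique}) but replacing the second-most-frequent color by the least-frequent one. First I would prove the forward direction without size constraints: given an $\ell$-fair left-perfect many-to-one matching $M$, for each $v_i \in V$ let $c_i'$ be the least frequent color in $M(v_i)$. Since $M(v_i)$ is $\ell$-fair, $|M(v_i)_{c_1}| \le |M(v_i)_{c_i'}| + \ell \le |M(v_i)_{c_{|C|}}| + \ell$, where the last step uses that $c_{|C|}$ is globally least frequent so $|M(v_i)_{c_i'}| \le |M(v_i)_{c_{|C|}}|$ might fail pointwise—so I would instead sum directly: $|U_{c_1}| = \sum_i |M(v_i)_{c_1}| \le \sum_i (|M(v_i)_{c_i'}| + \ell) \le \ell k + \sum_i |M(v_i)_{c_{|C|}}| = \ell k + |U_{c_{|C|}}|$, using $|M(v_i)_{c_i'}| \le |M(v_i)_{c_{|C|}}|$ by the choice of $c_i'$ as the \emph{least} frequent color.

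For the reverse direction without size constraints, I would give a distribution algorithm analogous to Algorithm~1: distribute the vertices of each color as evenly as possible across $v_1, \dots, v_k$. Concretely, for each color $c$ assign either $\lfloor |U_c| / k \rfloor$ or $\lceil |U_c| / k \rceil$ vertices to each $v_i$. Then for each $v_i$ the maximum color count is at most $\lceil |U_{c_1}|/k \rceil$ and the minimum is at least $\lfloor |U_{c_{|C|}}|/k \rfloor$; I would verify that under the hypothesis $|U_{c_1}| \le \ell k + |U_{c_{|C|}}|$ this spread is at most $\ell$ at every $v_i$. The even-distribution guarantees $\MaM(M(v_i)) \le \lceil |U_{c_1}|/k \rceil - \lfloor |U_{c_{|C|}}|/k \rfloor$, and a short arithmetic argument reduces this to the assumed inequality.

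The main work lies in the non-emptiness constraint. The forward direction is immediate: $M(v) \ne \emptyset$ for all $v$ forces $n \ge k$, and for $\ell = 0$ every nonempty $M(v)$ must contain one vertex of each of the $|C| \ge 2$ colors present, so $|M(v)| \ge 2$ and hence $n \ge 2k$; the inequality $|U_{c_1}| \le \ell k + |U_{c_{|C|}}|$ follows as above. For the reverse direction, I anticipate the $\ell = 0$ case to be the main obstacle. For $\ell > 0$ with $n \ge k$, I would run the even-distribution and, if some $v_j$ is empty, move a most-frequent-color vertex from a $v_i$ with $|M(v_i)| \ge 2$ to $v_j$, checking that this keeps $M(v_i)$ $\ell$-fair (removing a top vertex decreases the max or leaves \textsc{Max-Min} unchanged up to one). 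For $\ell = 0$, fairness demands every nonempty part contain exactly one vertex of each present color, so I would exploit the extra hypothesis $|U_{c_1}| \ge k$ to guarantee enough ``material'' to form $k$ full rainbow sets; the delicate point is showing that the vertices can be regrouped into exactly $k$ nonempty $0$-fair sets, which I expect to require a small combinatorial lemma about partitioning a balanced multiset into a prescribed number of balanced blocks, analogous to the inductive partition argument used at the end of \Cref{th:clique}.
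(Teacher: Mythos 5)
Your forward directions (both with and without the non-emptiness constraint, apart from the issue below) and your overall plan mirror the paper, but the core of your reverse direction has a genuine gap: the ``short arithmetic argument'' you invoke does not exist, because the inequality $\lceil |U_{c_1}|/k \rceil - \lfloor |U_{c_{|C|}}|/k \rfloor \le \ell$ does \emph{not} follow from $|U_{c_1}| \le \ell k + |U_{c_{|C|}}|$. Take $k=2$, $\ell=1$, $|U_{c_1}|=3$, $|U_{c_2}|=1$: the hypothesis holds with equality, yet $\lceil 3/2\rceil - \lfloor 1/2\rfloor = 2 > \ell$. Worse, the failure is not only in the bound but in the construction itself: an ``even distribution'' in your sense (each $v_i$ gets $\lfloor |U_c|/k\rfloor$ or $\lceil |U_c|/k\rceil$ vertices of each color $c$) can genuinely produce an unfair matching in this instance, e.g.\ giving $v_1$ two vertices of $c_1$ and zero of $c_2$. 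The paper's Algorithm avoids this by distributing \emph{every} color round-robin in the same fixed order $v_1,\dots,v_k$, so that for each color the per-part counts are non-increasing in $i$ and differ by at most one; fairness is then proved not pointwise via floors and ceilings but by a correlation argument: if two colors $c',c''$ differed by $\ge \ell+1$ at some part $v_{i^\star}$, the monotone structure forces them to differ by $\ge \ell$ at \emph{every} part, so summing over all $k$ parts gives $|U_{c'}|-|U_{c''}| > \ell k$, contradicting the hypothesis. You need both the alignment and this global argument; neither appears in your proposal.

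There is a second, smaller error in the forward direction under the non-emptiness constraint with $\ell=0$: you conclude $n \ge 2k$, which is the \textsc{MoV} condition from \Cref{th:clique}, not the condition $|U_{c_1}| \ge k$ that \Cref{th:cliqueMaxMin} asserts. For \textsc{Max-Min}, $0$-fairness of a nonempty part forces \emph{every} color of $C$ (the minimum in $\MaM$ ranges over all of $C$) to appear in that part, so each color occurs at least $k$ times in total, giving $|U_{c_1}| \ge k$ directly; $n \ge 2k$ is strictly weaker when $|C| \ge 3$ and would not establish the ``only if''. Relatedly, your reverse direction for $\ell=0$ is over-engineered: no rainbow-partition lemma analogous to the inductive argument at the end of \Cref{th:clique} is needed, since under the main condition with $\ell=0$ all color classes have equal size, so $|U_{c_1}| \ge k$ means the aligned round-robin assignment already places at least one vertex of every color in every part.
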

\begin{proof}
    As a reminder, we assume that $V = \{ v_1, \dots, v_k \}$ and that $|U_{c_i}|\geq |U_{c_{i+1}}|$ for each $i\in [1,|C|-1]$.
    We first prove the theorem for $\smin=0$ and later extend the proof to the case $\smin=1$. 
	
	\proofsubparagraph*{\textbf{No size constraints}.}
	We consider both directions separately.
	Assume that there is a left-perfect $\ell$-fair many-to-one matching $M$ in $\mathcal{I}$, then as $M$ is $\ell$-fair, it holds that $|M(v)_{c_1}|\leq |M(v)_{c_{|C|}}|+\ell$ for each $v\in V$ and thus that $|U_{c_1}|=\sum_{v\in V} |M(v)_{c_1}|\leq \ell k + \sum_{v\in V} |M(v)_{c_{|C|}}|=\ell k + |U_{c_{|C|}}|$.
	
	We now prove the reverse direction: 
	If $\mathcal{I}$ fulfills $|U_{c_1}|\leq \ell k + |U_{c_{|C|}}|$, then there is a left-perfect $\ell$-fair many-to-one matching $M$ in $\mathcal{I}$. 
	Note that as $|U_{c_i}|\geq |U_{c_{i+1}}|$ for each $i\in [1,|C|-1]$, $\mathcal{I}$ also fulfills $|U_{c_j}|\leq \ell k + |U_{c_{i}}|$ for all $i, j\in [1,|C|]$.
	We construct $M$ as follows. 
	For each color $c\in C$, we execute the following procedure as long as there are still unassigned vertices in $U_c$.
	Set $i:=1$ and assign a vertex from $U_c$ not assigned yet to $v_i$ and increase $i$ by one if $i<k$ and set $i:=1$ otherwise.
	The constructed matching $M$ is clearly a left-perfect many-to-one matching so it remains to prove that $M(v_i)$ is $\ell$-fair for each $v_i\in V$. 
	For the sake of contradiction assume that there is a $v_{i^\star}\in V$ such that $M(v_{i^\star})$ is not $\ell$-fair, i.e., there are two colors $c',c''\in C$ with $|M(v_{i^\star})_{c'}|\geq|M(v_{i^\star})_{c''}|+\ell+1$. 
	By the construction of $M$ it holds for each color $c\in C$ that $|M(v_j)_c|\geq|M(v_{i^\star})_c|\geq |M(v_j)_c|-1$ for $j<i^\star$  and  $|M(v_{i^\star})_c|-1\leq |M(v_j)_c|\leq|M(v_{i^\star})_c|$ for $j>i^\star$. 
	Thus, from $|M(v_{i^\star})_{c'}|\geq|M(v_{i^\star})_{c''}|+\ell+1$ it follows that $|M(v_i)_{c'}|-|M(v_i)_{c''}|\geq \ell$ for all $i\in [k]$. 
	Consequently, $|U_{c'}|-|U_{c''}|=|M(v_{i^\star})_{c'}|-|M(v_{i^\star})_{c''}|+ \sum_{j\in [k]-i^*}|M(v_j)_{c'}|-|M(v_j)_{c''}|\geq \ell+ 1 + (k-1)\ell >\ell k$, a contradiction to our initial assumption on $\mathcal{I}$. 
	
	\proofsubparagraph*{\textbf{Non-emptiness constraint}.}
	We again consider both directions separately.
	Assume that there is a left-perfect $\ell$-fair many-to-one matching $M$ in $\mathcal{I}$ with $M(v)\neq \emptyset$ for all $v\in V$. Then ,as argued above since $M$ is $\ell$-fair, $|U_{c_1}|\leq \ell k + |U_{c_{|C|}}|$ needs to hold.
	As $M(v) \ne \emptyset$ for every $v \in V$, $n\geq k$ holds. 
	Moreover, if $\ell=0$, then there is at least one vertex of each color in $M(v)$ for every $v\in V$. 
	Thus, $|U_{c_1}|\geq k$ follows. 
    
    For the reverse direction, assume that $\mathcal{I}$ fulfills $|U_{c_1}|\leq \ell k + |U_{c_{|C|}}|$ and that $\ell > 0$ and $n\ge k$ or $\ell=0$ and $|U_{c_1}| \ge k$. 
    We make a case distinction based on $|U_{c_1}|$. 
    If $|U_{c_1}| \ge k$, then the procedure the case without the non-emptiness constraint produces a left-perfect $\ell$-fair many-to-one matching $M$ with $|M(v)|\geq 1$ for all $v\in V$. 
    If $|U_{c_1}| <k$, then it holds that $\ell>1$ and $|U_{c}| <k$ for each $c\in C$. 
    As $n\geq k$, it is clearly possible to construct a left-perfect many-to-one matching $M$ of vertices from $U$ to vertices from $V$ such that $|M(v)|\geq 1$ and $|M(v)_c|\leq 1$ for each $v\in V$ and $c\in C$. As $M$ is clearly $\ell$-fair, the statement follows.	
\end{proof}

\Cref{th:clique,th:cliqueMaxMin} imply that \textsc{Max-Min/MoV Fair Matching} on a complete bipartite graph are solvable in linear~time even with the non-emptiness constraint. 

\section{Conclusion}

In this work, we have investigated the (parameterized) computational complexity of the \textsc{Fair Matching} problem.
Two concrete directions of open questions are:
\begin{itemize}
	\item
	We have provided algorithms that solve \textsc{Fair Matching} even if we require that every vertex in the right side is matched to at least one vertex.
 	Can we extend our algorithms to handle arbitrary size constraints?
	In particular, does \textsc{Fair Matching} remain fixed-parameter tractable with respect to $k$?
	We have shown in \Cref{sec:kc} that \textsc{Fair Matching} is indeed FPT with respect to $k + |C|$ even for arbitrary size constraints.
	However, it does not seem straightforward to incorporate arbitrary size constraints in the ILPs given in \Cref{sec:k}.
	The complexity of \textsc{Fair Matching} on complete bipartite graphs (\Cref{se:cliques}) is also open when arbitrary size constraints are present.
	\item 
	Is \textsc{Fair Matching} solvable in $O^\star(2^k)$ time?
	Note that the ILP presented in \Cref{sec:k:mmm} (which solves \textsc{Max-Min Fair Matching} without the non-emptiness constraint) is an ILP where the constraint matrix involves only zeros and ones when $\ell = 0$.
	Can we exploit such a structure in the constraint matrix to obtain a faster algorithm?
\end{itemize}
For future research, it would also be natural to study other variants of the \textsc{Fair Matching} problem.
For instance, we may relax the left-perfect constraint studied in this work and consider a variant where the objective is to maximize the matching size under a fairness constraint.
One may also look into other fairness notions such as proportionality constraints~\cite{DBLP:journals/ior/NguyenV19}. 

\bibliographystyle{plain}
\bibliography{literature}

\newcommand{\noop}[1]{}
\begin{thebibliography}{10}

\bibitem{DBLP:conf/ijcai/AhmadiADFK20}
Saba Ahmadi, Faez Ahmed, John~P. Dickerson, Mark Fuge, and Samir Khuller.
\newblock An algorithm for multi-attribute diverse matching.
\newblock In {\em Proceedings of the Twenty-Ninth International Joint
  Conference on Artificial Intelligence ({IJCAI} '20)}, pages 3--9. ijcai.org,
  2020.

\bibitem{DBLP:conf/aistats/AhmadianE0M20}
Sara Ahmadian, Alessandro Epasto, Ravi Kumar, and Mohammad Mahdian.
\newblock Fair correlation clustering.
\newblock In {\em Proceedings of the 23rd International Conference on
  Artificial Intelligence and Statistics ({AISTATS} '20)}, pages 4195--4205.
  {PMLR}, 2020.

\bibitem{ahmed2020forming}
Faez Ahmed, John Dickerson, and Mark Fuge.
\newblock Forming diverse teams from sequentially arriving people.
\newblock {\em J. Mech. Design}, 142(11):111401, 2020.

\bibitem{DBLP:conf/ijcai/AhmedDF17}
Faez Ahmed, John~P. Dickerson, and Mark Fuge.
\newblock Diverse weighted bipartite b-matching.
\newblock In {\em Proceedings of the Twenty-Sixth International Joint
  Conference on Artificial Intelligence ({IJCAI} '17)}, pages 35--41.
  ijcai.org, 2017.

\bibitem{DBLP:journals/jacm/AlonYZ95}
Noga Alon, Raphael Yuster, and Uri Zwick.
\newblock Color-coding.
\newblock {\em J. {ACM}}, 42(4):844--856, 1995.

\bibitem{DBLP:journals/corr/abs-2007-08069}
Abolfazl Asudeh, Tanya~Y. Berger{-}Wolf, Bhaskar DasGupta, and Anastasios
  Sidiropoulos.
\newblock Maximizing coverage while ensuring fairness: a tale of conflicting
  objective.
\newblock {\em CoRR}, abs/2007.08069, 2020.

\bibitem{DBLP:conf/atal/0001GSW19}
Haris Aziz, Serge Gaspers, Zhaohong Sun, and Toby Walsh.
\newblock From matching with diversity constraints to matching with regional
  quotas.
\newblock In {\em Proceedings of the 18th International Conference on
  Autonomous Agents and MultiAgent Systems ({AAMAS} '19)}, pages 377--385.
  IFAAMAS, 2019.

\bibitem{banerjee2011poor}
Abhijit~V Banerjee and Esther Duflo.
\newblock {\em Poor economics: A radical rethinking of the way to fight global
  poverty}.
\newblock Public Affairs, 2011.

\bibitem{banerjee2007parochial}
Abhijit~V Banerjee and Rohini Pande.
\newblock Parochial politics: Ethnic preferences and politician corruption.
\newblock 2007.

\bibitem{bentert2022finding}
Matthias Bentert, Leon Kellerhals, and Rolf Niedermeier.
\newblock Finding balance-fair short paths in graphs.
\newblock {\em CoRR}, abs/2203.17132, 2022.

\bibitem{DBLP:journals/eccc/ECCC-TR03-049}
Piotr Berman, Marek Karpinski, and Alex~D. Scott.
\newblock Approximation hardness of short symmetric instances of {MAX-3SAT}.
\newblock {\em {ECCC}}, (49), 2003.

\bibitem{bo2016fair}
Inacio Bo.
\newblock Fair implementation of diversity in school choice.
\newblock {\em Games Econ. Behav.}, 97:54--63, 2016.

\bibitem{DBLP:journals/corr/abs-2102-11864}
Niclas Boehmer, Tomohiro Koana, and Rolf Niedermeier.
\newblock A refined complexity analysis of fair districting over graphs.
\newblock In {\em Proceedings of the 21st International Conference on
  Autonomous Agents and Multiagent Systems ({AAMAS} '22)}, pages 1548--1550.
  International Foundation for Autonomous Agents and Multiagent Systems
  {(IFAAMAS)}, 2022.

\bibitem{DBLP:conf/ijcai/ChenGH20}
Jiehua Chen, Robert Ganian, and Thekla Hamm.
\newblock Stable matchings with diversity constraints: Affirmative action is
  beyond {NP}.
\newblock In {\em Proceedings of the Twenty-Ninth International Joint
  Conference on Artificial Intelligence ({IJCAI} '20)}, pages 146--152.
  ijcai.org, 2020.

\bibitem{DBLP:journals/jct/Cornuejols88}
G{\'{e}}rard Cornu{\'{e}}jols.
\newblock General factors of graphs.
\newblock {\em J. Comb. Theory, Ser. {B}}, 45(2):185--198, 1988.

\bibitem{DBLP:books/sp/CyganFKLMPPS15}
Marek Cygan, Fedor~V. Fomin, Lukasz Kowalik, Daniel Lokshtanov, D{\'{a}}niel
  Marx, Marcin Pilipczuk, Michal Pilipczuk, and Saket Saurabh.
\newblock {\em Parameterized Algorithms}.
\newblock Springer, 2015.

\bibitem{DBLP:conf/aaai/DickersonSSX19}
John~P. Dickerson, Karthik~Abinav Sankararaman, Aravind Srinivasan, and Pan Xu.
\newblock Balancing relevance and diversity in online bipartite matching via
  submodularity.
\newblock In {\em Proceedings of the Thirty-Third {AAAI} Conference on
  Artificial Intelligence ({AAAI} '19)}, pages 1877--1884. {AAAI} Press, 2019.

\bibitem{Fra82}
Andr{\'a}s Frank.
\newblock An algorithm for submodular functions on graphs.
\newblock {\em Discrete Math.}, 16:97--120, 1982.

\bibitem{DBLP:conf/wads/FriggstadM21a}
Zachary Friggstad and Ramin Mousavi.
\newblock Fair correlation clustering with global and local guarantees.
\newblock In {\em Proceedings of the 17th International Symposium on Algorithms
  and Data Structures ({WADS} '21)}, pages 414--427. Springer, 2021.

\bibitem{froese2021modification}
Vincent Froese, Leon Kellerhals, and Rolf Niedermeier.
\newblock Modification-fair cluster editing.
\newblock {\em CoRR}, abs/2112.03183, 2021.
\newblock Accepted for the Proceedings of AAAI 2022.

\bibitem{DBLP:books/fm/GareyJ79}
M.~R. Garey and David~S. Johnson.
\newblock {\em Computers and Intractability: {A} Guide to the Theory of
  NP-Completeness}.
\newblock W. H. Freeman, 1979.

\bibitem{hafalir2013effective}
Isa~E Hafalir, M~Bumin Yenmez, and Muhammed~A Yildirim.
\newblock Effective affirmative action in school choice.
\newblock {\em Theor. Econ.}, 8(2):325--363, 2013.

\bibitem{hall1987representatives}
Philip Hall.
\newblock On representatives of subsets.
\newblock {\em Classic Papers in Combinatorics}, pages 58--62, 1987.

\bibitem{DBLP:conf/soda/Huang10}
Chien{-}Chung Huang.
\newblock Classified stable matching.
\newblock In {\em Proceedings of the Twenty-First Annual {ACM-SIAM} Symposium
  on Discrete Algorithms ({SODA} '10)}, pages 1235--1253. {SIAM}, 2010.

\bibitem{Kan87}
Ravi Kannan.
\newblock Minkowski's convex body theorem and integer programming.
\newblock {\em Math. Oper. Res.}, 12(3):415--440, 1987.

\bibitem{Len83}
Hendrik~W. Lenstra.
\newblock Integer programming with a fixed number of variables.
\newblock {\em Math. Oper. Res.}, 8:538--548, 1983.

\bibitem{surveymodular}
S.~Thomas McCormick.
\newblock Submodular function minimization.
\newblock In R.~Weismantel K.~Aardal, G.L.~Nemhauser, editor, {\em Handbook on
  Discrete Optimization}, chapter~7, page 321–391. Elsevier, 2006.

\bibitem{DBLP:journals/ior/NguyenV19}
Th{\`{a}}nh Nguyen and Rakesh Vohra.
\newblock Stable matching with proportionality constraints.
\newblock {\em Oper. Res.}, 67(6):1503--1519, 2019.

\bibitem{DBLP:conf/atal/Patel0L21}
Deval Patel, Arindam Khan, and Anand Louis.
\newblock Group fairness for knapsack problems.
\newblock In {\em Proceedings of the 20th International Conference on
  Autonomous Agents and Multiagent Systems ({AAMAS} '21)}, pages 1001--1009.
  {ACM}, 2021.

\bibitem{DBLP:journals/corr/abs-2001-09784}
Dana Pessach and Erez Shmueli.
\newblock Algorithmic fairness.
\newblock {\em CoRR}, abs/2001.09784, 2020.

\bibitem{DBLP:conf/atal/StoicaCDG20}
Ana{-}Andreea Stoica, Abhijnan Chakraborty, Palash Dey, and Krishna~P. Gummadi.
\newblock Minimizing margin of victory for fair political and educational
  districting.
\newblock In {\em Proceedings of the 19th International Conference on
  Autonomous Agents and Multiagent Systems ({AAMAS} '20)}, pages 1305--1313.
  IFAAMAS, 2020.

\end{thebibliography}

\appendix
\appendixText
\end{document}